\theoremstyle{plain}
\newtheorem{theorem}{Theorem}[section]
\newtheorem{application}[theorem]{Application}
\newtheorem{lemma}[theorem]{Lemma}
\newtheorem{claim}[theorem]{Claim}
\newtheorem{corollary}[theorem]{Corollary}
\theoremstyle{definition}
\newtheorem{example}{Example}
\newtheorem{definition}[theorem]{Definition}
\newtheorem{remark}[theorem]{Remark}
\DeclareMathOperator*{\argmax}{arg\,max}
\newcommand{\mysetminusD}{\hbox{\tikz{\draw[line width=0.6pt,line cap=round] (3pt,0) -- (0,6pt);}}}
\newcommand{\mysetminusT}{\mysetminusD}
\newcommand{\mysetminusS}{\hbox{\tikz{\draw[line width=0.45pt,line cap=round] (2pt,0) -- (0,4pt);}}}
\newcommand{\mysetminusSS}{\hbox{\tikz{\draw[line width=0.4pt,line cap=round] (1.5pt,0) -- (0,3pt);}}}
\newcommand{\mysetminus}{\mathbin{\mathchoice{\mysetminusD}{\mysetminusT}{\mysetminusS}{\mysetminusSS}}}
\newcommand{\mms}{\boldsymbol{\mu}}
\title{Truthful Allocation Mechanisms Without Payments: Characterization and Implications on Fairness\thanks{A conference version to appear in the 18th ACM conference on Economics and Computation (ACM EC '17).}}  
\author{Georgios Amanatidis\thanks{Athens University of Economics and Business. Emails: \texttt{\{gamana, gebirbas, markakis\}@aueb.gr}} 
\and
Georgios Birmpas\footnotemark[2] 
\and
George Christodoulou\thanks{University of Liverpool. Email: \texttt{G.Christodoulou@liverpool.ac.uk}} 
\and
Evangelos Markakis\footnotemark[2]
}
\begin{document}
\maketitle
\begin{abstract}
We study the mechanism design problem of allocating a set of indivisible items without monetary transfers. Despite the vast literature on this very standard model, it still remains unclear how do truthful mechanisms look like. We focus on the case of two players with additive valuation functions and our purpose is twofold. 
First, our main result provides a complete characterization of truthful mechanisms that allocate all the items to the players. Our characterization reveals an interesting structure underlying all truthful mechanisms, showing that they can be decomposed into two components: a \textit{selection part} where players pick their best subset among prespecified choices determined by the mechanism, and an \textit{exchange part} where players are offered the chance to exchange certain subsets if it is favorable to do so. 
In the remaining paper, we apply our main result and derive several consequences on the design of mechanisms with fairness guarantees. We consider various notions of fairness, (indicatively, maximin share guarantees and envy-freeness up to one item) and provide tight bounds for their approximability. Our work settles some of the open problems in this agenda, and we conclude by discussing possible extensions to more players. 
\end{abstract}

%
%

%




%
\section{Introduction} \label{sec:intro}
We study a very elementary and fundamental model for allocating
indivisible goods from a mechanism design viewpoint.  Namely, we
consider a set of indivisible items that need to be
allocated to a set of players. An outcome of the problem is an
allocation of all the items to the players, i.e., a partition into
bundles, and each player evaluates an allocation by his own additive valuation
function.  Our primary motivation originates from the fair division
literature, where such models have been considered
extensively. However, the same setting also appears in several domains,
including job scheduling, load balancing and many other resource
allocation problems.

The focus of our work is on understanding the interplay between
truthfulness and fairness in this setting. Hence, we want to identify
the effects on fairness guarantees, imposed by eliminating any
incentives for the players to misreport their valuation functions.
This type of questions has been posed already in previous works and
for various notions of fairness, such as envy-freeness, or for the
concept of maximin shares \protect\citep[see, among others,][]{LMMS04,CKKK09,ABM16}. 
However, the results so far have been
rather scarce in the sense that a) in most cases, they concern
impossibility results which are far from being tight
and b) the proof
techniques are based on constructing specific families of instances
that do not enhance our understanding on the structure of truthful
mechanisms, with the exception of \protect\citet{CKKK09} which, however, is only
for two players and two items.
   
In order to comprehend the trade-offs that are inherent between
incentives and fairness, we first take a step back and focus solely on
truthfulness itself. 
As is quite common in fair division models, we will not allow any
monetary transfers, so that a mechanism simply outputs an allocation
of the items. Hence, the question we want to begin with is: {\em what
  is the structure of truthful allocation mechanisms?}

There has been already a significant volume of works on characterizing
truthful allocation mechanisms for indivisible items,
yet there are
some important differences
from our approach.  First, a typical line of work studies this
question under the additional assumption of Pareto efficiency or
related notions \protect\citep{Papai00,KM02,EK03}. The characterization results
that have been obtained show that the combination of truthfulness
together with Pareto efficiency tends to make the class of available
deterministic mechanisms very poor; only some types of dictatorship survive when
imposing both criteria.  Second, in some cases the analysis is carried
out without any restrictions on the class of valuation functions,
i.e., not even monotonicity, 
which again often results in a very
limited class of mechanisms \protect\citep[see, e.g.,][]{Papai01}. When moving to
a specific class, such as the class of additive functions which is
usualy assumed in fair division, it is conceivable that we can have a
much richer class of truthful mechanisms.   
The results above indicate that the known characterizations of
truthful mechanisms are also dependent on further assumptions, which
may be well justified in various scenarios, but they  are not 
aligned with the goal of fair division.

\subsection{Contribution}
Our main result is a characterization of deterministic truthful
mechanisms that allocate all the items to two players with additive
valuations. In doing so, we identify some important allocation
properties that every truthful mechanism should satisfy.
One such crucial property is the notion of \emph{controlling} items
(Definition \ref{def:control}); we say that a
player controls an item, whenever it is possible to report values that will guarantee him this item, regardless of the other player's valuation function.  We
show that truthfulness implies that every item is controlled by some
player.
Exploiting this property
further, greatly helps us in understanding how a mechanism operates.
Consequently, our analysis and the characterization we eventually
obtain reveals an interesting structure underlying all truthful
mechanisms; they can all be essentially decomposed into two
components: (i) a {\em selection part} where players pick their best
subset among prespecified choices determined by the mechanism, and
(ii) an {\em  exchange part} where players are offered the chance to
exchange certain subsets if it is favorable to do so. Hence, we call
them {\em picking-exchange mechanisms}.

Next, we apply our main result and derive several consequences on the
design of mechanisms with (approximate) fairness guarantees. 
We consider various notions of fairness in Section \ref{SEC:FAIRNESS},
starting our discussion with the more standard ones such as
proportionality and envy-freeness, and explaining why such concepts
cannot be attained---even approximately---by truthful mechanisms. We then
focus on more recently studied relaxations of either envy-freeness or
proportionality where positive algorithmic results have been obtained
(e.g., finding allocations that are envy-free up to one item, or
achieve approximate maximin share guarantees). For these notions, we
provide tight bounds on the approximation guarantees of truthful
mechanisms, settling some of the open problems in this
area \protect\citep{CKKK09,ABM16}. Interestingly, our results also reveal that the best truthful
approximation algorithms for fair division are achieved by {\it
  ordinal} mechanisms, i.e., mechanisms that exploit only the relative
ranking of the items and not the cardinal information of the valuation
functions.

The heart of our approach for obtaining lower bounds on the approximability of fairness criteria, is a necessary condition for fairness 
in view of our notion of control, which we call {\em no control of pairs}. It
 states that no player should control more than one
item. We show how this condition summarizes minimum 
requirements for various fairness concepts  previously studied
in the literature. Although this condition does not offer an alternative fairness criterion, it is a useful tool for
showing lower bounds.
 
Finally, in Section~\ref{sec:further_dir} we provide a general class
of truthful mechanisms for the case of multiple players. This class
generalizes picking-exchange mechanisms in a non-trivial way.
As indicated by our mechanisms, there is a much richer structure in the case of multiple players. In particular, the notion of control does not convey enough information anymore. Instead, there seem to exist several different levels of control.


\subsection{Related Work}\label{subsec:relwork}
The only work we are aware of, in which a full characterization is
given for truthful mechanisms with indivisible items, additive
valuations, and no further assumptions is by \protect\citet{CKKK09}.
However, this is only a characterization  for two players and two
items.  Apart from characterizations, there have been several works
that try to quantify the effects of truthfulness on several
concepts of fairness. For the performance of truthful mechanisms
with respect to envy-freeness, see \protect\citet{CKKK09} and \protect\citet{LMMS04}, whereas for max-min
fairness see \protect\citet{BD05}. Coming to more recent results and along the
same spirit, \protect\citet{ABM16} and \protect\citet{MP11} study the notion of maximin
share allocations, and a related notion of worst-case guarantees
respectively. They obtain separation results, showing that the
approximation factors achievable by truthful mechanisms are strictly
worse than the known algorithmic (nontruthful) results.  Obtaining a
better understanding for the structure of truthful mechanisms and how
they affect fairness has been an open problem underlying all the above
 works. For a better and more complete elaboration on
fairness and the numerous fairness concepts that have been suggested,
we refer the reader to the books \protect\citep{BT96,RW98,M03} and the recent
surveys \protect\cite{BCM16-survey,Procaccia16-survey}.

There has been a long series of works on characterizing mechanisms
with indivisible items beyond the context of fair divison.  Many works
characterize the allocation mechanisms that arise when we combine truthfulness
with Pareto efficiency \protect\citep[see, e.g.,][]{Papai00,KM02,EK03}. Typically,
such mechanisms tend to be dictatorial, and it is also well known that
economic efficiency is mostly incompatible with fairness \protect\citep[see,
e.g.,][]{BCM16-survey}.  Another assumption that has been used is
nonbossiness, which means that one cannot change the outcome without
affecting his own bundle. For instance, \protect\citet{Svensson99}  assumes nonbossiness in a setting where each
player is interested in acquiring only one item. For general valuations, this also leads to
dictatorial algorithms \protect\citep{Papai01}.   
In most of these works ties are
ignored by considering strict preference orders over all subsets of
the items, while in some cases it is
also allowed for the mechanism not to allocate all the items.

There have also been relevant works for the setting of divisible goods \protect\citep[see, among others,][]{CLPP13,ColeGG13}. We note that for additive valuation functions, a mechanism for divisible items can be interpreted as a randomized mechanism for indivisible items. This connection is already discussed and explored in~\protect\citet{GuoC10,AzizFCMM16}. In our work, we do not study randomized mechanisms, however it is an interesting question to have characterization results for such settings as well. Along this direction, see \protect\citet{MennleS14} where a relaxed notion of truthfulness is studied.

Related to our work is also the literature on exchange markets. These
are models where players are equipped with an initial endowment,
e.g., a house or a set of items. 
For the case where players can
have multiple indivisible items as their initial endowment, see
\protect\citet{Papai03,Papai07}. Exchange markets provide an example where the existing characterizations go well beyond dictatorships and are closely related to the exchange component of our mechanisms.

Finally, for settings with payments, the work of \protect\citet{DS08}, and independently of \protect\citet{CKV08}, provided a characterization of truthful mechanisms with two players and additive valuations when all items are allocated. However, their characterization does not
apply to our setting because they make an additional assumption,
namely {\em decisiveness}. It roughly requires that each player should
be able to receive any possible bundle of items, by making an
appropriate bid. Their motivation is the characterization of truthful
mechanisms with bounded makespan (maximum finishing time) for the
scheduling problem, and in their case decisiveness is necessary in
order to achieve bounded guarantees. In our case, our motivation is
fairness, and decisiveness is a very strong assumption which has the
opposite effects of what we need; e.g., assigning the full-bundle to a
player is unacceptable in terms of fairness. Finally,
\protect\citet{ChristodoulouK11} give a global characterization of envy-free and truthful
mechanisms for settings with payments, when there are multiple players
but only two items.

\section{Preliminaries and Notation} \label{sec:notation}

With the exception of Section \ref{sec:further_dir}, we consider a setting with two players and a set of $m$ indivisible items, $M = \{1, \ldots, m\} = [m]$, to be allocated to the players. We assume that each player $i$ has an additive valuation function $v_i$ over the items, so that for every $S\subseteq M$, $v_i(S) = \sum_{j\in S} v_i(\{j\})$. For $j\in M$, we write $v_{ij}$ instead of $v_i(\{j\})$.

We say that $(S_1, S_2, \ldots, S_k)$ is a \emph{partition} of a set $S$, if $\bigcup_{i\in [k]} S_i = S$, and $S_i \cap S_j = \emptyset$ for any $i, j\in [k]$ with $i\neq j$. Note that
we do not require that $S_i\neq \emptyset$ for all $i\in [n]$. 
An allocation of $M$ to the players is a partition in the form $S=(S_1, S_2)$. By $\mathcal{M}$ we denote the set of all allocations of $M$.

The set $\mathcal{V}_{\! m}$ of all possible profiles is $\mathbb{R}^m_+ \times \mathbb{R}^m_+$, i.e., we assume that $v_{ij}>0$ for every $i\in\{1, 2\}$ and $j\in M$. 
For some statements we  need the assumption that the players' valuation functions are such that no two sets have the same value. So, let $\mathcal{V}^{\neq}_{\! m}$ denote the set of such profiles, i.e.,
\[ \mathcal{V}^{\neq}_{\! m} = \bigg\lbrace (v_1, v_2)\in \mathcal{V}_{\! m} \ \Big| \ \forall S, T\subseteq [m] \text{ with } S\neq T, \text{ and } \forall i\in\{1, 2\},\,\sum_{j\in S}v_{ij} \neq \sum_{j\in T}v_{ij} \bigg\rbrace  \,. \]

\begin{definition}
A \emph{deterministic allocation mechanism with no monetary transfers}, or simply a \emph{mechanism}, for allocating all the items in $M = [m]$,  is a mapping $\mathcal{X}$ from $\mathcal{V}_{\! m}$ to $\mathcal{M}$. That is, for any profile $\mathbf{v}$, the outcome of the mechanism is  $\mathcal{X}(\mathbf{v})=(X_1(\mathbf{v}), X_2(\mathbf{v}))\in\mathcal{M}$, 
and $X_i(\mathbf{v})$ denotes the set of items player $i$ receives.  

A mechanism $\mathcal{X}$ is {\it truthful} if for any instance $\mathbf{v} = (v_1, v_2)$, any player $i\in\{1, 2\}$, and any $v_i'$:
\[ v_i(X_i(\mathbf{v})) \geq v_i(X_i(v_{i}', v_{-i})).\]
\end{definition}

Since we will repeatedly argue about intersections of $X_i(\mathbf{v})$ with various subsets of $M$, we use  $X_i^S(\mathbf{v})$ as a shortcut for $X_i(\mathbf{v}) \cap S$, where $S\subseteq M$.

\subsection{Fairness concepts\protect\footnote{The material of this subsection is needed in the sequel only within Section \ref{SEC:FAIRNESS}.}}
Several notions have emerged throughout the years as to what can be considered a fair allocation. We define below the concepts that we will examine in Section \ref{SEC:FAIRNESS}. Although all concepts can be clearly defined for any number of players, we provide the definitions for two players, since this is the focus of the paper. 

We start with two of the most dominant solution concepts in fair division, namely proportionality and envy-freeness. 
\begin{definition}
	An allocation $S = (S_1,S_2)$ is
		\item[{\qquad \it 1.}]  \emph{proportional}, if $v_i(S_i) \geq \frac{1}{2} v_i(M)$, for $i\in \{1, 2\}$.
		\item[{\qquad \it 2.}]  \emph{envy-free}, if $v_1(S_1) \geq v_1(S_2)$, and $v_2(S_2) \geq v_2(S_1)$ .
\end{definition}


Proportionality was considered in the very first work on fair division by \citet{Steinhaus48}. Envy-freeness was suggested later by \citet{GS58}, and with a more formal argumentation by \citet{Foley67} and \citet{Varian74}.

Envy-freeness is a stricter notion than proportionality, 
but even for the latter existence cannot be guaranteed under indivisible goods.
One can also consider approximation versions of these problems as follows: Given an instance $I$, let $E(I)$ be the minimum possible envy that can be achieved at $I$, among all possible allocations. We say that a mechanism achieves a $\rho$-approximation, if for every instance $I$, it produces an allocation where the envy between any pair of players is at most $\rho E(I)$. Similarly for proportionality, suppose that an instance $I$ admits an allocation where every player receives a value of at least $\frac{c(I)}{2} v_i(M)$ for some $c(I)\leq 1$. Then a $\rho$-approximation would mean that each player is guaranteed a bundle with value at least $\frac{\rho c(I)}{2}v_i(M)$. 

Apart from the approximation versions, the fact that we cannot always have proportional or envy-free allocations gives rise to relaxations of these definitions, with the hope of obtaining more positive results. We describe below three such relaxations, all of which admit either exact or constant-factor approximation algorithms (not necessarily truthful) in polynomial time.  

The first such relaxation is the concept of envy-freeness up to one item, where each person may envy another player by an amount which does not exceed the value of a single item in the other player's bundle. Formally:  
\begin{definition}
	\label{def:uptoone}
	An allocation $S = (S_1,S_2)$ is \emph{envy-free up to one item}, if there exists an item $a_1\in S_1$, and an item $a_2\in S_2$, such that  $v_i(S_i) \geq v_i(S_j\setminus \{a_j\})$, for  $i, j\in\{1, 2\}$. 
\end{definition}

It is quite easy to achieve envy-freeness up to one item, e.g., a round-robin algorithm that alternates between the players and gives them in each step their best remaining item suffices. Other algorithms are also known to satisfy this criterion \citep[see][]{LMMS04}.

A more interesting relaxation from an algorithmic point of view, comes from the notion of maximin share guarantees, recently proposed by~\citet{Budish11}. For two players, the maximin share of a player $i$ is the value that he could achieve by being the cutter in a discretized form of the cut and choose protocol. This is a guarantee for player $i$, if he would partition the items into two bundles so as to maximize the value of the least valued bundle. We define below the approximate version of this notion.  
Recall that $\mathcal{M}$ is the set of all allocations of $M$.
\begin{definition}\label{def:mms}
	Given a set of items $[m]$, the \emph{maximin share} of a player $i\in\{1, 2\}$, is
\[ \mms_i = \displaystyle\max_{S\in\mathcal{M}} \min \{v_i(S_1), v_i(S_2)\} \, .\]
For $\rho\leq 1$, an allocation $S = (S_1,S_2)$ is called a $\rho$-approximate maximin share allocation if $v_i(S_i)\geq \rho\cdot\mms_i\,$, for $i\in \{1, 2\}$.	
\end{definition}

For two players maximin share allocations always exist and even though they are NP-hard to compute, we  have a PTAS by reducing this to standard job scheduling problems. Hence each player can receive a value of at least $(1-\epsilon)\mms_i$. For a higher number of players, constant factor approximation algorithms also exist \citep[see][]{PW14,AMNS15}.

Finally, a related approach was undertaken by \citet{hill87}. This work examined what is the worst case guarantee that a player can have as a function of the total number of players and the maximum value of an item across all players. 
For two players, the following function was identified precisely as the guarantee that can be given to each player. Note that the total value of the items is normalized to 1 in this case.

\begin{definition}\label{def:Vn}
	Let $V_2 : \left[ 0,1 \right] \to \left[ 0 , 1/2\right]$
	be the unique nonincreasing function satisfying $V_2(\alpha) = 1/2$ for $\alpha=0$, whereas for $\alpha >0$: 
	\[ 
	V_2(\alpha) = 
	\left\{
	\begin{array}{ll}
	1 - k\alpha  & \mbox{if } \alpha \in I(2,k) \\
	1 - \frac{(k+1)}{2(k+1)-1} & \mbox{if } \alpha \in NI(2,k)
	\end{array}
	\right.
	\]
	where for any integer $k\geq1$, $I(2,k) = \left[ \frac{k+1}{k(2(k+1)-1)} , \frac{1}{2k-1} \right]$ and $NI(2,k) = \left(  \frac{1}{2(k+1)-1} ,\frac{k+1}{k(2(k+1)-1)} \right)$.
\end{definition}


\citet{MP11} proved that for two players, there always exists an allocation such that each player $i$ receives at least $V_2(\alpha_i)$, where $\alpha _i = \max_{j\in [m]}v_{ij}$. The approximation version of this notion would be to construct allocations where each player receives a value of at least $\rho V_2(\alpha_i)$. 
Recently, a stricter variant of this guarantee has been provided by \citet{GMT15} (also see Remark \ref{rem:Wn}).

\newcommand*\circled[1]{\tikz[baseline=(char.base)]{
            \node[shape=circle,draw,inner sep=2pt] (char) {#1};}}

\section{Characterization of Truthful Mechanisms} \label{SEC:CHARACTERIZATION}

We present our main characterization result in this section. We start
in subsection \ref{subsec:p/e_mech_defn} with the main definitions and
illustrating examples, and then we state our result in subsection
\ref{subsec:characterization} along with a road map of the proof. To avoid repetition, when referring to a truthful mechanism $\mathcal{X}$, we mean a truthful mechanism  for allocating all the items in $M$ to two players  with additive valuation functions.

\subsection{A Non-Dictatorial Class of Mechanisms} \label{subsec:p/e_mech_defn}

The main result of this section is that every truthful mechanism  is a
picking-exchange mechanism (Theorem
\ref{thm:truthful-->p/e_mech}). Before we make a precise statement, we
formally define the types of mechanisms involved and provide illustrating examples. 

\paragraph{Picking Mechanisms.}
We start with a family of mechanisms where players make a selection out of choices that the mechanism offers to them.
Given a subset $S$ of items, we define a \emph{set of offers}
$\mathcal{O}$ on $S$, as a nonempty collection of proper subsets of $S$ that exactly covers
$S$ (i.e., $\bigcup_{T\in \mathcal{O}}T = S$), and in which there is no
common element that appears in all subsets 
(i.e., $\bigcap_{T\in \mathcal{O}}T = \emptyset$).

\begin{definition}
  A mechanism $\mathcal{X}$ is a \emph{picking
    mechanism}\footnote{Picking mechanisms are a generalization of
    \emph{truthful} picking sequences for two players 
    \citep[see][]{BoLa14}.} if there exists a partition $(N_1, N_2)$ of $M$,
  and sets of offers $\mathcal{O}_1$ and $\mathcal{O}_2$ on $N_1$ and
  $N_2$ respectively, such that for every profile
  $\mathbf{v}$, $$X_i(\mathbf{v})\cap N_i\in \argmax_{S \in
    \mathcal{O}_i}
  v_i(S).$$ 
\end{definition}

Technical nuances aside, such a mechanism can be implemented by first
letting player 1 choose his best offer from $\mathcal{O}_1$ and giving
what remains from $N_1$ to player 2. Then it lets player 2 choose his
best offer from $\mathcal{O}_2$ and gives what remains from $N_2$ to
player 1. The following example illustrates a picking mechanism.

\begin{example}\label{ex:ex-pick}
  Consider the following mechanism $\mathcal{X}$ on a set $M= \{1,
  \ldots 6\}$, which first partitions $M$ into $N_1=\{1,2,3,4\},
  N_2=\{5,6\}$ and then constructs the offer sets $
  \mathcal{O}_1=\{\{1,2\},\{2,3\},\{4\}\},
  \mathcal{O}_2=\{\{5\},\{6\}\}$. On input $\mathbf{v}$, $\mathcal{X}$
  first gives to player 1 his best set---with respect  to $v_1$---among $\{1,
  2\}$, $\{2, 3\}$ and $\{4\}$, and then gives what remains from $N_1$
  to player 2. Next, $\mathcal{X}$ gives to player 2 his best
  set---according to $v_2$---among $\{5\}$ and $\{6\}$, and then gives
  what remains from $N_2$ to player 1. $\mathcal{X}$ resolves ties
  lexicographically, e.g., in case of a tie, $\{1, 2\}$ is preferred
  to $\{4\}$.

  It is not hard to see that $\mathcal{X}$ is truthful. 
  For the following input $v$, the circles denote the allocation.
\[v=\left(
\begin{array}{c c c c c c}
	 3 & \circled{5} & \circled{5} & 10 & 4 & \circled{2} \\   
	 \circled{2} & 3 & 6 & \circled{1} & \circled{5} & 3 \\   
\end{array}\right).
\]
\end{example}

\paragraph{Exchange Mechanisms.} We now move to a quite different class of mechanisms. 
Let $X,Y$ be two disjoint subsets of
$M$. We call the ordered pair $(X,Y)$ an exchange deal. Moreover, we
say that an exchange deal $(X, Y)$ is \emph{favorable with respect to
  $\mathbf{v}$} if $v_1(Y) > v_1(X)$ and $v_2(Y) < v_2(X)$, while it
is \emph{unfavorable with respect to $\mathbf{v}$} if $v_1(Y) < v_1(X)$ 
or $v_2(Y) > v_2(X)$. Let $S$ and $T$ be two disjoint subsets
of items and let $S_1, S_2, \ldots, S_k$ and $T_1, \ldots, T_k$ be two
collections of nonempty and pairwise disjoint subsets of $S$ and $T$
respectively. We say then that the set of exchange deals 
$D=\{(S_1, T_1), (S_2, T_2), \ldots,\allowbreak (S_k, T_k)\}$ on 
$(S, T)$ is \emph{valid}.

\begin{definition}
A mechanism $\mathcal{X}$ is an \emph{exchange
mechanism}\footnote{If we think about $E_1, E_2$ as fixed a
    priori, then exchange mechanisms are a generalization of fixed
    deal exchange rules in general exchange markets for two players \citep[see][]{Papai07}.} if there exists a partition $(E_1, E_2)$ of
  $M$, and a valid set of exchange deals $D=\{(S_1, T_1), \ldots, (S_k, T_k)\}$ 
on $(E_1, E_2)$, such that for every profile $\mathbf{v}$,
  there exists a set of indices $I=I(\mathbf{v})\subseteq [k]$ for which 
\[X_1(\mathbf{v})= \Bigg( E_1\mysetminus \bigcup_{i\in I}
    S_i\Bigg)\cup \bigcup_{i\in I} T_i\,,\quad X_2(\mathbf{v}) =
  M\setminus X_1 \,.\] 
Moreover, $I$ contains the indices of every
  favorable exchange deal with respect to $\mathbf{v}$, but no indices
  of unfavorable exchange deals.
\end{definition}

On a high level, an exchange mechanism initially partitions the items
into endowments for the players, and then examines a list of possible
exchange deals. Every exchange that improves both players is
performed, while every exchange that reduces the value of even one
player is avoided. The mechanism may also perform other exchanges
where one player is indifferent and the other player can be either indifferent or
improved. Whether such exchange deals are materialized or not is up to
the tie-breaking rule employed by the mechanism. The following example
illustrates an exchange mechanism.

\begin{example}\label{ex:ex-exchange} Let $M=
  \{1,\ldots 5\}$, and consider the following mechanism $\mathcal{Y}$,
  with $E_1=\{1,2,3\}$, $E_2=\{4,5\}$, and
a valid set of exchange deals $D=\{(\{2,3\},\{4\})\}$
on $(E_1, E_2)$:
    One can think of such a mechanism as if $\mathcal{Y}$ initially
  reserves the set $E_1$ for player 1 and the set $E_2$
  for player 2. Then it examines whether exchanging $\{2, 3\}$ with
  $\{4\}$ strictly improves both players, and performs the exchange
  only if the answer is yes. Mechanism $\mathcal{Y}$ is an example of an \emph{exchange mechanism}
  with only one possible \emph{exchange deal}. Again, one can see that
  no player has an incentive to lie.

  For the following input $v$, the circles denote the allocation produced.
\[v=\left(
\begin{array}{c c c c c}
	 \circled{6} & 2 & 3 & \circled{7} & 1 \\   
	 1 & \circled{6} & \circled{1} & 4 & \circled{7}    
\end{array}\right).
\]
\end{example}

\paragraph{Picking-Exchange Mechanisms}

Finally, we define the class of picking-exchange mechanisms which is a
generalization of both picking and exchange mechanisms.

\begin{definition}\label{def:p/e_mech}
  A mechanism $\mathcal{X}$ is a \emph{picking-exchange mechanism} if
  there exists a partition $(N_1, \allowbreak N_2, E_1, E_2)$ of $M$, sets of
  offers $\mathcal{O}_1$ and $\mathcal{O}_2$ on $N_1$ and $N_2$
  respectively, and a valid set of exchange deals $D=\{(S_1, T_1),
  \ldots, (S_k, T_k)\}$ on $(E_1, E_2)$, such that for every profile
  $\mathbf{v}$, 
  \[X_i(\mathbf{v})\cap{N_i}\in \argmax_{S \in
    \mathcal{O}_i} v_i(S) \text{\quad and \quad} X_1(\mathbf{v})\cap(E_1\cup E_2)=
     \Bigg( E_1\mysetminus \bigcup_{i\in I} S_i\Bigg)\cup \bigcup_{i\in I} T_i \,,\]
  where $I= I(\mathbf{v})\subseteq [k]$ contains the
  indices of all favorable exchange deals, but no indices of
  unfavorable exchange deals.
\end{definition}

It is helpful to think that a picking-exchange mechanism runs
independently a picking mechanism on $N_1\cup N_2$ and an exchange
mechanism on $E_1\cup E_2$, like in Example
\ref{ex:p/e_mech}. Although this is true under the assumption that the
players' valuation functions are such that no two sets have the same
value, it is not true for general additive valuations. The reason is
that the tie-breaking for choosing the offers from $\mathcal{O}_1$ and
$\mathcal{O}_2$ may not be independent from the decision of whether to
perform each exchange that is neither favorable nor
unfavorable. 

The following example illustrates a picking exchange mechanism.

\begin{example}\label{ex:p/e_mech}
  Let $M=\{1,\ldots, 11\}$, and consider the mechanism $\mathcal{Z}$ that
  partitions $M$ into $N_1=\{1, 2, \allowbreak 3, 4\}$, $N_2=\{5,6\}$, $E_1=\{7,8,9\}$
  and $E_2=\{10,11\}$, and is the combination of $\mathcal{X}$ and
  $\mathcal{Y}$ from the previous two examples: On input $\mathbf{v}$, $\mathcal{Z}$ runs
  $\mathcal{X}$ on $N_1\cup N_2$ and $\mathcal{Y}$ on $E_1 \cup
  E_2$ (where $\mathcal{Y}$ should of course be adjusted to run on $\{7,  \ldots, 11\}$ instead of $\{1, \ldots, 5\}$). It outputs the union of the outputs of $\mathcal{X}$ and
  $\mathcal{Y}$.

For the following input $v$, the circles denote the final allocation.
\[v=\left(
\begin{array}{c c c c c c c c c c c}
	 3 & \circled{5} & \circled{5} & 10 & 4 & \circled{2} & \circled{6} & 2 & 3 & \circled{7} & 1 \\   
	 \circled{2} & 3 & 6 & \circled{1} & \circled{5} & 3 & 	 1 & \circled{6} & \circled{1} & 4 & \circled{7}     \\   
\end{array}\right).
\]
\end{example}

\subsection{Truthfulness and Picking-Exchange Mechanisms}\label{subsec:characterization}

Essentially, we show that a mechanism is truthful if and only if it is a picking-exchange mechanism. 
We begin with the easier part of our characterization, namely that under the assumption that each valuation function induces a strict preference relation over all possible subsets, every picking-exchange mechanism is truthful. Recall 
that the set of such profiles is denoted by $\mathcal{V}^{\neq}_{\! m}$. 

\begin{theorem}\label{thm:p/e_mech-->truthful}
	When restricted to $\mathcal{V}^{\neq}_{\! m}$, every picking-exchange mechanism $\mathcal{X}$ for allocating $m$ items is truthful.
\end{theorem}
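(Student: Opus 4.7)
The plan is to fix player 1 (the argument for player 2 is symmetric), a reference profile $\mathbf{v}=(v_1,v_2)\in\mathcal{V}^{\neq}_{\!m}$, and an arbitrary misreport $v_1'$, and to show that $v_1(X_1(v_1,v_2))\ge v_1(X_1(v_1',v_2))$. The key structural observation is that in a picking-exchange mechanism the items split into the selection block $N_1\cup N_2$ and the exchange block $E_1\cup E_2$, and, by the definition, the outcome on the two blocks is computed completely independently. Since $v_1$ is additive, I can analyze the two contributions to $v_1(X_1)$ separately. A first housekeeping remark: on $\mathcal{V}^{\neq}_{\!m}$ all sums over distinct subsets are distinct, so (a) every $\argmax_{S\in\mathcal{O}_i}v_i(S)$ is a singleton, and (b) for every deal $(S_j,T_j)$ both $v_1(S_j)\neq v_1(T_j)$ and $v_2(S_j)\neq v_2(T_j)$, so each deal is either strictly favorable or strictly unfavorable with respect to $\mathbf{v}$. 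Thus no tie-breaking rule ever intervenes, and the outputs of $\mathcal{X}$ are uniquely determined by the reports.

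For the selection block, player 1's share of $N_1$ is by definition the (unique) maximizer in $\mathcal{O}_1$ under the \emph{reported} valuation, while his share of $N_2$ equals $N_2\setminus Y$, where $Y$ is the unique maximizer in $\mathcal{O}_2$ under $v_2$. The $N_2$-part does not depend on player 1's report, so it contributes the same value under $v_1$ and under $v_1'$. The $N_1$-part under the truthful report is the unique $v_1$-maximizer of $\mathcal{O}_1$, and so its $v_1$-value weakly dominates the $v_1$-value of whatever set of $\mathcal{O}_1$ the report $v_1'$ selects. Hence truthtelling is at least as good on $N_1\cup N_2$.

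For the exchange block, I use that the $S_j$'s are pairwise disjoint inside $E_1$ and the $T_j$'s pairwise disjoint inside $E_2$ together with additivity to rewrite
\[
v_1\bigl(X_1(\mathbf{v})\cap(E_1\cup E_2)\bigr)\;=\;v_1(E_1)+\sum_{j\in I}\bigl(v_1(T_j)-v_1(S_j)\bigr),
\]
so it suffices to argue that truthfulness maximizes the sum on the right. I will split on each index $j$ separately. If $v_2(T_j)>v_2(S_j)$, then the condition $v_2(T_j)<v_2(S_j)$ in the definition of favorable fails regardless of player 1's report, so the deal is unfavorable for every $v_1'$ and never lies in $I$; player 1 has no leverage. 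If $v_2(T_j)<v_2(S_j)$, then with respect to $(v_1',v_2)$ the deal is favorable iff $v_1'(T_j)>v_1'(S_j)$ and unfavorable otherwise, so player 1 fully controls whether $j\in I$. The truthful report therefore includes $j$ in $I$ precisely when $v_1(T_j)>v_1(S_j)$, which is exactly when the term $v_1(T_j)-v_1(S_j)$ is positive. Thus truthful reporting solves the maximization index by index.

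Combining the two bounds gives $v_1(X_1(v_1,v_2))\ge v_1(X_1(v_1',v_2))$, which is the desired incentive inequality. I do not anticipate any real obstacle; the only point requiring care is the exchange-part case analysis, and the most important thing to emphasize is where the domain restriction $\mathcal{V}^{\neq}_{\!m}$ is actually used, namely to rule out ambiguous exchange deals and tied offers, so that the behavior of $\mathcal{X}$ on each report is uniquely pinned down and the index-by-index argument above is rigorous.
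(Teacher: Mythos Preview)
Your proposal is correct and follows essentially the same approach as the paper: decompose into the selection block $N_1\cup N_2$ and the exchange block $E_1\cup E_2$, observe that on $\mathcal{V}^{\neq}_{\!m}$ the two components behave independently and all ties vanish, and then argue block by block. Your treatment of the exchange part---writing the value as $v_1(E_1)+\sum_{j\in I}(v_1(T_j)-v_1(S_j))$ and doing the favorability case split on $v_2(T_j)\lessgtr v_2(S_j)$ index by index---is actually a bit more explicit than the paper's verbal argument in its case (c), but the underlying reasoning is identical.
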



\begin{remark}\label{rem:tie-breaking_1}
	For simplicity, Theorem \ref{thm:p/e_mech-->truthful} is stated for a subclass of additive valuation functions. However, it  holds for general additive valuations as long as the mechanism uses a sensible tie-breaking rule (e.g., label-based or welfare-based, see Remark \ref{rem:ties} in Appendix \ref{app:characterization}). 
	The proof is very similar.\footnote{Describing all such tie-breaking rules seems to be an interesting, nontrivial question for future work, but not our main focus here. It is not hard to see, though, that there exist tie-breaking rules that make a picking-exchange mechanism nontruthful, e.g.,  break ties on offers of player 1 so that the value that player 2 gets from $N_1$ is minimized.}
\end{remark}

We are now ready to state the main result of this work. 

\begin{theorem}\label{thm:truthful-->p/e_mech}
Every truthful mechanism $\mathcal{X}$  
can be implemented as a picking-exchange mechanism.
\end{theorem}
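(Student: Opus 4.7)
The plan is to work from the paper's notion of \emph{control} and push it, via the truthfulness constraints, all the way to the picking-exchange structure. Concretely, I would proceed as follows.

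\textbf{Step 1: A menu view of truthfulness.} For each fixed $v_2$, I consider the \emph{menu} $\mathcal{M}_1(v_2) = \{ X_1(v_1, v_2) \mid v_1 \in \mathbb{R}^m_+ \}$ of all bundles that player~$1$ can receive as $v_1$ ranges over all possible reports. By the standard taxation-principle argument, truthfulness forces $X_1(v_1, v_2) \in \argmax_{S \in \mathcal{M}_1(v_2)} v_1(S)$, with some consistent tie-breaking. The symmetric statement holds for player $2$. The goal of the proof is to show that, across all $v_2$, these menus arise from a common picking/exchange skeleton.

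\textbf{Step 2: Partitioning by control.} Using the lemma (assumed from earlier in the paper) that every item is controlled by some player, I would let $C_i$ be the set of items controlled by player~$i$. A short argument rules out $C_1 \cap C_2 \ne \emptyset$: if both players could force an item $j$ with witnessing valuations, running the mechanism on the joint profile would contradict the fact that $j$ is assigned to exactly one player. Within $C_1$ I would then separate items that \emph{always} go to player~$1$ (call this $E_1$) from items that go to player~$1$ only for some profiles (call this $N_1$); analogously for $C_2$, obtaining $E_2$ and $N_2$. This yields the required partition $(N_1, N_2, E_1, E_2)$.

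\textbf{Step 3: Extracting the picking structure on $N_1 \cup N_2$.} Restricted to $N_1$, I would show that the family $\mathcal{O}_1 := \{ X_1(v_1, v_2) \cap N_1 \mid v_1, v_2 \}$ is independent of $v_2$. The key tool is that any item in $N_2$ (controlled by player~$2$) or in $E_1 \cup E_2$ can be handled separately by a comparison argument: if varying $v_2$ changed which subset of $N_1$ player~$1$ could get, then for a suitably chosen $v_1$ that breaks a tie sharply between two offers, player~$1$ would have a profitable deviation (either directly, or by inducing a different $v_2$-dependent response through the mechanism's reaction to a controlling bid of player~$2$). Covering of $N_1$ and emptiness of the common intersection of $\mathcal{O}_1$ both follow from the definitions of control and the fact that items in $N_1$ are not \emph{always} assigned to player~$1$. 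Symmetrically I obtain $\mathcal{O}_2$ on $N_2$, and together with truthfulness these offer sets behave as in a picking mechanism.

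\textbf{Step 4: Extracting the exchange structure on $E_1 \cup E_2$.} On $E_1 \cup E_2$, each item has a default owner (the player who always gets it in many profiles). I would study the set $\mathcal{D}$ of ``deviations from the endowment'' $(E_1, E_2)$ that occur over all profiles, i.e., the set of pairs $(S, T)$ with $S \subseteq E_1$, $T \subseteq E_2$ such that for some profile the final allocation swaps $S$ for $T$ between the players. Using truthfulness on both sides (varying one $v_i$ while fixing the other), I would show that the minimal such deviations can be collected into pairwise-disjoint pairs $(S_i, T_i)$ forming a valid set of exchange deals, and that each deal is executed if and only if it is favorable (favorable ones must be executed, unfavorable ones must not, by truthfulness). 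Independence between deals is forced because combining two deals into one would create profiles where one player strictly gains by misreporting the relevant coordinates of his valuation.

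\textbf{Step 5: Combining the two parts.} Finally, I would verify that on an arbitrary profile, the allocation produced by the original mechanism coincides with the one prescribed by the picking-exchange mechanism built from $(N_1, N_2, E_1, E_2)$, $\mathcal{O}_1, \mathcal{O}_2$ and $D$. This is essentially a direct check once Steps 3--4 are in place: each item's fate is entirely determined either by player~$i$'s maximizing choice on $N_i$ or by the favorability of the exchange deal containing it.

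\textbf{Main obstacle.} The hardest part I expect is Step~3 together with the \emph{decoupling} claim implicit in Step~5: that player~$1$'s choice on $N_1$ does not depend on $v_2$, and likewise the exchange decisions on $E_1 \cup E_2$ do not depend on choices made on $N_1 \cup N_2$. A priori, a truthful mechanism could correlate these two components (e.g., player~$1$'s choice of offer on $N_1$ might trigger a particular exchange on $E_1 \cup E_2$), and ruling this out requires carefully constructed local deviations that isolate the valuation of a single offer or a single exchange deal. This is where the bulk of the case analysis will live, and where the assumption that we are restricted to two players (so that a deviation by one player can be fully absorbed by contrasting two bids of the other) is essential.
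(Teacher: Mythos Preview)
Your Step~2 contains a concrete error that breaks the rest of the argument. You define $E_1$ as the set of items in $C_1$ that \emph{always} go to player~$1$, and $N_1$ as the items in $C_1$ that go to player~$1$ only for some profiles. But in any nontrivial picking-exchange mechanism, items in the true $E_1$ that participate in an exchange deal do \emph{not} always go to player~$1$; they go to player~$2$ whenever the deal is favorable. With your definition, such items would land in your $N_1$, and the exchange structure would be lost. Concretely, take $E_1=\{1,2\}$, $E_2=\{3\}$, $D=\{(\{1\},\{3\})\}$. Your partition gives student-$E_1=\{2\}$, student-$N_1=\{1\}$, student-$E_2=\emptyset$, student-$N_2=\{3\}$. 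The only offer set on a singleton is $\{\emptyset,\{x\}\}$, so your reconstructed mechanism has player~$1$ always take item~$1$ and player~$2$ always take item~$3$, which is not the original mechanism.

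The paper's fix is subtle and is really the crux of the proof: one defines $\mathcal{A}_i$ as the family of \emph{maximal} sets controlled by player~$i$, sets $E_i=\bigcap_{S\in\mathcal{A}_i}S$, $N_i=C_i\setminus E_i$, and $\mathcal{O}_i=\{S\setminus E_i\mid S\in\mathcal{A}_i\}$. The point is that $E_i$ is the common core present in \emph{every} maximal controlled set, not the set of items always allocated to $i$. This is what cleanly separates the ``endowment that may be traded'' from the ``region where player~$i$ has several incomparable choices.'' Once you have this definition, your Steps~3--5 are in the right spirit, but be aware that the actual work is substantial: showing $X_i^{N_i}(\mathbf v)$ always lies in $\mathcal{O}_i$ and is in fact a maximizer, proving that the feasible exchanges on $E_1\cup E_2$ decompose into pairwise disjoint minimal deals, and establishing the decoupling you flag as the main obstacle (done in the paper only on $\mathcal{V}^{\neq}_m$ first, then lifted) each require nontrivial chains of deviation arguments that your outline does not yet supply.
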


The rest of this subsection is a road map to the proof of Theorem \ref{thm:truthful-->p/e_mech}.
The proof is long and technical, so for the sake of presentation, it is broken down to several lemmata. In order to illustrate the high-level ideas, the proofs  of those lemmata are deferred to the 
 Appendix \ref{app:characterization}. 

For the rest of this subsection we assume a truthful mechanism $\mathcal{X}$ for allocating all the items in $M=[m]$  to two players  with additive valuation functions. Every statement is going to be with respect to this $\mathcal{X}$.

\subsubsection{The Crucial Notion of Control}
We begin by introducing the notions of \textit{strong desire} and of \textit{control}, which are of key importance for our characterization. We say that  player $i$ \textit{strongly desires} a set $S$ if each item in $S$ has more value for him than all the items of $M\mysetminus S$ combined, i.e., if for every $x \in S$ we have $v_{ix}>\sum_{y \in M\mysetminus S}v_{iy}$. 

\begin{definition}\label{def:control}
	We say that player $i$ \textit{controls} a set $S$ with respect to  $\mathcal{X}$, if every time he strongly desires $S$ he gets it whole, i.e., for every $\mathbf{v}=(v_1,v_2)$ in which player $i$ strongly desires $S$, then we have that $S\subseteq X_i(\mathbf{v})$ . 
\end{definition}
	
Clearly, given $\mathcal{X}$, any set $S$ can be controlled by at most one player.

The following is a key lemma for understanding how truthful mechanisms operate. The lemma together with Corollary \ref{lem:controlsin} below show that every item is controlled by some player under any truthful mechanism.

\begin{lemma}[Control Lemma]\label{lem:control}
Let 
$S \subseteq M$. If there exists a profile $\mathbf{v}=(v_1,v_2)$ such that both players strongly desire $S$, and $S\subseteq X_i(\mathbf{v})$ for some $i \in \{1,2\}$, then player $i$ controls every $T\subseteq S$ with respect to $\mathcal{X}$.
\end{lemma}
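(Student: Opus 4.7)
The plan is to fix $i=1$ without loss of generality and, given the witness profile $(v_1,v_2)$ at which both players strongly desire $S$ with $S\subseteq X_1(v_1,v_2)$, to show that for every $T\subseteq S$ and every profile $(v_1',v_2')$ at which player 1 strongly desires $T$, one has $T\subseteq X_1(v_1',v_2')$. I would prove this by deforming the witness in two stages---first the second coordinate, then the first---invoking truthfulness of one player at a time.

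In the first stage I would fix $v_1$ and argue that $S\subseteq X_1(v_1,\tilde{v}_2)$ for \emph{every} $\tilde{v}_2$. Truthfulness of player 2 at $(v_1,v_2)$ against the deviation $\tilde{v}_2$ gives $v_2(X_2(v_1,v_2))\geq v_2(X_2(v_1,\tilde{v}_2))$. Since $S\subseteq X_1(v_1,v_2)$ we have $v_2(X_2(v_1,v_2))\leq v_2(M\setminus S)$, and if some $x\in S$ lay in $X_2(v_1,\tilde{v}_2)$, the strong-desire condition on $v_2$ would yield $v_2(X_2(v_1,\tilde{v}_2))\geq v_2(x)>v_2(M\setminus S)\geq v_2(X_2(v_1,v_2))$, contradicting the previous inequality. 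The crucial observation is that this argument uses nothing about $\tilde{v}_2$---only $v_2$'s strong desire of $S$---so $\tilde{v}_2$ is genuinely unrestricted.

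In the second stage I would fix any such $\tilde{v}_2$ and move $v_1$ to any $\tilde{v}_1$ that strongly desires $T$. Truthfulness of player 1 at $(\tilde{v}_1,\tilde{v}_2)$ against the deviation $v_1$ yields the chain $\tilde{v}_1(X_1(\tilde{v}_1,\tilde{v}_2))\geq \tilde{v}_1(X_1(v_1,\tilde{v}_2))\geq \tilde{v}_1(S)\geq \tilde{v}_1(T)$, where the middle inequality uses the first-stage conclusion together with positivity of $\tilde{v}_1$. If some $x\in T$ were missing from $X_1(\tilde{v}_1,\tilde{v}_2)$, the strong-desire condition on $\tilde{v}_1$ would give $\tilde{v}_1(x)>\tilde{v}_1(M\setminus T)$, so $\tilde{v}_1(X_1(\tilde{v}_1,\tilde{v}_2))\leq \tilde{v}_1(M)-\tilde{v}_1(x)<\tilde{v}_1(T)$, contradicting the chain. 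Hence $T\subseteq X_1(\tilde{v}_1,\tilde{v}_2)$, which is exactly the statement that player 1 controls $T$.

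The main obstacle---and the conceptual key---is seeing that the first stage places no constraint on $\tilde{v}_2$. A naive one-shot argument trying to go directly from $(v_1,v_2)$ to $(v_1',v_2')$ would stall because $v_2'$ itself may strongly desire $T$, making it impossible to compare the two profiles for player 2 using only strong-desire hypotheses. Splitting the deformation into an unrestricted change of $v_2$ (during which the invariant $S\subseteq X_1$ is preserved using only $v_2$'s strong desire of $S$) followed by a change of $v_1$ (during which $T\subseteq X_1$ is established using only $\tilde{v}_1$'s strong desire of $T$) reduces the whole argument to two clean single-player truthfulness invocations.
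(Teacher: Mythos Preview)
Your proof is correct and follows essentially the same two-stage deformation as the paper: first freeze $v_1$ and vary player~2's report, using player~2's truthfulness together with his strong desire of $S$ to conclude $S\subseteq X_1(v_1,\tilde{v}_2)$ for all $\tilde{v}_2$; then freeze $\tilde{v}_2$ and vary player~1's report, using player~1's truthfulness together with the new report's strong desire to pin down the target set. The only difference is that the paper carries this out in three steps (first proving control of $S$ itself via the intermediate profile, then deriving control of $T\subsetneq S$ as a separate corollary), whereas you fold the last two into one by letting $\tilde{v}_1$ strongly desire $T$ directly and comparing against the bundle at $(v_1,\tilde{v}_2)$, which already contains $S\supseteq T$. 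This is a mild streamlining, not a different idea.
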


\begin{proof} 
	Let $\mathbf{v}=(v_1,v_2)$ be a profile such that both players strongly desire $S$ and $S\subseteq X_1(\mathbf{v})$ (the case where $S\subseteq X_2(\mathbf{v})$ is symmetric). We first prove the statement for $T=S$. Let $\mathbf{v}'=(v'_1,v'_2)$ be any profile in which player $1$ strongly desires $S$, i.e., $v'_{1x}>\sum_{y \in M\mysetminus S}v'_{1y}, \forall x \in S$. Initially, consider the intermediate profile $\mathbf{v}^*=(v_1,v'_2)$. If $S \cap X_2(\mathbf{v}^*)\neq \emptyset$ then player 2 would deviate from profile $\mathbf{v}$ to $\mathbf{v}^*$ in order to strictly improve his total utility. So by truthfulness we derive that $S \subseteq X_1(\mathbf{v}^*)$. Similarly, in the profile $\mathbf{v}'$, if $S \cap X_2(\mathbf{v}')\neq \emptyset$ then player 1 would deviate from  $\mathbf{v}'$ to $\mathbf{v}^*$ in order to strictly improve. Thus by truthfulness we have $S\subseteq X_1(\mathbf{v}')$. We conclude that player 1 controls $S$.
	
	Now, suppose that $\mathbf{v}''=(v''_1,v''_2)$ is any profile in which player $1$ strongly desires $T\subsetneq S$. If $T \nsubseteq X_1(\mathbf{v}'')$ then player 1 could strictly improve his utility by playing $v'_1$ from before (i.e., he declares that he strongly desires $S$) and getting $S\supsetneq T$. Thus, by truthfulness, $T \subseteq X_1(\mathbf{v}'')$, and we conclude that player 1 controls $T$.
\end{proof}

Notice here that the existence of sets that are controlled by some player is always guaranteed. Specifically, each singleton $\{x\}$ is always controlled (only) by one of the players. Indeed, when both players strongly desire $\{x\}$, it is always the case that $\{x\}\subseteq X_i(\mathbf{v})$ for some $i \in \{1,2\}$.  This is summarized in the following corollary. 

\begin{corollary}\label{lem:controlsin}
Let $\mathcal{X}$ be a truthful mechanism for allocating the items in $M$ to two players with additive valuations. 
For every $x\in M$ there exists $i\in\{1, 2\}$ such that only player $i$ controls $\{x\}$ with respect to $\mathcal{X}$.
\end{corollary}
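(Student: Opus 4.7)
The plan is to apply the Control Lemma (Lemma \ref{lem:control}) to the singleton $\{x\}$. First, I would exhibit an explicit profile in which both players strongly desire $\{x\}$: for instance, set $v_{1x} = v_{2x} = m+1$ and $v_{iy} = 1$ for every $y \in M \setminus \{x\}$ and every $i \in \{1,2\}$. Since $v_{ix} = m+1 > m - 1 \geq \sum_{y \in M \setminus \{x\}} v_{iy}$, the strong-desire condition is satisfied for both players at this profile $\mathbf{v} \in \mathcal{V}_{\!m}$.

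Next, because $\mathcal{X}$ outputs a partition of $M$, the item $x$ must be allocated to exactly one player; say $\{x\} \subseteq X_i(\mathbf{v})$. Taking $S = T = \{x\}$, all the hypotheses of Lemma \ref{lem:control} are met, so I can conclude that player $i$ controls $\{x\}$ with respect to $\mathcal{X}$.

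For the uniqueness (the ``only'' in the statement), I would appeal to the observation made right after Definition \ref{def:control}: any fixed set can be controlled by at most one player. To make this concrete, if both players were to control $\{x\}$, then at any profile in which both strongly desire $\{x\}$ — such as the one constructed above — each would be required to receive $\{x\}$ whole, which is impossible since $\mathcal{X}$ returns a partition. Thus the player $i$ identified above is the unique controller of $\{x\}$.

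I do not anticipate a genuine obstacle: the corollary is essentially the combination of (i) the Control Lemma applied to a singleton, (ii) the fact that $\mathcal{X}$ allocates every item, and (iii) the disjointness of the bundles in a partition. The only small care needed is to ensure that the witnessing profile lies in the domain $\mathcal{V}_{\!m}$ of strictly positive additive valuations, which the construction above clearly satisfies.
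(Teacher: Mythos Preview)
Your proposal is correct and follows essentially the same approach as the paper: apply the Control Lemma to a singleton at a profile where both players strongly desire it, use that $\mathcal{X}$ allocates every item to get existence, and invoke the observation after Definition~\ref{def:control} for uniqueness. The paper's argument is just a terser version of exactly this.
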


Aside from its use in the current proof, the corollary has implications on fairness, that will be explored in Section \ref{SEC:FAIRNESS}.



\subsubsection{Identifying the Components of a Mechanism}
Our goal now is to determine the ``exchange component'' and the ``picking component'' of mechanism $\mathcal{X}$. Every picking-exchange mechanism is completely determined by the seven sets $N_1$, $N_2$, $\mathcal{O}_1$, $\mathcal{O}_2$, $E_1$, $E_2$, and $D$ mentioned in Definition \ref{def:p/e_mech} (plus a deterministic tie-breaking rule). 
Below we try to identify these sets. Later we show that the mechanism's behavior is identical to that of a picking-exchange mechanism defined by them.

To proceed, we will need to consider the collection of all maximal sets controlled by each player. For $i\in\{1, 2\}$, let 
\[\mathcal{A}_i=\{S\subseteq M \ |\ \text{player $i$ controls $S$ and for any $T\supsetneq S$, $i$ does not control $T$}\} \,.\]
Clearly, every set controlled by player $i$ is a subset of an element of $\mathcal{A}_i$. According to Lemma \ref{lem:control}, if we consider the set $C_i=\bigcup_ {S \in\mathcal{A}_i}S$, i.e., the union of all the sets in  $\mathcal{A}_i$, this is exactly the set of items that are controlled---as singletons---by player $i$. 

\begin{corollary}\label{cor:part}
The sets $C_1$ and $C_2$ define a partition of $M$.
\end{corollary}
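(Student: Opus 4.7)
The plan is to derive Corollary \ref{cor:part} as a direct consequence of Corollary \ref{lem:controlsin} together with the Control Lemma. The whole argument hinges on the identification, stated informally in the paragraph preceding the corollary, that $C_i$ equals exactly the set of items $x \in M$ such that player $i$ controls the singleton $\{x\}$. Once this identification is in place, the partition property falls out immediately.

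First, I would verify the characterization $C_i = \{x \in M : \text{player } i \text{ controls } \{x\}\}$. The inclusion $\subseteq$ is the substantive direction and uses the Control Lemma: any $x \in C_i$ lies in some maximal controlled set $S \in \mathcal{A}_i$, so player $i$ controls $S$. Picking any profile $\mathbf{v}$ in which both players strongly desire $S$ (for instance, by making every item in $S$ have overwhelmingly large value for both players, and every item in $M \mysetminus S$ negligible), the control of $S$ by $i$ forces $S \subseteq X_i(\mathbf{v})$, and the Control Lemma then guarantees that $i$ controls every subset of $S$, in particular $\{x\}$. For the reverse inclusion $\supseteq$, if player $i$ controls $\{x\}$, then the finiteness of $M$ guarantees that the family of controlled supersets of $\{x\}$ has a maximal element $S^* \ni x$, which by definition lies in $\mathcal{A}_i$, so $x \in C_i$.

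With this identification in hand, Corollary \ref{lem:controlsin} finishes the proof in one stroke: it asserts that every $x \in M$ is controlled as a singleton by exactly one of the two players, and therefore $x$ belongs to exactly one of $C_1$ and $C_2$. This simultaneously yields $C_1 \cup C_2 = M$ (covering) and $C_1 \cap C_2 = \emptyset$ (disjointness), which is precisely the partition property claimed.

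I do not foresee any real obstacles here, since the heavy lifting has already been carried out by the Control Lemma and by Corollary \ref{lem:controlsin}. The only point that warrants a line of care is the step that constructs, for an arbitrary $x \in S \in \mathcal{A}_i$, a profile in which both players strongly desire $S$ so that the Control Lemma applies; everything else is bookkeeping.
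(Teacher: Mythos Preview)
Your proposal is correct and follows essentially the same route as the paper. The paper's proof also reduces the covering to Corollary~\ref{lem:controlsin} and, for disjointness, invokes the Control Lemma to pass from $z\in A\in\mathcal{A}_1$ and $z\in B\in\mathcal{A}_2$ to both players controlling $\{z\}$; you are simply more explicit about why the Control Lemma applies (constructing a profile where both players strongly desire $S$ and noting that control of $S$ forces $S\subseteq X_i(\mathbf{v})$), which the paper leaves implicit.
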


Using the $\mathcal{A}_i$s and the $C_i$s, we define the sets of interest that determine the mechanism. 
We begin with $E_i=\bigcap_ {S \in \mathcal{A}_i}S$ for $i\in \{1, 2\}$. As we are going to see eventually in Lemma \ref{lem:final}, 
the ``exchange component'' of $\mathcal{X}$ is observed on $E_1 \cup E_2$. 

Defining the corresponding valid set of exchange deals $D$ is trickier, and we need some terminology. Recall that $X_i^S(\mathbf{v}) = X_i(\mathbf{v}) \cap S$.
For $S \subseteq E_1$ and $T \subseteq E_{2}$, we say that $(S,T)$ is a \emph{feasible exchange}, if there exists a profile $\mathbf{v}$, such that $X_1^{E_1 \cup E_2}(\mathbf{v})=(E_1 \mysetminus S) \cup T$. In such a case, each of $S$ and $T$ is called \emph{exchangeable}.
An exchangeable set $S$ is called \emph{minimally exchangeable} if any $S' \subsetneq S$ is not exchangeable. Finally, a feasible exchange $(S,T)$ is a \emph{minimal feasible exchange}, if at least  one of $S$ and $T$ is minimally exchangeable. Now let
\[D=\{(S, T) \ | \ (S, T) \text{ is a minimal feasible exchange with respect to $\mathcal{X}$} \} \,.\]
Of course, at this point it is not clear whether $D$ is well defined as a valid set of exchange deals, and this is probably the most challenging part  of the characterization. 

Next, we define $N_i= C_i \mysetminus E_i$  and  $\mathcal{O}_i=\{S \mysetminus E_i\ |\ S \in \mathcal{A}_i\}$  for $i\in \{1, 2\}$. As shown in Lemmata \ref{output} and \ref{best}, we identify the ``picking component'' of  $\mathcal{X}$ on $N_1 \cup N_2$, and
$\mathcal{O}_i$ will correspond to the set of offers. 

Note that by Corollary \ref{cor:part} and the above definitions,  $(N_1, N_2, E_1, E_2)$ is a partition of $M$. The intuition behind breaking $C_i$ into $N_i$ and $E_i$ is that player $i$ has different levels of control on those two sets. The fact that $E_i$ is contained in every maximal set controlled by player $i$ will turn out to mean that $\mathcal{X}$ gives the ownership of $E_i$ to player $i$. On the other hand, the control of player $i$ on $N_i$ is much more restricted as shown below.

\subsubsection{Cracking the Picking Component}

The first step is to show that the $\mathcal{O}_i$s defined above, greatly restrict the possible allocations of the items of $N_1 \cup N_2$. In particular, whatever player $i$ receives from $N_i$ must be contained in some set of $\mathcal{O}_i$.
\begin{lemma}\label{output}
	For every profile $\mathbf{v}$ and every $i\in \{1, 2\}$, there exists $S \in \mathcal{O}_i$ such that $X_i^{N_i}(\mathbf{v})\subseteq S$.
\end{lemma}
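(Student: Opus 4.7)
Set $T := X_1^{N_1}(\mathbf{v})$ (WLOG $i = 1$). I aim to exhibit $S \in \mathcal{A}_1$ with $T \subseteq S$: since $E_1 \cap N_1 = \emptyset$ and $E_1 \subseteq S$ for every $S \in \mathcal{A}_1$, this yields $T = T \cap N_1 \subseteq S \setminus E_1 \in \mathcal{O}_1$. Because every set controlled by player 1 lies inside some maximal element of $\mathcal{A}_1$, it suffices to show that $T$ itself is controlled by player~1.

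The plan is to apply the Control Lemma: construct a witness profile $\mathbf{v}^* = (v_1^*, v_2^*)$ where both players strongly desire $T$ and $T \subseteq X_1(\mathbf{v}^*)$. Take $v_i^*$ to agree with $v_i$ outside $T$ and to assign a common large value---$K$ for player~1 and $K'$ for player~2---to every item of $T$, chosen so that $K > v_1(M \setminus T)$ and $K' > v_2(M \setminus T)$, so that both valuations strongly desire $T$.

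The claim $T \subseteq X_1(\mathbf{v}^*)$ is established in two substitution steps. First, moving from $(v_1, v_2)$ to $(v_1^*, v_2)$, truthfulness of player~1 combined with $T \subseteq X_1(\mathbf{v})$ yields
\[
v_1^*\bigl(X_1(v_1^*, v_2)\bigr) \;\geq\; v_1^*\bigl(X_1(\mathbf{v})\bigr) \;\geq\; v_1^*(T) \;=\; K\,|T|,
\]
and since every item outside $T$ contributes at most $v_1(M \setminus T) < K$ under $v_1^*$, a counting argument forces $T \subseteq X_1(v_1^*, v_2)$, so that $X_2(v_1^*, v_2) \cap T = \emptyset$. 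Second, moving from $(v_1^*, v_2)$ to $(v_1^*, v_2^*)$, I combine both directions of player~2's truthfulness (which are sharp because $v_2^*$ and $v_2$ agree outside $T$) together with deviations by player~1 from $v_1^*$ to strong-desire valuations $v_1^{\mathrm{sd}(\{t\})}$ for each $t \in T$. Such deviations are available because each singleton $\{t\}$ with $t \in T \subseteq N_1 \subseteq C_1$ is controlled by player~1 (Corollary~\ref{lem:controlsin}), so $\{t\} \subseteq X_1(v_1^{\mathrm{sd}(\{t\})}, v_2^*)$; truthfulness at $\mathbf{v}^*$ then prevents any $t \in T$ from sitting in $X_2(\mathbf{v}^*)$.

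Once $T \subseteq X_1(\mathbf{v}^*)$ is in hand, the Control Lemma implies $T$ is controlled by player~1, so $T \subseteq S$ for some $S \in \mathcal{A}_1$ and $S \setminus E_1 \in \mathcal{O}_1$ is the desired offer. The main obstacle is the second substitution step: swapping $v_2 \to v_2^*$ can in principle re-allocate items arbitrarily, and preventing any $t \in T$ from escaping into $X_2(\mathbf{v}^*)$ requires that $v_2^*$ be carefully chosen (boosting on $T$ only) so that the inequalities from player~2's truthfulness pin down the direction of item migration, which must then be combined with player~1's singleton-control deviations to force the complete inclusion $T \subseteq X_1(\mathbf{v}^*)$.
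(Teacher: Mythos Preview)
Your strategy is to prove the stronger statement that $T:=X_1^{N_1}(\mathbf v)$ is \emph{controlled} by player~1, by manufacturing a profile $\mathbf v^*$ where both players strongly desire $T$ and $T\subseteq X_1(\mathbf v^*)$. Step~1 (replacing $v_1$ by $v_1^*$) is fine. The real problem is Step~2, and the fix you sketch does not close the gap.

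Write $X_2(\mathbf v^*)=A\cup B$ with $A\subseteq T$ and $B\subseteq M\setminus T$, and recall $X_2(v_1^*,v_2)\cap T=\emptyset$. The two directions of player~2's truthfulness, using that $v_2^*$ agrees with $v_2$ off $T$, collapse to the tautology $K'|A|\ge v_2(A)$; they never force $A=\emptyset$. Your singleton-control deviation for $t\in T$ says only that $v_1^*(X_1(\mathbf v^*))\ge v_1^*(\{t\})=K$, because at $(v_1^{\mathrm{sd}(\{t\})},v_2^*)$ we are guaranteed nothing beyond $t\in X_1$. Since $v_1^*(X_1(\mathbf v^*))=K\,|T\setminus A|+v_1((M\setminus T)\setminus B)$ and $v_1(M\setminus T)<K$, this only yields $|T\setminus A|\ge 1$. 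So all your inequalities together are consistent with player~2 grabbing any nonempty proper subset $A\subsetneq T$ at $\mathbf v^*$, and you cannot invoke the Control Lemma. The obstacle you flag at the end is therefore not a matter of ``careful choice of $v_2^*$'': for \emph{any} $v_2^*$ that agrees with $v_2$ off $T$ and strongly desires $T$, the available truthfulness inequalities simply do not pin down $A=\emptyset$.

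The paper avoids this by \emph{not} trying to show that $T$ itself is controlled. Instead it argues by contradiction: if $T$ sits inside no offer, pick a maximal-cardinality intersection $Z=T\cap S'$ with some $S'\in\mathcal O_1$ and some $x\in T\setminus Z$. Then $Z$ \emph{is} controlled (it lies in $S'$) while $Z\cup\{x\}$ cannot be, and a tailored sequence of profiles exploits this asymmetry to force $x$ first to player~1 and then to player~2 in a way that violates truthfulness. The leverage comes precisely from having a set that is controlled adjacent to one that provably is not---something your direct approach never isolates.
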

The idea behind the proof of Lemma \ref{output} is that by receiving some $X_i^{N_i}(\mathbf{v})$ not contained in any set of $\mathcal{O}_i$, player $i$ is able to extend his control to subsets not contained in $C_i$, thus leading to contradiction. The proof, as many of the proofs of the remaining lemmata, includes the careful construction of a series of profiles, where in each step one has to argue about how the allocation does or does not change.

Given the restriction implied by Lemma \ref{output}, next we can prove that the subset of $N_i$ that player $i$ receives must be the best possible from his perspective, hence the mechanism behaves as a picking mechanism on each $N_i$. 
Intuitively,  suppose that player 1 receives a subset $S$ of $N_1$ which is not an element of $\mathcal{O}_1$. By Lemma \ref{output}, $S$ is contained in an element $S'$ of $\mathcal{O}_1$. Since player 1 controls $S'$, this means that he gave up part of his control to gain something that he was not supposed to. Actually, it can be shown that it is the case where player 2 also gave part of his control (either on $N_2$ or $E_2$). This mutual transfer of control, combined with truthfulness, eventually leads to profiles where some of the items must be given to both players at the same time, hence a contradiction. 
\begin{lemma}\label{best}
	For every profile $ \mathbf{v}$ and every $i\in \{1, 2\}$ we have  $X_i^{N_i}(\mathbf{v}) \in \argmax_{S \in \mathcal{O}_i}v_i(S)$.
\end{lemma}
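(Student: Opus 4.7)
The plan is a proof by contradiction. Fix any profile $\mathbf{v}=(v_1,v_2)$ and, without loss of generality, treat the case $i=1$; set $R:=X_1^{N_1}(\mathbf{v})$. By Lemma \ref{output} there exists $S_0\in\mathcal{O}_1$ with $R\subseteq S_0$. A preliminary observation that I would use repeatedly is that $\mathcal{O}_1$ is an antichain under inclusion: if $S\mysetminus E_1\subseteq S'\mysetminus E_1$ for $S,S'\in\mathcal{A}_1$, then (using $E_1\subseteq S,S'$) one gets $S\subseteq S'$, and maximality in $\mathcal{A}_1$ forces $S=S'$. Now assume toward contradiction that $R\notin\argmax_{S\in\mathcal{O}_1}v_1(S)$, and pick $S^*\in\mathcal{O}_1$ with $v_1(S^*)>v_1(R)$.

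The first real step is to isolate a revealing deviation. Let $v_1'$ assign a huge common value to every item of $S^*\cup E_1\in\mathcal{A}_1$ and an infinitesimal value to every remaining item, so that player 1 strongly desires $S^*\cup E_1$; by control $S^*\cup E_1\subseteq X_1(v_1',v_2)$, and combining Lemma \ref{output} with the antichain observation pins $X_1^{N_1}(v_1',v_2)$ to be exactly $S^*$. Applying truthfulness for player 1 at $\mathbf{v}$ against this deviation gives
\[v_1(X_1(\mathbf{v}))\;\geq\;v_1(X_1(v_1',v_2))\;\geq\;v_1(S^*\cup E_1),\]
and decomposing $X_1(\mathbf{v})$ along the partition $(N_1,E_1,N_2,E_2)$ while using $X_1^{E_1}(\mathbf{v})\subseteq E_1$ yields after rearrangement
\[v_1\bigl(X_1^{N_2\cup E_2}(\mathbf{v})\bigr)\;\geq\;v_1(S^*)-v_1(R)\;>\;0.\]
Since every $v_{1j}$ is strictly positive, player 1 must in fact hold a nonempty set $Q\subseteq N_2\cup E_2=C_2$ at the true profile. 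This is precisely the \emph{mutual transfer of control} foreshadowed in the informal overview: player 1 has relinquished the items $S_0\mysetminus R$ of his own controlled region $N_1$ and grabbed a chunk $Q$ of player 2's controlled region $C_2$ in exchange.

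The final and hardest step is to convert the existence of $Q$ into an outright contradiction, since ``player 1 receives some items from $C_2$'' is not by itself forbidden. Here I would invoke the symmetric machinery on player 2: pick any $x\in Q$; by Corollary \ref{lem:controlsin} the singleton $\{x\}$ is controlled only by player 2, so some $T\in\mathcal{A}_2$ contains $x$. Construct $v_2'$ that makes player 2 strongly desire $T$, so that by control $T\subseteq X_2(v_1,v_2')$ and $T\subseteq X_2(v_1',v_2')$, and then analyse the profiles $(v_1,v_2)$, $(v_1,v_2')$, and $(v_1',v_2')$ in concert: rerunning the previous paragraph's derivation with $v_2$ replaced by $v_2'$ again forces player 1 to absorb a positive-value slice of $C_2$ at $(v_1,v_2')$; playing this against player 2's control of $T$ and against truthfulness for player 2 between $v_2$ and $v_2'$, one eventually drives the item $x$ into both $X_1$ and $X_2$ at some common profile, contradicting that $\mathcal{X}$ always returns a partition of $M$. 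The main obstacle I expect is precisely this last chaining: the scales in $v_1'$ and $v_2'$ must be calibrated so that all strong-desire conditions are preserved simultaneously and the successive truthfulness inequalities line up to pin down the colliding item. The high-level logic mirrors the two-sided arguments used elsewhere in the characterization, but the bookkeeping, including possible tie-breaking subtleties, is the delicate part.
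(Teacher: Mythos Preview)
Your opening step is clean and correct: the antichain observation on $\mathcal{O}_1$ is right, and the deviation to a profile where player~1 strongly desires $S^*\cup E_1\in\mathcal{A}_1$ does force, via truthfulness and $X_1^{E_1}(\mathbf{v})\subseteq E_1$, that $Q:=X_1^{N_2\cup E_2}(\mathbf{v})\neq\emptyset$. This is a nice compact way to see the ``mutual transfer'' phenomenon the paper describes informally.

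The gap is entirely in the closing step, and it is a real one rather than just bookkeeping. Your plan is to pick $x\in Q$, let player~2 strongly desire some $T\in\mathcal{A}_2$ containing $x$, and then ``rerun'' the first paragraph at $(v_1,v_2')$. But the rerun needs $X_1^{N_1}(v_1,v_2')\notin\argmax_{S\in\mathcal{O}_1}v_1(S)$, and nothing you have said prevents player~1 from receiving his optimal offer once $v_2$ changes to $v_2'$. Even if the rerun did go through, it would only give you some nonempty $Q'\subseteq C_2$ at $(v_1,v_2')$; since $x\in T\subseteq X_2(v_1,v_2')$ you already know $x\notin Q'$, so you cannot pin the same item on both sides. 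In short, your three profiles $(v_1,v_2)$, $(v_1,v_2')$, $(v_1',v_2')$ never produce a single profile where $x$ is simultaneously forced into $X_1$ and $X_2$; the chain does not close.

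The paper's proof gets around exactly this obstacle by abandoning generic $v_1',v_2'$ in favour of explicit numerical profiles (powers of $m$) and by splitting into two cases according to whether $X_2^{N_2}(\mathbf{v})\in\mathcal{O}_2$. In each case it builds a fairly long sequence of profiles in which the allocation can be determined \emph{exactly} at every step, not merely bounded, and the contradiction comes from a direct truthfulness violation between two fully specified profiles rather than from an item landing in both bundles. Your inequality $v_1(Q)\ge v_1(S^*)-v_1(R)$ is morally the content of the paper's Claims about $X_1^{E_2}\neq\emptyset$, but by itself it is too coarse: turning it into a contradiction genuinely requires the case split and the explicit constructions, which your outline does not supply.
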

Now we know that $\mathcal{X}$ behaves as the ``right''  picking-exchange mechanism on $N_1 \cup N_2$. For most of the rest of the proof we would like to somehow ignore this part of $\mathcal{X}$ and focus on $E_1 \cup E_2$.


\subsubsection{Separating the Two Components}

As mentioned right after Definition \ref{def:p/e_mech}, there is some kind of independence between the two components of a picking-exchange mechanism, at least when  restricted on $\mathcal{V}^{\neq}_{\! m}$. This independence should be present in $\mathcal{X}$ as well; in fact we are going to exploit it to get rid of $N_1 \cup N_2$  until the last part of the proof. 

\begin{lemma}\label{indie}
Let $\mathbf{v}=(v_1, v_2), \mathbf{v}'=(v'_1, v'_2)\in \mathcal{V}^{\neq}_{\! m}$ such that $v_{ij}=v'_{ij}$ for all $i\in \{1, 2\}$ and $j\in E_1 \cup E_2$. Then $X_1^{E_1 \cup E_2}(\mathbf{v})=X_1^{E_1 \cup E_2}(\mathbf{v}')$.
\end{lemma}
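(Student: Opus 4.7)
The plan is to reduce to a single-player deviation and then exploit the uniqueness of the $\mathcal{O}_1$-argmax in $\mathcal{V}^{\neq}_{\! m}$ via a line-segment argument. Factoring the profile change through the intermediate $(v_1, v'_2)$ and using the symmetric roles of the two players---a claim about $X_1^{E_1 \cup E_2}$ transfers to one about $X_2^{E_1 \cup E_2}$ via complements in $E_1 \cup E_2$---it suffices to handle the case where only player $1$'s valuation changes. So fix $v_2$ and let $v_1, v'_1 \in \mathcal{V}^{\neq}_{\! m}$ agree on $E_1 \cup E_2$; the goal is $X_1^{E_1 \cup E_2}(v_1, v_2) = X_1^{E_1 \cup E_2}(v'_1, v_2)$.

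Consider the convex combination $v_1^t = (1-t)v_1 + t v'_1$ for $t \in [0, 1]$. Each $v_1^t$ has positive coordinates, agrees with $v_1$ on $E_1 \cup E_2$, and lies in $\mathcal{V}^{\neq}_{\! m}$ for all but finitely many $t$: for any two distinct $S, T \subseteq M$, the difference $v_1^t(S) - v_1^t(T)$ is affine in $t$ and nonzero at $t = 0$, so it can vanish for at most one $t$. Whenever $v_1^t \in \mathcal{V}^{\neq}_{\! m}$, Lemma~\ref{best} gives $X_1^{N_1}(v_1^t, v_2) = \argmax_{S \in \mathcal{O}_1} v_1^t(S) =: P^t$ as the unique argmax, and player $1$'s $N_2$-portion equals the fixed set $R := N_2 \setminus \argmax_{S \in \mathcal{O}_2} v_2(S)$, independent of $t$. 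So $t \mapsto P^t$ is piecewise constant with finitely many transition points $0 < t_1 < \cdots < t_k < 1$.

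Write $A^t := X_1^{E_1 \cup E_2}(v_1^t, v_2)$. Within an open interval between consecutive transitions, $P^t$ equals a fixed $P$; for $t, t'$ in such an interval, truthfulness of player $1$ swapping his report between $v_1^t$ and $v_1^{t'}$, combined with $v_1^t|_{E_1 \cup E_2} = v_1^{t'}|_{E_1 \cup E_2} = v_1|_{E_1 \cup E_2}$ and the common $N$-part $P \cup R$, forces $v_1(A^t) = v_1(A^{t'})$; since $v_1$ has no ties, $A^t = A^{t'}$. At a transition $t_j$, take $t = t_j - \epsilon$, $t' = t_j + \epsilon$, and denote the constant $E$-allocations and $\mathcal{O}_1$-argmaxes on the two sides by $A^-, A^+, P^-, P^+$. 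Truthfulness yields $v_1^t(A^- \cup P^-) \geq v_1^t(A^+ \cup P^+)$ and $v_1^{t'}(A^+ \cup P^+) \geq v_1^{t'}(A^- \cup P^-)$. As $\epsilon \to 0$ both become equalities at $v_1^{t_j}$, and since $P^-, P^+$ tie for the $\mathcal{O}_1$-argmax at $v_1^{t_j}$, we have $v_1^{t_j}(P^-) = v_1^{t_j}(P^+)$, giving $v_1(A^-) = v_1(A^+)$; the no-ties property on $E_1 \cup E_2$ (using $v_1^{t_j}|_{E_1 \cup E_2} = v_1|_{E_1 \cup E_2}$) then yields $A^- = A^+$. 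Chaining through all intervals and transitions, $A^0 = A^1$, as required.

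The main subtlety lies in the transition step, where one must take a limit to convert strict truthfulness inequalities at each $v_1^t \in \mathcal{V}^{\neq}_{\! m}$ into an equality at the boundary, and then invoke no-ties on $E_1 \cup E_2$ to collapse the equality of $v_1$-sums into set equality. A secondary point is verifying that $v_1^t$ remains strictly positive (immediate from positivity of the endpoints) and that the $\mathcal{O}_1$-argmax is well-defined off a finite subset of $[0,1]$.
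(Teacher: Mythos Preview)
Your argument is correct, but it takes a genuinely different route from the paper. You deform $v_1$ to $v'_1$ along a line segment, observe that the $\mathcal{O}_1$-argmax is piecewise constant with finitely many crossings, show constancy of the $E$-allocation on each piece via truthfulness plus no-ties on $E_1\cup E_2$, and then glue across transition points by a limit argument using that $P^-$ and $P^+$ have equal $v_1^{t_j}$-value at the crossing. The paper instead argues by contradiction with a single scaling: assuming $X_1^{E_1\cup E_2}(\mathbf{v})\neq X_1^{E_1\cup E_2}(\hat{\mathbf{v}})$ with, say, $v_1$ preferring the former, it multiplies player~1's values on $E_1\cup E_2$ by a large constant $K$ so that the gap on $E_1\cup E_2$ dominates any difference on $N_1\cup N_2$; Lemma~\ref{best} and truthfulness pin down that the allocations are unchanged after scaling, and then the deviation from the scaled $\hat{\mathbf{v}}^*$ to the scaled $\mathbf{v}^*$ is strictly profitable. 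The paper's scaling trick is shorter and avoids the transition/limit bookkeeping, while your approach is a clean connectedness argument that makes transparent why the $N$-part cannot influence the $E$-part once ties are ruled out. One small tightening: when you invoke truthfulness at the transition you should include the fixed $N_2$-part $R$ in the bundles (it cancels, so the omission is harmless), and your ``transition points'' are really the finitely many $t$ at which the $\mathcal{O}_1$-argmax is not unique rather than all $t$ with $v_1^t\notin\mathcal{V}^{\neq}_{\! m}$, but your argument only uses uniqueness of that argmax, so this is just a matter of phrasing.
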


The lemma states that assuming strict preferences over all subsets, the allocation of $E_1 \cup E_2$ does not depend on the values of either player for the items in $ N_1 \cup  N_2$. What allows this separation is the complete lack of ties in the restricted profile space.

Without loss of generality we may assume that $E_1 \cup E_2=[\ell]$. We can define a mechanism $\mathcal{X}_E$ for allocating the items of $[\ell]$ to two players with valuation profiles in $\mathcal{V}^{\neq}_{\! \ell}$ as
\[\mathcal{X}_E(\mathbf{v}) = (X_1^{E_1 \cup E_2}(\mathbf{v}'), X_2^{E_1 \cup E_2}(\mathbf{v}')), \text{ for every } \mathbf{v}\in \mathcal{V}^{\neq}_{\! \ell},\]
where $\mathbf{v}'$ is any profile in $\mathcal{V}^{\neq}_{\! m}$  with  $v_{ij}=v'_{ij}$ for all $i\in \{1, 2\}$ and $j\in [\ell]$. This new mechanism is just the projection of $\mathcal{X}$ on $E_1 \cup E_2$ restricted on a domain where it is well-defined. The truthfulness of $\mathcal{X}_E$ on $\mathcal{V}^{\neq}_{\! \ell}$ follows directly from the truthfulness of $\mathcal{X}$ on $\mathcal{V}^{\neq}_{\! m}$. Moreover, it is easy to see that player $i$ controls $E_i$ with respect to $\mathcal{X}_E$, for $i\in\{1, 2\}$.

The plan is to study $\mathcal{X}_E$ instead of $\mathcal{X}$, show that $\mathcal{X}_E$ is an exchange mechanism, and finally sew the two parts of $\mathcal{X}$ back together and show that everything works properly for any profile in  $\mathcal{V}_{\! m}$. One issue here is that maybe the set of feasible exchanges with respect to $\mathcal{X}_E$ is greatly reduced, in comparison to the set of feasible exchanges with respect to $\mathcal{X}$, because of the restriction on the domain. In such a case, it will not be possible to argue about exchanges in $D$ that are not feasible anymore. 
It turns out that this is not the case; the set of possible allocations (of $E_1 \cup E_2$) is the same, whether we consider profiles in $\mathcal{V}_{\! m}$ or in $\mathcal{V}^{\neq}_{\! m}$.

\begin{lemma}\label{strict}
For every profile $\mathbf{v}\in \mathcal{V}_{\! m}$ there exists a profile $\mathbf{v}' \in \mathcal{V}^{\neq}_{\! m}$ such that $\mathcal X(\mathbf{v})= \mathcal X(\mathbf{v}')$.
\end{lemma}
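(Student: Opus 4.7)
The plan is to build $\mathbf{v}'$ from $\mathbf{v}$ by a two-stage perturbation, changing one player's valuation at a time while preserving the allocation $\mathcal{X}(\mathbf{v})$. Let $(X_1, X_2) = \mathcal{X}(\mathbf{v})$. In the first stage I would construct $v_1'$ by nudging values up on items of $X_1$ and down on items of $X_2$; a convenient concrete choice is $v_{1j}' = v_{1j} + \epsilon^j$ for $j \in X_1$ and $v_{1j}' = v_{1j} - \epsilon^j$ for $j \in X_2$. For every sufficiently small $\epsilon > 0$ outside a finite exceptional set, this $v_1'$ lies in $\mathbb{R}^m_+$ and induces a strict ordering on $2^M$, because any equation $v_1'(S) = v_1'(T)$ with $S \neq T$ becomes a nontrivial polynomial equation in $\epsilon$ (one checks that the coefficients $c_j$ of $\epsilon^j$ satisfy $c_j \neq 0$ exactly when $j \in S \triangle T$), which has only finitely many roots. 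Fixing such an $\epsilon$, I collect only finitely many bad conditions, so a generic small $\epsilon$ works.

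The core step is to show that $\mathcal{X}(v_1', v_2) = (X_1, X_2)$. Suppose toward contradiction that $X_1(v_1', v_2) = B \neq X_1$, and write $B = (X_1 \mysetminus R) \cup S$ for some $R \subseteq X_1$, $S \subseteq X_2$ with $R \cup S \neq \emptyset$. Applying truthfulness to player $1$ with true valuation $v_1$ and report $v_1'$ gives $v_1(X_1) \geq v_1(B)$, i.e.\ $v_1(R) \geq v_1(S)$. Applying it with true valuation $v_1'$ and report $v_1$ gives $v_1'(B) \geq v_1'(X_1)$; expanding both sides using the definition of $v_1'$ produces $v_1(S) - v_1(R) \geq \sum_{j \in R \cup S} \epsilon^j > 0$, contradicting the previous inequality. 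Hence $X_1(v_1', v_2) = X_1$.

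In the second stage I would apply the same perturbation argument to player $2$ starting from $(v_1', v_2)$, obtaining $v_2'$ which also induces strict preferences on $2^M$ and preserves the allocation, so that $\mathcal{X}(v_1', v_2') = (X_1, X_2)$. Then $\mathbf{v}' = (v_1', v_2')$ lies in $\mathcal{V}^{\neq}_{\!m}$ and satisfies $\mathcal{X}(\mathbf{v}') = \mathcal{X}(\mathbf{v})$.

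The main obstacle is calibrating the perturbation correctly: it must be small enough that the inequality $v_1(R) \geq v_1(S)$ coming from truthfulness at $\mathbf{v}$ can be contradicted only by the strict gain produced by the perturbation, yet generic enough to kill all subset-ties simultaneously. The coordinate-wise choice $\pm \epsilon^j$ threads both needles, since it is arbitrarily small while the resulting subset sums are polynomials in $\epsilon$ whose constant terms disagree generically whenever two subsets differ.
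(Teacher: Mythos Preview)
Your argument is correct and follows the same two-stage template as the paper: change one player's valuation at a time and use truthfulness in both directions to pin the allocation. The only difference is in the concrete choice of the new valuation. The paper does not perturb $v_i$; it replaces it outright by the crude ``indicator'' valuation $v'_{ix}=m$ for $x\in X_i(\mathbf{v})$ and $v'_{ix}=1$ otherwise (then adds $2^j/2^{\kappa}$ as in Remark~\ref{rem:notequal} to kill ties). With that choice, the two truthfulness inequalities immediately force $X_i(\mathbf{v}^*)=X_i(\mathbf{v})$ without any $R,S$ bookkeeping: one direction gives $X_i(\mathbf{v})\subseteq X_i(\mathbf{v}^*)$ because each lost item is worth $m$ while all of the complementary side is worth at most $m-1$, and the other direction, combined with strict positivity, rules out a strict superset. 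Your $\pm\epsilon^j$ perturbation achieves the same conclusion via the inequalities $v_1(R)\ge v_1(S)$ and $v_1(S)-v_1(R)\ge\sum_{j\in R\cup S}\epsilon^j>0$; note that this contradiction holds for every $\epsilon>0$, so the ``small enough'' requirement is needed only to keep $v'_{1j}>0$ on $X_2$, not for the contradiction itself (your last paragraph slightly overstates the role of smallness). Either route works; the paper's is a bit blunter and avoids the polynomial-genericity discussion.
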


In particular, the set of feasible exchanges on $E_1 \cup E_2$ is exactly the same for $\mathcal{X}$ and $\mathcal{X}_E$, and 
thus we will utilize the following set of exchanges.
\[D=\{(S, T) \ | \ (S, T) \text{ is a minimal feasible exchange with respect to $\mathcal{X}_E$} \} \,.\]

\subsubsection{Cracking the Exchange Component}
In the attempt to show that $\mathcal{X}_E$ is an exchange mechanism, the first step is to show that $D$ is indeed a valid set of exchange deals. 

\begin{lemma}\label{lem:proper}
$D$ is a valid set of exchange deals on $(E_1, E_2)$.
\end{lemma}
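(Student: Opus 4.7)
The statement has three components to verify: each $(S, T) \in D$ satisfies $S \subseteq E_1$ and $T \subseteq E_2$ (immediate from the definition of feasible exchange), both components are nonempty, and across distinct pairs in $D$ the $E_1$-components are pairwise disjoint, as are the $E_2$-components.

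For nonemptiness, my plan leverages the fact that player $i$ controls $E_i$ with respect to $\mathcal{X}_E$. This control follows because $E_i$ is contained in every maximal controlled set $S \in \mathcal{A}_i$, so by the Control Lemma (Lemma~\ref{lem:control}) player $i$ controls $E_i$ under $\mathcal{X}$. Lemma~\ref{indie} then transfers this to $\mathcal{X}_E$: whenever player $i$ strongly desires $E_i$ in the reduced setting, we can extend the profile to $\mathcal{V}^{\neq}_{\! m}$ with sufficiently small values on $N_1 \cup N_2$ so that strong desire persists in the original sense, forcing $E_i \subseteq X_i^{E_1 \cup E_2}$. Now, if a feasible exchange of the form $(S, \emptyset)$ with $S \neq \emptyset$ existed, realized by a profile $\mathbf{w}$, then player 1 could deviate to a generic report that strongly desires $E_1$ and, by control, receive all of $E_1$. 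Since valuations are strictly positive, $w_1(E_1) > w_1(E_1 \setminus S)$, giving a strict improvement that contradicts truthfulness at $\mathbf{w}$. A symmetric argument excludes $(\emptyset, T)$ with $T \neq \emptyset$. The degenerate pair $(\emptyset, \emptyset)$ corresponds to $I = \emptyset$ in the picking-exchange description and is excluded from $D$ by convention.

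For pairwise disjointness, I would argue by contradiction. Suppose $(S, T), (S', T') \in D$ are distinct with $S \cap S' \neq \emptyset$; the case $T \cap T' \neq \emptyset$ is symmetric, swapping the roles of the two players. Fix profiles $\mathbf{w}, \mathbf{w}' \in \mathcal{V}^{\neq}_{\! \ell}$ realizing these two exchanges. The plan is to construct a profile $\mathbf{w}^*$ whose output corresponds to a feasible exchange $(S^*, T^*)$ with $S^* \subsetneq S$ or $S^* \subsetneq S'$, contradicting the minimality of whichever of the two has its first component minimally exchangeable. One starts from $\mathbf{w}$ and modifies player 1's report by sharply boosting the value of a chosen item $x_0 \in S \cap S'$, so that player 1 strongly desires $\{x_0\}$. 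By singleton control (Corollary~\ref{lem:controlsin}) applied to player 1, $x_0$ must remain with player 1 under $\mathbf{w}^*$, hence $x_0 \notin S^*$. Chaining truthfulness inequalities for player 1 at $\mathbf{w}$ and $\mathbf{w}^*$ with the strictness of preferences in $\mathcal{V}^{\neq}_{\! \ell}$ then forces $(S^*, T^*)$ to be a specific alternative exchange contained in $(S, T)$, producing the desired strict inclusion. If $S$ is not minimally exchangeable for $(S, T) \in D$, then $T$ is, and an analogous argument on the $T$-side (using an item of $T \cap T'$ that must be nonempty in that sub-case) derives the contradiction.

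The main obstacle is the construction of $\mathbf{w}^*$ so that the resulting exchange's first component is not merely different from $S$ (which is easy---via the singleton-control trick above) but strictly contained in $S$. Bridging this gap requires careful chaining of truthfulness inequalities across multiple hybrid profiles, and exploiting that distinct exchanges give distinct utilities under profiles in $\mathcal{V}^{\neq}_{\! \ell}$ to rule out spurious alternatives. A further subtlety is staying inside $\mathcal{V}^{\neq}_{\! \ell}$ during the perturbation, which can be arranged through generic small adjustments together with Lemma~\ref{strict}. The edge case $S = S'$, $T \neq T'$ must be treated separately by observing that it reduces (through the same style of argument) to the non-disjointness of the $T$-components, and hence to the symmetric analysis.
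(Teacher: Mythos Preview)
Your nonemptiness argument is fine and matches the paper's one-line justification. The disjointness argument, however, has genuine gaps that go beyond sketch-level omissions.

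First, the ``main obstacle'' you flag is not resolved by the ingredients you list. Boosting the value of a single item $x_0\in S\cap S'$ and invoking singleton control only yields $x_0\notin S^*$ at the modified profile; it gives you no leverage at all on the shape of $S^*$. Truthfulness at $\mathbf{w}$ and $\mathbf{w}^*$ compares total values, so even in $\mathcal{V}^{\neq}_{\!\ell}$ it cannot by itself force the new output to lie inside the old one. The paper needs long chains of carefully designed profiles (see the proofs of Lemmata~\ref{lem:fes}--\ref{lem:inter}) precisely because a single boost-and-compare step does not pin down the resulting allocation.

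Second, your case analysis is incomplete in a way that hides most of the work. You never establish uniqueness of the exchange partner of a minimally exchangeable set (Lemma~\ref{lem:fes}), nor that this partner is itself minimally exchangeable (Lemma~\ref{lem:min}). Without these, $D$ may contain pairs $(S,T)$ and $(S,T')$ with $T\neq T'$, or pairs where only one coordinate is minimal, and your ``edge case $S=S'$, $T\neq T'$'' does not reduce to anything you have proved. In particular, your claim that ``$T\cap T'$ must be nonempty in that sub-case'' is unjustified: if $S$ is not minimally exchangeable and $S'$ is, nothing you have said connects $T$ and $T'$.

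The paper's route is to first prove uniqueness of the partner (Lemma~\ref{lem:fes}), then minimality of the partner (Lemma~\ref{lem:min}), and finally that any exchangeable set intersecting a minimally exchangeable one must contain it (Lemma~\ref{lem:inter}); disjointness of the $S_i$'s and $T_i$'s then follows immediately from minimality. Each of these three steps requires its own sequence of tailored profiles, and your proposal does not supply a substitute for any of them.
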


The above lemma involves three main steps. First we show that each minimally exchangeable set is involved in exactly one exchange deal. Then, we guarantee that minimally exchangeable sets can be exchanged only with minimally exchangeable sets, and finally, we show that minimally exchangeable sets are always disjoint.
There is a common underlying idea in the proofs of these steps: whenever there exist two feasible exchanges that overlap in any way, we can construct a profile where both of them are favorable but the two players disagree on which of them is best. On a high level, each player can ``block'' his least favorable of the conflicting exchanges, and this leads to violation of truthfulness.


Lemma \ref{lem:proper} implies that every  exchangeable set  $S\subseteq E_1$ can be decomposed as $S=W\cup \bigcup_{i \in I} S_i$, where  $W=S \mysetminus \bigcup_{i \in I} S_i$ does not contain any minimally exchangeable sets. 
Ideally, we would like two things. First, the set $W$  in the above decomposition to always be empty, i.e., every  exchangeable set should  be a union of minimally exchangeable sets.
Second, we want every union of minimally exchangeable subsets of $E_1$ to be exchangeable only with the corresponding union of minimally exchangeable subsets of $E_2$, and vice versa. 
It takes several lemmas and a rather involved induction to prove those. A key ingredient of the inductive step is a carefully constructed argument about the value that each player must gain from any exchange
(see also Lemma \ref{lem:values} in Appendix \ref{app:characterization}). 

\begin{lemma}\label{lem:umes}
For every exchangeable set $S\subseteq E_1$, there exists some $I\subseteq[k]$ such that $S=\bigcup_{i \in I}S_i$. Moreover, $S$ is  exchangeable with $T=\bigcup_{i \in I}T_i$ and only with $T$.
\end{lemma}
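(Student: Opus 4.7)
The plan is to prove both assertions of the lemma simultaneously by strong induction on $|S|$, with inductive hypothesis that every exchangeable subset of $E_1$ of strictly smaller size admits the claimed structure.

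For the base case, observe that any exchangeable set contains a minimally exchangeable subset, obtained by iteratively passing to a proper exchangeable subset until none exists. Thus if $|S|$ is minimum among sizes of exchangeable sets, then $S$ is itself minimally exchangeable, so $S=S_j$ for some $j\in[k]$. Exchangeability of $S_j$ with $T_j$ follows from $(S_j,T_j)\in D$; conversely, if $S_j$ were exchangeable with some $T'\neq T_j$, then $(S_j,T')$ would be a minimal feasible exchange and hence belong to $D$, contradicting the pairwise disjointness of the first components of deals in $D$ ensured by Lemma \ref{lem:proper}.

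For the inductive step, fix $S$ exchangeable with $T$, witnessed by a profile $\mathbf{v}$ yielding $X_1^{E_1\cup E_2}(\mathbf{v})=(E_1\mysetminus S)\cup T$, and pick any minimally exchangeable $S_j\subsetneq S$. I aim to establish two intermediate claims: (a) $T_j\subseteq T$, and (b) $S\mysetminus S_j$ is exchangeable with $T\mysetminus T_j$. Once these are in hand, the inductive hypothesis applied to $S\mysetminus S_j$ yields a unique $I'\subseteq[k]\mysetminus\{j\}$ with $S\mysetminus S_j=\bigcup_{i\in I'}S_i$ and $T\mysetminus T_j=\bigcup_{i\in I'}T_i$; setting $I=I'\cup\{j\}$ then produces the desired decomposition, and---because the argument applies to every $T$ witnessing the exchangeability of $S$---also its uniqueness.

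The main obstacle is establishing claims (a) and (b), where the argument becomes genuinely delicate. The strategy is to perturb $\mathbf{v}$ into a new profile $\mathbf{v}^*$ that sharply inflates player 1's values on items of $S_j$, so that keeping $S_j$ becomes essential for him, while calibrating the remaining valuations so that none of the other exchanges active under $\mathbf{v}$ is disturbed. Truthfulness between $\mathbf{v}$ and $\mathbf{v}^*$ then forbids either player from being able to profitably misreport one profile as the other, which restricts the possible outputs of the mechanism on $\mathbf{v}^*$. Combining these truthfulness constraints with the value restrictions furnished by Lemma \ref{lem:values}---which quantifies the valuation shifts each player must undergo for a deal to be performed---pins down the mechanism's output on $\mathbf{v}^*$ to be exactly $(E_1\mysetminus(S\mysetminus S_j))\cup(T\mysetminus T_j)$, proving (b). A symmetric construction from player 2's side, inflating his valuations on items of $T_j$, yields the inclusion $T_j\subseteq T$ needed for (a), completing the inductive step.
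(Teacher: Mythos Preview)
Your overall induction scheme is reasonable, and the paper's argument does ultimately have an inductive core. But the paper organizes it differently---it first proves that every union of minimally exchangeable sets (\emph{umes}) is \emph{nice}, i.e., paired uniquely with the corresponding union on the other side (Lemma~\ref{lem:good}, by induction on the number of $S_i$'s involved), and only afterwards shows that every exchangeable set is a umes (Lemma~\ref{lem:change}, by a minimal-counterexample argument). The separation matters: Lemma~\ref{lem:good} is a ``constructive'' statement (given the $S_i$'s, their union is feasibly exchanged with the union of the $T_i$'s), and this is what makes Lemma~\ref{lem:values} applicable in the proof of Lemma~\ref{lem:change}.

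Your steps (a) and (b) are where the argument breaks down. For (b) you want to \emph{remove} the deal $(S_j,T_j)$ from the witnessed exchange $(S,T)$ by inflating player~1's values on $S_j$ and then appeal to truthfulness together with Lemma~\ref{lem:values} to ``pin down'' the new output as $(E_1\mysetminus(S\mysetminus S_j))\cup(T\mysetminus T_j)$. But truthfulness between $\mathbf{v}$ and $\mathbf{v}^*$ only yields the inequality $v_1(T)-v_1(S)\ge v_1(T'')-v_1(S'')$ for whatever exchange $(S'',T'')$ is actually performed at $\mathbf{v}^*$; this is far from identifying $(S'',T'')$. Lemma~\ref{lem:values} does not help here: its hypothesis requires the set in question to already be a nice umes whose exchangeable subsets are nice umes, which your inductive hypothesis gives only for sets of size strictly less than $|S|$. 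You cannot apply it to $S$ itself, nor do you have any a~priori bound forcing $|S''|<|S|$. And nothing in your sketch rules out $S''$ spilling outside $S$, or $T''$ picking up items outside $T$. The paper's proofs of Lemmata~\ref{lem:good} and~\ref{lem:change} each require a sequence of seven to nine carefully designed profiles precisely because a single perturbation plus one truthfulness inequality is never enough to nail down the allocation.

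The situation for (a) is worse: inflating player~2's values on $T_j$ guarantees $T_j\subseteq X_2$ at the \emph{new} profile, but says nothing about whether $T_j\subseteq T$ at the original one. You would need a genuine contradiction argument (and in fact $T_j\subseteq T$ is exactly the kind of conclusion that only falls out once both Lemmata~\ref{lem:good} and~\ref{lem:change} are in hand). In short, the plan is not wrong in spirit, but (a) and (b) are the entire content of the lemma, and your sketch does not supply a workable mechanism for either.
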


Finally, we have all the ingredients to fully describe $\mathcal{X}_E$ as an exchange mechanism on $E_1 \cup E_2$ and set of exchange deals $D$. 

\begin{lemma}\label{lem:favor}
Given any profile $\mathbf{v}\in \mathcal{V}^{\neq}_{\! \ell}$, each exchange in $D$ is  performed if and only if it is favorable, i.e., 
$X_1^{E_1\cup E_2}(\mathbf{v})= \left( E_1\mysetminus \bigcup_{i\in I} S_i\right)\cup \bigcup_{i\in I} T_i$, where $I\subseteq [k]$ contains exactly the indices of all favorable exchange deals in $D$. 
\end{lemma}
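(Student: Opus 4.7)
The plan is to prove the equivalence $J(\mathbf v) = I(\mathbf v)$ via the two inclusions, where $J(\mathbf v)\subseteq[k]$ is the unique index set such that $X_1^{E_1\cup E_2}(\mathbf v) = (E_1 \setminus \bigcup_{j\in J(\mathbf v)} S_j) \cup \bigcup_{j\in J(\mathbf v)} T_j$ (well-defined by Lemma~\ref{lem:umes}), and $I(\mathbf v)\subseteq[k]$ is the set of favorable exchange indices. The main tool is the combination of truthfulness with the control structure: since each player $i$ controls $E_i$ under $\mathcal{X}_E$ (as noted after Lemma~\ref{strict}), Lemma~\ref{lem:control} extends this to control of every subset of $E_i$. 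With additivity, truthfulness at two profiles differing only in one player's valuation translates into a symmetric-difference inequality on the corresponding index sets.

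For the direction $J(\mathbf v) \subseteq I(\mathbf v)$, suppose for contradiction that some $i_0 \in J(\mathbf v)$ corresponds to an unfavorable deal. The unfavorability condition is symmetric between the two players, so WLOG $v_1(T_{i_0}) < v_1(S_{i_0})$. I would construct a deviation $v_1'$ for player 1 in which he strongly desires $S_{i_0}$ (for instance, scale $v_1$ on $S_{i_0}$ by a sufficiently large constant and perturb slightly to remain in $\mathcal{V}^{\neq}_{\! \ell}$). By control of $S_{i_0} \subseteq E_1$ and Lemma~\ref{lem:control}, $S_{i_0} \subseteq X_1(v_1', v_2)$, hence $i_0 \notin J(v_1', v_2) =: J'$. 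Applying truthfulness at both $\mathbf v$ and $(v_1', v_2)$ and invoking additivity, with $d_j = v_1(T_j) - v_1(S_j)$ and $d_j' = v_1'(T_j) - v_1'(S_j)$, yields a pair of inequalities involving only the indices in $J \setminus J'$ and $J' \setminus J$. The target is to collapse these to $d_{i_0} \geq 0$, contradicting the assumption.

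For the reverse inclusion $I(\mathbf v) \subseteq J(\mathbf v)$, the argument is dual and uses player 2's control over $E_2$. If $i_0 \in I(\mathbf v) \setminus J(\mathbf v)$, then player 2's bundle contains $T_{i_0}$ even though $v_2(S_{i_0}) > v_2(T_{i_0})$ by favorability. A dual construction of a deviation (on player 2's side, pushing the mechanism to include $i_0$ in the performed set via control of an appropriate subset of $E_2$), together with the analogous truthfulness/additivity accounting, then contradicts $v_2(S_{i_0}) > v_2(T_{i_0})$.

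The principal technical obstacle is localization: the naive single-index rescaling $v_1 \to v_1'$ keeps $d_j' = d_j$ for $j \neq i_0$, but the mechanism's response $J'$ may in principle differ from $J$ at several indices, in which case the two truthfulness inequalities end up merely mutually consistent rather than contradictory. This is precisely where the rigidity established by Lemma~\ref{lem:umes} is decisive: each exchangeable union has a \emph{unique} exchange partner, which constrains the subsets $J'$ that can actually arise as the mechanism's response to a deviation. Using this constraint one can pin down that $J' = J \setminus \{i_0\}$ under the right choice of $v_1'$, upon which the truthfulness bookkeeping collapses to the desired single inequality on $d_{i_0}$ (or on $e_{i_0} = v_2(S_{i_0}) - v_2(T_{i_0})$ in the dual direction), and both contradictions go through.
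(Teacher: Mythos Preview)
Your setup is right---Lemma~\ref{lem:umes} does guarantee a well-defined index set $J(\mathbf v)$, and the favorable/unfavorable split is the correct target---but both directions contain real gaps that your sketch does not close.

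\textbf{On $J(\mathbf v)\subseteq I(\mathbf v)$.} You correctly identify the localization obstacle: after rescaling $v_1$ on $S_{i_0}$, the two truthfulness inequalities collapse only to $d_{i_0}\ge d_{i_0}'$, which is vacuous. You then claim that Lemma~\ref{lem:umes} lets you pin down $J'=J\setminus\{i_0\}$ ``under the right choice of $v_1'$''. It does not. Lemma~\ref{lem:umes} only says that \emph{every} allocation of $E_1\cup E_2$ is indexed by \emph{some} $J'\subseteq[k]$; it says nothing about which $J'$ the mechanism selects under a particular deviation. Control of $S_{i_0}$ (or even of $E_1\setminus\bigcup_{j\in J\setminus\{i_0\}}S_j$) only forces $J'\subseteq J\setminus\{i_0\}$, and the remaining truthfulness bookkeeping is again consistent with any such $J'$. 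A one-sided deviation of player~1 alone cannot isolate the single index $i_0$.

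\textbf{On $I(\mathbf v)\subseteq J(\mathbf v)$.} The dual you propose is not actually dual. Player~2's control of $E_2$ means player~2 can always \emph{keep} any subset of $E_2$; this lets him \emph{block} an exchange (by retaining $T_{i_0}$), not \emph{force} one. To push $i_0$ into the performed set you would need player~2 to secure $S_{i_0}\subseteq E_1$, which he does not control. So the control argument does not produce the deviation you need in this direction.

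What the paper does instead is invoke Lemma~\ref{lem:values} (together with Lemma~\ref{lem:good}): for any nice umes whose minimal constituents are all favorable, each player's gain is bounded below by the value of the corresponding full exchange. This value lower bound is exactly the localization device you are missing---it simultaneously handles all indices in the favorable set without needing to pin down the mechanism's response index-by-index. The paper then changes $v_2$ to freeze the current allocation, applies Lemma~\ref{lem:values} at the new profile to a carefully chosen sub-umes (the performed deals that are good for player~1), and obtains a strict improvement over the actual allocation, contradicting truthfulness. Your argument would need either Lemma~\ref{lem:values} or an equivalent two-sided construction; a single-player control deviation plus Lemma~\ref{lem:umes} is not enough.
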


\subsubsection{Putting the Mechanism Back Together}

As a result of Lemma \ref{lem:favor} (combined, of course, with Lemmata \ref{best} and \ref{indie}), the characterization is complete for truthful mechanisms defined on $\mathcal{V}^{\neq}_{\! m}$. For general additive valuation functions, however, we need a little more work. This is to counterbalance the fact that  in the presence of ties the allocations of $N_1 \cup N_2$ and $E_1 \cup E_2$ may not be independent.

By Lemmata \ref{strict} and \ref{lem:umes}, we know that for any $\mathbf{v}\in \mathcal{V}_{\! m}$, $X_1^{E_1\cup E_2}(\mathbf{v})$  is the result of some exchanges of $D$ taking place. There are two things that can go wrong: $\mathcal{X}$ performs an unfavorable exchange, or it does not perform a favorable one. In either of these cases it is possible to construct some profile in $\mathcal{V}^{\neq}_{\! m}$ that leads to contradiction. Hence we have the following lemma.

\begin{lemma}\label{lem:final}
Given any profile $\mathbf{v}\in \mathcal{V}_{\! m}$, $X_1^{E_1\cup E_2}(\mathbf{v})= \left( E_1\mysetminus \bigcup_{i\in I} S_i\right)\cup \bigcup_{i\in I} T_i$, where $I\subseteq [k]$ contains the indices of all favorable exchange deals in $D$, but no indices of unfavorable exchange deals.
\end{lemma}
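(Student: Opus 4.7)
The plan is to leverage the generic-profile characterization in Lemma \ref{lem:favor} together with Lemma \ref{strict}, and then rule out the two remaining failure modes (performing an unfavorable exchange, or skipping a favorable one) by reducing each to a contradiction inside $\mathcal{V}^{\neq}_{\! m}$.

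For any $\mathbf{v}\in\mathcal{V}_{\! m}$, Lemma \ref{strict} produces some $\tilde{\mathbf{v}}\in\mathcal{V}^{\neq}_{\! m}$ with $\mathcal{X}(\mathbf{v})=\mathcal{X}(\tilde{\mathbf{v}})$; then Lemma \ref{lem:favor} applied to the projected mechanism $\mathcal{X}_E$ at $\tilde{\mathbf{v}}$ immediately yields $X_1^{E_1\cup E_2}(\mathbf{v}) = \bigl(E_1\mysetminus\bigcup_{j\in I}S_j\bigr)\cup\bigcup_{j\in I}T_j$ for some $I\subseteq[k]$, namely the favorable indices at $\tilde{\mathbf{v}}$. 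What remains is to show that $I$ contains every favorable index at $\mathbf{v}$ itself and no unfavorable one.

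For the ``no unfavorable'' direction, suppose for contradiction that $j\in I$ while $(S_j,T_j)$ is unfavorable at $\mathbf{v}$; by symmetry we may assume $v_1(T_j)<v_1(S_j)$, and set $\delta=v_1(S_j)-v_1(T_j)>0$. Since $\mathcal{V}^{\neq}_{\! m}$ is dense in $\mathcal{V}_{\! m}$ and this inequality is strict, we can choose a perturbation $\mathbf{v}^\ast\in\mathcal{V}^{\neq}_{\! m}$ close enough to $\mathbf{v}$ that $v_1^\ast(T_j)<v_1^\ast(S_j)$ persists. By Lemma \ref{lem:favor}, exchange $j$ is then not performed at $\mathbf{v}^\ast$, so $X_1(\mathbf{v}^\ast)$ contains $S_j$ and not $T_j$ in its $E_1\cup E_2$ part. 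We then chain truthfulness through the intermediate profile $(v_1^\ast,v_2)$---reapplying Lemma \ref{strict} to handle any ties in $v_2$---to show that player 1, whose true type is $v_1$, strictly gains from reporting $v_1^\ast$ by at least $\delta$, violating truthfulness of $\mathcal{X}$. The ``every favorable in $I$'' direction is symmetric: a small perturbation preserves both strict inequalities defining favorability, Lemma \ref{lem:favor} forces $j\in I^\ast$, and the parallel truthfulness chain---now exploiting whichever player strictly prefers the swap---produces a profitable deviation and the desired contradiction.

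The main obstacle is controlling the side-effects of this deviation: one must prevent the tie-breaking at $(v_1^\ast,v_2)$ or at $\mathbf{v}^\ast$ from reshuffling either the picking part or other exchange deals in a way that cancels the gain $\delta$. To do so, we insist that $\mathbf{v}^\ast$ is so close to $\mathbf{v}$ that every choice whose status (favorable, unfavorable, or strict selection in a picking offer) is already strict at $\mathbf{v}$ retains that status at $\mathbf{v}^\ast$; Lemmata \ref{best}, \ref{indie}, \ref{lem:umes} and \ref{lem:favor} then pin down $X_1^{E_1\cup E_2}(\mathbf{v}^\ast)$ as differing from $X_1^{E_1\cup E_2}(\mathbf{v})$ essentially only by the undoing of exchange $j$, so that the net change in player 1's utility under $v_1$ is at least $\delta$, giving the sought contradiction.
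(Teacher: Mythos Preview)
Your high-level plan---use Lemmata \ref{strict} and \ref{lem:umes} to get the exchange form, then rule out the two failure modes via truthfulness---is exactly the paper's strategy. The execution, however, has a real gap.

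The problem is at the unilateral deviation $(v_1^\ast, v_2)$. You need to know that exchange $j$ is \emph{not} performed there (so that player~1 gains $\delta$ under $v_1$), but you have no tool that tells you this. Lemma~\ref{lem:favor} applies only on $\mathcal{V}^{\neq}$, and $v_2$ may still have ties; invoking ``unfavorable exchanges are not performed'' at $(v_1^\ast, v_2)\in\mathcal{V}_{\! m}$ is precisely the statement you are trying to prove. Your suggestion to ``reapply Lemma~\ref{strict}'' does not help: that lemma only guarantees the \emph{existence} of some generic profile with the same allocation, it does not compute the allocation. Similarly, the closeness of $v_1^\ast$ to $v_1$ controls only those deals and offers whose status is already strict at $\mathbf{v}$; any deal that is neither favorable nor unfavorable at $\mathbf{v}$ (and there may be several, since we are in $\mathcal{V}_{\! m}$) could be resolved either way at $(v_1^\ast, v_2)$, possibly wiping out the $\delta$ gain.

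The paper avoids this circularity by a two-step deviation rather than a perturbation. First it replaces $v_2$ by an explicit $v_2^{\textsc{i}}$ that (i) lies in $\mathcal{V}^{\neq}$ and (ii) concentrates value exactly on $X_2(\mathbf{v})$, so that truthfulness for player~2 forces $X(\mathbf{v}')=X(\mathbf{v})$ at $\mathbf{v}'=(v_1,v_2^{\textsc{i}})$. Only then does it replace $v_1$ by an explicit $v_1^{\textsc{i}}$ concentrating value on $(X_1(\mathbf{v})\cup S_j)\setminus T_j$. Now $\mathbf{v}''=(v_1^{\textsc{i}},v_2^{\textsc{i}})\in\mathcal{V}^{\neq}_{\! m}$, the construction makes the favorable set at $\mathbf{v}''$ exactly $I\setminus\{j\}$, and Lemma~\ref{lem:favor} (via Lemma~\ref{indie}) pins down $X_1(\mathbf{v}'')$ precisely; player~1's deviation from $\mathbf{v}'$ to $\mathbf{v}''$ then gains exactly $v_1(S_j)-v_1(T_j)>0$. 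The ``missing favorable'' case is handled analogously with specific profiles $Q,Q'$. The point is that the intermediate report is engineered, not perturbed, so that both endpoints of the deviating player's move have fully determined allocations.
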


Clearly, Lemma \ref{lem:final}, together with Lemma \ref{best} concludes the proof of Theorem \ref{thm:truthful-->p/e_mech}.

\subsection{Immediate Implications of Theorem \ref{thm:truthful-->p/e_mech}}
As mentioned in Section \ref{subsec:relwork}, there are several works characterizing truthful mechanisms in combination with other notions, such as Pareto efficiency, nonbossiness, and neutrality (these results are usually for unrestricted, not necessarily additive valuations).
Pareto efficiency means that there is no other allocation where one player strictly improves and none of the others are worse-off.
Non-bossiness means that a player cannot affect the outcome of the mechanism without changing his own bundle of items.
Finally, neutrality refers to a mechanism being consistent with a permutation on the items, i.e., permuting the items results in the corresponding permuted allocation. 

Although such notions are not our main focus, the purpose of this short discussion is twofold. On one hand, we illustrate how our characterization immediately implies a characterization for mechanisms that satisfy these extra properties under additive valuations, and on the other hand we see how these properties are either incompatible with fairness or irrelevant in our context.

To begin with, nonbossiness comes for free in our case, since we have two players and all the items must be allocated. Neutrality and Pareto efficiency, however, greatly reduce the space of available mechanisms. Note that it makes more sense to study neutral mechanisms  when the valuation functions induce a strict preference order over all sets of items. 

\begin{corollary}
Every neutral, truthful mechanism $\mathcal{X}$  on $\mathcal{V}^{\neq}_{\! m}$ 
can be implemented as a picking-exchange mechanism, such that
\begin{enumerate}
\item there exists $i\in \{1, 2\}$ such that $E_i=[m]$, \emph{or}
\item there exists $i\in \{1, 2\}$ such that $N_i=[m]$ and  $\mathcal{O}_i=\{S\subseteq [m] \ | \ |S|=\kappa\}$ for some  $\kappa< m$.
\end{enumerate}
\end{corollary}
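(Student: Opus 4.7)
The plan is to apply Theorem~\ref{thm:truthful-->p/e_mech} to fix a picking-exchange implementation $(N_1, N_2, E_1, E_2, \mathcal{O}_1, \mathcal{O}_2, D)$ of $\mathcal{X}$, and then use neutrality to collapse the degrees of freedom of this implementation. Throughout, I would work with the partition $(C_1,C_2)$ and the families $\mathcal{A}_1,\mathcal{A}_2$ of maximal controlled sets introduced in Section~\ref{subsec:characterization}; once these are understood, the sets $N_i, E_i, \mathcal{O}_i, D$ are read off directly from them.

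The first step, which does most of the work, is to observe that under neutrality the relation ``player $i$ controls $S$'' is invariant under permutations of $[m]$. Indeed, for any permutation $\pi$ and any profile $w$ in which player $i$ strongly desires $\pi(S)$, relabeling coordinates by $\pi$ produces a profile $v$ in which player $i$ strongly desires $S$; if $i$ controls $S$, then $S\subseteq X_i(v)$, and neutrality of $\mathcal{X}$ transports this to $\pi(S)\subseteq X_i(w)$. Therefore both $\mathcal{A}_i$ and $C_i=\bigcup_{S\in \mathcal{A}_i}S$ are stable under every permutation of $[m]$. Since the only permutation-invariant subsets of $[m]$ are $\emptyset$ and $[m]$, and since $(C_1,C_2)$ is a partition of $[m]$ by Corollary~\ref{cor:part}, exactly one of $C_1,C_2$ equals $[m]$. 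Without loss of generality assume $C_1=[m]$, which forces $N_2=E_2=\emptyset$ and $D=\emptyset$.

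Next I would pin down the shape of $\mathcal{A}_1$. Permutation-invariance means that $\mathcal{A}_1$ is a union of orbits of the symmetric group acting on $2^{[m]}$, and those orbits are precisely the layers $\binom{[m]}{k}$ for $k=0,\ldots,m$. The maximality clause in the definition of $\mathcal{A}_1$ forbids an element of $\mathcal{A}_1$ to be a proper subset of another, so two distinct layers $\binom{[m]}{k}$ and $\binom{[m]}{k'}$ with $k<k'$ cannot both appear in $\mathcal{A}_1$. Hence $\mathcal{A}_1=\binom{[m]}{k}$ for a unique $k$, and the constraint $\bigcup \mathcal{A}_1 = C_1 = [m]$ rules out $k=0$.

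Finally, it remains to translate the value of $k$ into the two advertised cases. If $k=m$, then $\mathcal{A}_1=\{[m]\}$ and $E_1=\bigcap_{S\in \mathcal{A}_1}S=[m]$, so $E_1=[m]$ and we are in case~1. If $1\le k<m$, then $E_1=\bigcap_{S\in \binom{[m]}{k}}S=\emptyset$, which gives $N_1=C_1\setminus E_1=[m]$ and $\mathcal{O}_1=\{S\setminus E_1 : S\in \mathcal{A}_1\}=\binom{[m]}{k}$, putting us in case~2 with $\kappa=k$. The main obstacle is the very first step: one has to verify carefully that the quantifier ``for every profile in which $i$ strongly desires $S$'' really is preserved by neutrality, since a sloppy reading confuses the relabeling of items with the relabeling of players. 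Everything after that reduces to counting orbits of $S_m$ on $2^{[m]}$.
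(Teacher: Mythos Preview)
Your argument is correct. The key observation---that the relation ``player $i$ controls $S$'' is invariant under every permutation of $[m]$ by neutrality, and hence so are $C_i$, $\mathcal{A}_i$, $E_i$, and $\mathcal{O}_i$---immediately forces $(C_1,C_2)=([m],\emptyset)$ or $(\emptyset,[m])$, and then the antichain structure of $\mathcal{A}_i$ together with permutation-invariance pins it down as a single layer $\binom{[m]}{k}$. The case split on $k=m$ versus $1\le k<m$ then reads off the two conclusions cleanly.

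The paper itself defers this proof to the full version, so there is no argument here to compare your approach against. That said, your route---reducing everything to the permutation-invariance of the control relation and then exploiting the orbit structure of $S_m$ on $2^{[m]}$---is exactly the natural one given how the sets $N_i$, $E_i$, $\mathcal{O}_i$ are built from $\mathcal{A}_i$ and $C_i$ in Section~\ref{subsec:characterization}, and it is hard to imagine the intended proof differing in any essential way. One small remark: you might make explicit that $E_2=\emptyset$ forces $D=\emptyset$ because every deal in a valid $D$ must involve a nonempty subset of $E_2$; you state $D=\emptyset$ but do not spell out why.
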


\begin{corollary}
Every Pareto efficient, truthful mechanism $\mathcal{X}$ 
can be implemented as a picking-exchange mechanism, such that
\begin{enumerate}
\item there exists $i\in \{1, 2\}$ such that $E_i=[m]$, \emph{or}
\item there exists $j\in [m]$ such that $E_{i_1}=\{j\}$, $E_{i_2}=[m]\mysetminus \{j\}$, where $\{i_1, i_2\} = \{1, 2\}$, and $D=\{(E_1, E_2)\}$, \emph{or}
\item there exists $i\in \{1, 2\}$ such that $N_i=[m]$ and $\mathcal{O}_i=\{S\subseteq N_i \ | \ |S|=m-1\}$.
\end{enumerate}
\end{corollary}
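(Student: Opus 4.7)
My plan is to apply Theorem~\ref{thm:truthful-->p/e_mech} to represent $\mathcal{X}$ as a picking-exchange mechanism with components $(N_1, N_2, E_1, E_2, \mathcal{O}_1, \mathcal{O}_2, D)$, and then do a case analysis on which of $N_1\cup N_2$ and $E_1\cup E_2$ are nonempty, using Pareto efficiency to rule out every configuration outside cases 1--3. The first case, the \emph{mixed} configuration (both $N_1\cup N_2$ and $E_1\cup E_2$ nonempty), is ruled out as follows: pick $x$ in the picking part and $y$ in the exchange part, and build a profile in $\mathcal{V}^{\neq}_{\! m}$ where player~1 values $y$ very highly and is near-uniform on the remaining items, while player~2 does the opposite on $x$. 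By Lemma~\ref{indie} the picking and exchange components act independently on $\mathcal{V}^{\neq}_{\! m}$, so $x$ ends up with player~1 through his pick from $\mathcal{O}_1$ and $y$ stays with player~2 because no deal in $D$ triggers as favorable; swapping $x$ with $y$ then strictly improves both players, contradicting PE.

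For the \emph{pure exchange} case ($N_1=N_2=\emptyset$): if $D=\emptyset$, the output is constantly $(E_1,E_2)$, and if both $E_1,E_2$ were nonempty a single-item-swap profile would Pareto improve, so one of them is $[m]$ and we land in case~1. If $D\neq\emptyset$, take any $(S_1,T_1)\in D$; whenever $S_1\subsetneq E_1$ or $T_1\subsetneq E_2$ (say the former, with $a\in E_1\mysetminus S_1$), I construct a profile that keeps the deal favorable while making $a$ worthless to player~1 and precious to player~2, and picking some $b$ (either in $E_2\mysetminus T_1$, or within $T_1$ after it has been moved to player~1) with the reverse profile; the post-exchange swap of $a$ with $b$ then Pareto improves, but validity of $D$ prevents the mechanism from reaching that allocation. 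Hence $S_1=E_1$ and $T_1=E_2$, and disjointness of the $S_i$'s forces $D=\{(E_1,E_2)\}$. A final individual-item-swap argument within $(E_1,E_2)$ after the favorable exchange then rules out $|E_1|,|E_2|\geq 2$, forcing one side to be a singleton and yielding case~2.

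For the \emph{pure picking} case ($E_1=E_2=\emptyset$): if both $N_1$ and $N_2$ are nonempty, a mass-swap profile (player~1 near-uniform low on $N_1$ and high on $N_2$; player~2 reversed) leaves each player with items that the opponent values highly, and swapping $N_1$ with $N_2$ Pareto improves, so we may assume $N_1=[m]$ (the $N_2=[m]$ case is symmetric). To pin down $\mathcal{O}_1$ I argue (a) every $(m-1)$-subset $[m]\mysetminus\{k\}$ must belong to $\mathcal{O}_1$, because otherwise on a profile where $k$ is uniquely smallest for player~1 and largest for player~2, player~1 is forced to hand over some $k'\neq k$ whose $v_2$-value is small, and reassigning to $([m]\mysetminus\{k\},\{k\})$ Pareto dominates; and (b) any offer $S\in\mathcal{O}_1$ with $|S|\leq m-2$ is either dominated in every profile by an $(m-1)$-subset already in $\mathcal{O}_1$ (and thus can be removed from the representation without changing any outcome) or is the mechanism's output on some profile, in which case an expansion-plus-swap alternative Pareto dominates it. This gives case~3.

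The hardest step I anticipate is the pure exchange case with $D\neq\emptyset$: one must keep the targeted deal \emph{favorable} while simultaneously engineering a Pareto-dominating swap on the items not involved in the deal, which requires balancing player~2's valuations on $S_1$ and $T_1$ against those on the leftovers so that favorability is preserved but the external swap still strictly beats the mechanism's output. The reason this argument succeeds is that validity of $D$ prevents the mechanism from combining exchange deals arbitrarily, effectively ``freezing'' the leftover items and leaving room for a Pareto-improving alternative that lies outside the mechanism's reach.
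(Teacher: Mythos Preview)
The paper defers this proof to the full version, so there is no in-text argument to compare against.  Your overall strategy---apply Theorem~\ref{thm:truthful-->p/e_mech} and then do a case split on which of $N_1\cup N_2$ and $E_1\cup E_2$ are nonempty---is the natural one, and the pure-exchange analysis is essentially correct.  However, two of your profile constructions do not force the allocation you claim, and as written the argument does not go through.

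\textbf{Mixed case.}  You set player~1 to value $y$ highly and be ``near-uniform'' elsewhere, and then assert that $x$ lands with player~1 via $\mathcal{O}_1$ and that no deal in $D$ fires.  Neither is guaranteed.  Since $v_1$ restricted to $N_1$ is near-uniform, the unique $\argmax$ in $\mathcal{O}_1$ is determined by the perturbations and need not contain $x$.  Worse, if $y\in T_i$ for some $(S_i,T_i)\in D$, your huge $v_{1y}$ can make the deal favorable for player~1, and since player~2 is near-uniform on $E_1\cup E_2$, the deal may well be favorable for him too (e.g.\ whenever $|S_i|>|T_i|$); then $y$ goes to player~1 and your swap evaporates.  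The fix is to pick a specific offer $S\in\mathcal{O}_1$ first, set $v_1|_{N_1}$ so that $S$ is the unique best offer, take $x\in S$, and separately force all deals to be \emph{unfavorable for player~2} by making $v_2$ large on $E_2$ and tiny on $E_1$; then $v_1$ on $E_1\cup E_2$ is free and you can safely make $v_{1y}>v_{1x}$.

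\textbf{Pure picking, step (a).}  You claim that if $[m]\mysetminus\{k\}\notin\mathcal{O}_1$, then on the profile where $k$ is player~1's minimum and player~2's maximum, the reassignment $([m]\mysetminus\{k\},\{k\})$ Pareto dominates.  This fails exactly when the chosen offer $S^*$ does \emph{not} contain $k$ (which is possible: take $\mathcal{O}_1$ the set of all singletons).  In that case $k\in [m]\mysetminus S^*$, so player~2 already holds $k$, and $v_2(\{k\})<v_2([m]\mysetminus S^*)$; your proposed alternative is strictly worse for player~2.  A correct argument must branch on whether $k\in S^*$.  When $k\notin S^*$ you have $|S^*|\le m-2$; then pick any $j\in S^*$, make $v_{2j}$ dominate everything for player~2, and compare the output to an allocation that moves $j$ to player~2 and compensates player~1 with enough of $[m]\mysetminus S^*$ (this is where $|[m]\mysetminus S^*|\ge 2$ is used).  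Once (a) is established, your observation in (b)---that any offer of size $\le m-2$ is strictly dominated by some $(m-1)$-superset and hence never selected---is correct and finishes case~3.
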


The proofs are deferred to the full version.

It is somewhat surprising that the resulting mechanisms are a strict superset of dictatorships, even when we impose both properties together. Pareto efficiency, however, allows only mechanisms that are rather close to being dictatorial, and thus cannot guarantee fairness of any type. On the other hand, most of the mechanisms defined and studied in Section \ref{SEC:FAIRNESS} are neutral, yet neutrality is not implied by the fairness concepts we consider, nor the other way around.

\section{A Necessary Fairness Condition and its Implications} \label{SEC:FAIRNESS}
In this section, we explore some implications of Theorem
\ref{thm:truthful-->p/e_mech} on fairness properties, i.e., on the
design of mechanisms where on top of truthfulness, we would like
to 
achieve fairness guarantees.
All the missing proofs from this section are 
in Appendix \ref{app:fairness}.

In Section~\ref{sec:control} we show that the Control Lemma implies
that truthfulness prevents any bounded approximation for envy-freeness
and proportionality.  Then, we move on  describing  a necessary fairness condition, in terms of our notion of ``control'', that summarizes a common feature of several relaxations of fairness 
and provide a restricted version of our characterization that follows this fairness condition. This
will allow us, in Section~\ref{sec:app-fairness}, to examine what this new class of mechanisms can achieve in each
of these fairness concepts.

\subsection{Implications of the Control Lemma.}
\label{sec:control}

\subsubsection{Control of singletons.}
\label{sec:sincontrol}
The basic restriction that truthfulness imposes to every mechanism (leading to poor results for some fairness concepts) comes from Corollary \ref{lem:controlsin}, an immediate corollary of the
Control Lemma, stating that every single item is controlled by some
player.

We begin by studing how the above corollary affects two of the most researched notions in the fair
division literature, namely {\it proportionality} and {\it
  envy-freeness}. It is well known that even without the requirement
for truthfulness, it is impossible to achieve any of these two
objectives, simply because in the presence of indivisible goods,
envy-free or proportional allocations may not
exist.\footnote{Consider, for instance, a profile where both players
  desire only the first item and have a negligible value for the other
  items. Then one of the players will necessarily remain unsatisfied
  and receive a value close to zero, no matter what the allocation
  is. 

} 

This leads to the definition of approximation versions of these two
concepts for settings with indivisible goods. 
Namely \citet{LMMS04} considered the {\em minimum envy} problem and tried to construct algorithms 
such that for every instance, an approximation to the minimum possible envy admitted by the instance is guaranteed. 
Similarly \citet{MP11} considered approximate proportionality, i.e., find allocations that achieve an approximation to the best possible value that an instance can guarantee to all agents. See also the discussion in Section \ref{sec:notation} on defining the approximation versions of these problems.
Note that if time complexity is not an issue, we can always identify the allocation with the best possible envy or with the best possible proportionality, achieveable by a given instance. 


We are now ready to state our first application, showing that truthfulness prohibits us from having any approximation to the minimum envy or to proportionality. This greatly improves the conclusions of \citet{LMMS04} and \citet{CKKK09} that truthful mechanisms cannot attain the optimal minimum envy allocation.

\begin{application} 
	\label{app:envy}
  For any truthful mechanism that allocates all the items to two
  players with additive valuations, the approximation achieved for
  either proportionality or the minimum envy is arbitrarily bad (i.e.,
  not lower bounded by any positive function of $m$).
\end{application}

\begin{proof}
Consider a setting with $m$ items, and a truthful mechanism $\mathcal{X}$.
Suppose now that item
$1$ is controlled by player $1$ with respect to $\mathcal{X}$. This
means that in the profile $\mathbf{v}=([m\ 1\ 1\ \ldots\ 1],\allowbreak [m^d\ 1\
1\ \ldots\ 1])$ player 1 must obtain item $1$, and player 2 ends up with a
negligible fraction of his total value for large enough $d$. The
optimal solution would be to assign the first item to the second
player and the last $m$ items to the first player, which provides an envy-free and proportional allocation.
We conclude that the approximation guarantee that can be obtained
by a truthful mechanism is arbitrarily high. 
\end{proof}

So far, the conclusion is that even approximate proportionality or
envy-freeness are quite stringent and incompatible with truthfulness
because of the Control Lemma.  The next step would be to relax these
notions.  There have been already a few approaches on relaxing
proportionality and envy-freeness under indivisible goods, leading to
solutions such as the maximin share fairness, envy-freeness
up to one item~\citep{Budish11}, as well as the type of worst-case guarantees
proposed by \citet{hill87} (recall Definitions \ref{def:uptoone},
\ref{def:mms} and \ref{def:Vn} in Section
\ref{sec:notation}). 
The fact that a truthful mechanism $\mathcal{X}$ yields control of
singletons does not seem to have such detrimental effects on these
notions. However, if even a single pair of items is controlled by a player,
the same situation arises.

\subsubsection{Control of pairs}
\label{sec:pcontrol}
We propose the following {\em necessary} (but not sufficient)
condition that captures a common aspect of all these relaxations of
fairness. This allows us to treat all the above concepts of fairness
in a unified way.

\begin{definition}\label{def:2-set}
We say that a mechanism $\mathcal{X}$ \emph{yields control of pairs}
if there exists $i\in\{1, 2\}$ and $S\subseteq [m]$ with $|S|=2$, such
that player $i$ controls $S$ with respect to
$\mathcal{X}$. 
\end{definition}

The following lemma states that in order to obtain impossibility
results for the above concepts, it is enough to focus on mechanisms
with control of pairs.

\begin{lemma}\label{lem:control-2sets}
  In order to achieve (either exactly or within a bounded approximation) 
  the above mentioned relaxed fairness criteria, a truthful mechanism that allocates all the items
  to two players with additive valuations cannot yield control of
  pairs.
\end{lemma}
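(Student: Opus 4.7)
The plan is to prove the contrapositive: if a truthful mechanism $\mathcal{X}$ yields control of pairs, then for each of the three relaxed criteria (envy-freeness up to one item, approximate maximin share, and Hill's $V_2$ guarantee) no bounded multiplicative approximation can be achieved. Without loss of generality I would assume that player $1$ controls a pair $\{a,b\}\subseteq [m]$ with respect to $\mathcal{X}$; the case when player $2$ controls a pair is symmetric.

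I would then construct, for every large parameter $K>m$, the profile $\mathbf{v}^K=(v_1,v_2)$ with $v_{1a}=v_{1b}=m$, $v_{2a}=v_{2b}=K$, and $v_{1j}=v_{2j}=1$ for every $j\in[m]\setminus\{a,b\}$ (with $v_2$ implicitly depending on $K$). Under $v_1$ each of items $a$ and $b$ is strictly more valuable than $\sum_{j\notin\{a,b\}}v_{1j}=m-2$, so player $1$ strongly desires $\{a,b\}$. By Definition~\ref{def:control} together with the control hypothesis, $\{a,b\}\subseteq X_1(\mathbf{v}^K)$; this forces $X_2(\mathbf{v}^K)\subseteq[m]\setminus\{a,b\}$ and therefore $v_2(X_2(\mathbf{v}^K))\leq m-2$, independently of how $\mathcal{X}$ distributes the remaining items.

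The next step is to estimate the fairness benchmark for player $2$ in each criterion. For the MMS bound, the partition $\bigl(\{a\}\cup A,\{b\}\cup([m]\setminus A\setminus\{a,b\})\bigr)$ with $|A|=\lfloor(m-2)/2\rfloor$ shows $\mms_2\geq K$, so the achievable MMS ratio is at most $(m-2)/K\to 0$. For envy-freeness up to one item, the allocation produced by $\mathcal{X}$ gives $v_2(X_1\setminus\{j\})\geq v_2(X_1)-K\geq K$ for every $j\in X_1$, whereas $v_2(X_2)\leq m-2$; hence even the multiplicative slack $v_2(X_1\setminus\{j\})/v_2(X_2)$ grows like $\Omega(K/m)$, ruling out any bounded approximation of EF1. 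For the Hill guarantee, after normalising $v_2(M)=1$ I would compute $\alpha_2=K/(2K+m-2)$, which lies in $NI(2,1)=(1/3,2/3)$ for all sufficiently large $K$; Definition~\ref{def:Vn} then gives $V_2(\alpha_2)=1/3$, so player $2$ is entitled to at least $(2K+m-2)/3$ of its total value, an $\Omega(K)$ quantity, versus the $m-2$ actually delivered.

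The main (rather mild) difficulty is to identify a single construction that defeats all three criteria simultaneously. The symmetric choice $v_{2a}=v_{2b}=K$ with residual values equal to $1$ is what makes all three benchmarks for player $2$ scale like $\Theta(K)$, while the control hypothesis forces player $2$'s actual bundle to have value at most $m-2$. Letting $K\to\infty$ then yields the desired contradiction for each criterion, completing the proof.
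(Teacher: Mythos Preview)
Your proof is correct and follows essentially the same approach as the paper: you both exhibit a single profile in which player~1 strongly desires the controlled pair (values $m,m,1,\ldots,1$) while player~2 places a large value $K$ (the paper uses $m^d$) on each of the two controlled items, and then observe that player~2's bundle has value at most $m-2$ whereas each benchmark ($\mms_2$, EF1 residual, $V_2(\alpha_2)$) is $\Theta(K)$. Your write-up is more explicit about the MMS partition and the location of $\alpha_2$ in $NI(2,1)$, but the construction and the logic are the same.
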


\begin{proof}
  Assuming that $\{1, 2\}$ is controlled by player 1, in a setting
  with $m$ items, we may consider $\mathbf{v}'=([m\ m\ 1\ \ldots\
  1], [m^d\ m^d\ 1\ \ldots\ 1])$, in analogy to profile $\mathbf{v}$
  in the proof of Application \ref{app:envy}. Given that player 1 can always get both items 1 and 2 when he
  strongly desires them, it is easy to see that envy-freeness up to
  one item cannot be achieved, while by choosing large enough $d$ the
  approximation of $\mms_2$ or $V_2(\alpha_2)$ can be arbitrarily bad.
\end{proof}

So now we are ready to move to a complete characterization of truthful
mechanisms that do not yield control of pairs. Of course such
mechanisms are picking-exchange mechanisms, but our fairness condition
allows only singleton offers, and the exchange part is
completely degenerate.

\begin{definition}
  A mechanism $\mathcal{X}$ for allocating all the items in $[m]$ to
  two players is a \emph{singleton picking-exchange mechanism} if it
  is a picking-exchange mechanism where for each $i\in\{1, 2\}$ at
  most one of $N_i$ and $E_i$ is nonempty, $|E_i|\le 1$, and
\begin{equation*}
\mathcal{O}_i =
\begin{cases}
\left\lbrace \{x\} \ | \ x\in N_i\right\rbrace  & \text{when } N_i\neq\emptyset\\
\{\emptyset\} & \text{otherwise } 
\end{cases}
\end{equation*}
i.e., the sets of offers contain all  possible singletons.
\end{definition} 

Hence, typically, in a singleton picking-exchange mechanism player $i$
receives from $N_i \cup E_i$ only his best item. Moreover, for $m\geq 3$, no
exchanges are allowed.\footnote{The only exceptions---and the only
  such mechanisms where both $E_1$ and $E_2$ are nonempty---are two
  mechanisms for the degenerate case of $m=2$, e.g.,
  $N_1=N_2=\emptyset$, $\mathcal{O}_1=\mathcal{O}_2=\{\emptyset\}$,
  $E_1=\{a\}$, $E_2=\{b\}$ and $D=\{(\{a\}, \{b\})\}$, where $\{a,
  b\}=\{1, 2\}$.}

\begin{lemma}\label{lem:fairness_mechs}
  Every truthful mechanism for allocating all the items to two players
  with additive valuation functions that does not yield control of
  pairs can be implemented as a singleton picking-exchange mechanism.
\end{lemma}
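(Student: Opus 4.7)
The plan is to apply Theorem \ref{thm:truthful-->p/e_mech} to $\mathcal{X}$, obtaining its canonical picking-exchange implementation with data $(N_1, N_2, E_1, E_2, \mathcal{O}_1, \mathcal{O}_2, D)$ built from the collections $\mathcal{A}_i$ of maximal controlled sets as in Section \ref{subsec:characterization}. It then suffices to verify that, under the no-control-of-pairs hypothesis, this data collapses to the singleton form demanded by the definition of a singleton picking-exchange mechanism.

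The only place where the hypothesis is actually used is the following observation: every set in $\mathcal{A}_1 \cup \mathcal{A}_2$ has cardinality at most one. Indeed, if some $S \in \mathcal{A}_i$ contained two distinct items $x, y$, then by the Control Lemma (Lemma \ref{lem:control}) player $i$ would also control the pair $\{x, y\} \subseteq S$, contradicting the assumption.

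From this, a short case analysis for each $i \in \{1, 2\}$ yields the required structure. If $|\mathcal{A}_i| \geq 2$, then the sets in $\mathcal{A}_i$ are pairwise distinct singletons, so $E_i = \bigcap_{S \in \mathcal{A}_i} S = \emptyset$, $N_i = C_i$, and $\mathcal{O}_i = \{\{x\} : x \in N_i\}$. If $\mathcal{A}_i = \{\{x_0\}\}$ for some $x_0$, then $E_i = \{x_0\}$, $N_i = C_i \setminus E_i = \emptyset$, and $\mathcal{O}_i = \{\emptyset\}$. If $\mathcal{A}_i = \{\emptyset\}$---so player $i$ controls no nonempty set and, by Corollary \ref{lem:controlsin}, every singleton is controlled by the other player---then $N_i = E_i = \emptyset$ and $\mathcal{O}_i = \{\emptyset\}$. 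In each case, at most one of $N_i, E_i$ is nonempty, $|E_i| \leq 1$, and $\mathcal{O}_i$ takes the prescribed singleton-offer form.

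Finally, it remains to handle the exchange deals: $D$ is a valid set of exchange deals on $(E_1, E_2)$ with each $|E_i| \leq 1$, so the only way $D$ can be nonempty is when both $E_1 = \{a\}$ and $E_2 = \{b\}$ are singletons, in which case $D \subseteq \{(\{a\}, \{b\})\}$; together with the previous paragraph this forces $N_1 = N_2 = \emptyset$ and $m = 2$, exactly matching the degenerate case flagged in the footnote of the singleton picking-exchange definition. The substantive work is already packaged inside Theorem \ref{thm:truthful-->p/e_mech}; the only genuinely new ingredient is the one-line argument that no control of pairs forces every $S \in \mathcal{A}_i$ to be a singleton, and everything else is a routine unpacking of definitions.
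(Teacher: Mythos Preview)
Your proof is correct and follows essentially the same approach as the paper: invoke Theorem \ref{thm:truthful-->p/e_mech} and then argue that the no-control-of-pairs hypothesis forces the picking-exchange data to degenerate to singleton form. The paper's version argues directly from the data $(N_i, E_i, \mathcal{O}_i)$ (noting that player $i$ controls $E_i$ and each $O\cup E_i$), while you unwind the construction one level further to the collections $\mathcal{A}_i$ and do a short case analysis there; the two are equivalent since $\mathcal{O}_i = \{S\setminus E_i : S\in\mathcal{A}_i\}$. Your treatment is in fact slightly more complete: you explicitly dispose of the exchange-deal set $D$ and identify the degenerate $m=2$ case, which the paper's own proof leaves implicit.
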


It is interesting to note that, in contrast to Application \ref{app:envy},  proving Lemma \ref{lem:fairness_mechs} without Theorem \ref{thm:truthful-->p/e_mech} is not straightforward. In fact, it  requires a partial characterization which (on a high level) is similar to characterizing the picking component of general mechanisms.

\subsection{Applications to Relaxed Notions of Fairness}
\label{sec:app-fairness}

It is now possible to apply Lemma~\ref{lem:fairness_mechs} on each
fairness notion separately, and characterize every truthful mechanism
that achieves each criterion.  

\paragraph{Envy-freeness up to one item.} We start with a relaxation of envy-freeness. Below we provide a complete description of the mechanisms that satisfy this criterion.

\begin{application} 
\label{cor:envy} For $m\le 3$,
  every singleton picking-exchange mechanism achieves envy-freeness up
  to one item. For $m = 4$ every singleton picking-exchange mechanism
  with $|N_1|=|N_2|=2$ achieves envy-freeness up to one item. Finally,
  for $m\ge 5$ there is no truthful mechanism that allocates all the
  items to two players and achieves envy-freeness up to one item.
\end{application}

\paragraph{Maximin share fairness and related notions.} For maximin share allocations a truthful
mechanism was suggested by \citet{ABM16} for any number of items and
any number of players. For two players, their mechanism is the
singleton picking-exchange mechanism with $N_1=[m]$ and produces an
allocation that guarantees to each player a $\frac{1}{\left\lfloor m/2
  \right\rfloor}$-approximation of his maximin share. 
It was left as an open problem whether a better truthful approximation exists.
Here we show
that this approximation is tight; in fact, almost any other singleton
picking-exchange mechanism performs strictly worse. Note that the best
previously known lower bound for two players was $1/2$.

\begin{application}\label{cor:mms}
  For any $m$ there exists a singleton picking-exchange mechanism that
  guarantees to player $i$ a ${\left\lfloor \max\{2, m\}/2
    \right\rfloor}^{-1}$-approximation of $\mms_i$, for $i\in
  \{1,2\}$. There is no truthful mechanism that allocates all the
  items to two players and achieves a better guarantee with respect to
  maximin share fairness.
\end{application}

Regarding now allocations that guarantee an approximation of the function $V_2(\alpha_i)$ defined by \citet{hill87} (recall the definition in Section \ref{sec:notation}), 
the singleton picking-exchange mechanism with
$N_1=[m]$ was also suggested by \citet{MP11} as a
$\frac{1}{\left\lfloor m/2 \right\rfloor}$-approximation of
$V_2(\alpha_i)$.\footnote{The approximation factor in \cite{MP11} is
  expressed in terms of $V_2(1/m)$, but it simplifies to
  ${\left\lfloor m/2 \right\rfloor}^{-1}$.} This comes as no surprise,
since there exists a strong connection between maximin shares and the
function $V_n$, especially for two players. This is illustrated in the
following 
corollary, where both the positive and the negative results coincide with the ones for the maximin share fairness.

\begin{application}\label{cor:Vn}
  For any $m$ there exists a singleton picking-exchange mechanism that
  guarantees to player $i$ a ${\left\lfloor \max\{2, m\}/2
    \right\rfloor}^{-1}$-approximation of $V_2(\alpha_i)$, for $i\in
  \{1,2\}$, where $\alpha _i = \max_{j\in [m]}v_{ij}$. There is no
  truthful mechanism that allocates all the items to two players and
  achieves a better guarantee with respect to the $V_2(\alpha_i)$s.
\end{application}

Again, the best previously known lower bound for two players was constant, namely $2/3$ due to \citet{MP11}.
In Applications \ref{cor:mms} and \ref{cor:Vn}, it is stated that there
exists a $\frac{1}{\left\lfloor m/2 \right\rfloor}$-approximate
singleton picking-exchange mechanism. It is interesting that
\emph{any} singleton picking-exchange mechanism does not perform much
worse. Following the corresponding proofs, we have that even the worst
singleton picking-exchange mechanism achieves a
$\frac{1}{m-1}$-approximation in each case.

\begin{remark}\label{rem:Wn}
  \citet{GMT15} introduced a variant of $V_n$, called $W_n$, and
  showed that there always exists an allocation such that each
  player $i$ receives $W_n(\alpha_i) \ge V_n(\alpha_i)$ (where the
  inequality is often strict). Since the definition of $W_n$ is rather
  involved even for $n=2$, we defer a formal discussion about it to
  the full version of the paper. However, it is not hard to show that
  for every valuation function $v_i$ we have $V_2(\alpha_i) \le
  W_2(\alpha_i) \le \mms_i$ and thus the analog of Application
  \ref{cor:Vn} holds.
\end{remark}

\begin{remark}
\citet{ABM16} made the following interesting observation: every single known truthful mechanism achieving a bounded approximation of maximin share fairness is \emph{ordinal}, in the sense that it only needs a ranking of the items for each player rather than his whole valuation function. Finding truthful mechanisms that explicitly take into account the players' valuation functions in order to achieve better guarantees was posed as a major open problem. Note that, weird tie-breaking aside, all singleton picking-exchange mechanisms are ordinal! Therefore, from the mechanism designer's perspective, it is impossible to exploit the extra cardinal information given as input and at the same time maintain truthfulness and some nontrivial fairness guarantee.
\end{remark}

\section{Truthful Mechanisms for Many Players} \label{sec:further_dir}
We introduce a family of non-dictatorial, truthful mechanisms for any number of players.
Our mechanisms are defined recursively; in analogy to serial dictatorships, the choices of a player define the sub-mechanism used to allocate the items to the remaining players. Here, however, this serial behavior is observed ``in parallel'' in several sets of a partition of $M$. 

A \emph{generalized deal} between $k$ players is a collection of (up to $k(k-1)$) exchange deals between pairs of players. A set $D$ of generalized deals is called \emph{valid} if all the sets involved in all these exchange deals are nonempty and pairwise disjoint. Given a profile $\mathbf{v}=(v_1,v_2, \ldots, v_n)$ we say that a generalized deal is \emph{favorable} if it strictly improves all the players involved, while it is \emph{unfavorable} if  there exists a player involved whose utility strictly decreases. 

\begin{definition}\label{def:s/p/e}
A mechanism $\mathcal{X}$ for allocating all the items in $[m]$ to $n$ players is called a \emph{serial picking-exchange mechanism} if
\begin{enumerate}
\item when $n=1$, $\mathcal{X}$ always allocates the whole $[m]$ to player 1.
\item when $n\ge 2$, there exist a partition $(N_1, \ldots, N_n, E_1, \ldots, E_n)$  of $[m]$, sets of offers $\mathcal{O}_i$ on $N_i$ for $i\in[n]$, a valid set $D$ of generalized deals, and  a mapping  $f$ from subsets of $M$ to serial picking-exchange mechanisms for $n-1$ players, such that for every profile $\mathbf{v}=(v_1, \ldots, v_n)$ we have for all $i\in[n]$:
\begin{itemize}
\item $X_i^{N_i}(\mathbf{v})\in \argmax_{S \in \mathcal{O}_i} v_i(S)$,
\item $X_i^{E}(\mathbf{v})$, where $E=\bigcup_{j\in [n]} E_j$, is the result of starting with $E_i$ and performing some of the deals in $D$, including all the favorable deals but no unfavorable ones,
\item the items of $N_i \mysetminus X_i^{N_i}(\mathbf{v})$ are allocated to players in $[n]\mysetminus \{i\}$ using the serial picking-exchange mechanism $f\big( N_i \mysetminus X_i^{N_i}(\mathbf{v})\big)$.
\end{itemize}
\end{enumerate}
\end{definition}

Clearly, serial picking-exchange mechanisms are a generalization of picking-exchange mechanisms studied in Section \ref{SEC:CHARACTERIZATION}.
The following example illustrates how such a mechanism looks like for three players.

\begin{example}\label{ex:3_players}
Suppose that we have three players with additive valuations. For simplicity, assume that each player's valuation induces a strict preference over all possible subsets of items.
Let $M= [100]$ be the set of items, and consider the following relevant ingredients of our mechanism:
\begin{itemize}
\item $N_1=\{1, 2, \ldots, 20\},\ \mathcal{O}_1=\{\{1, 2, 3\}, N_1\mysetminus \{1\}\}$
\item $N_2=\{21, 22, \ldots,  50\},\ \mathcal{O}_2=\{S\subseteq N_2 \ |\ |S|=6 \}$
\item $N_3=\{51, 52, \ldots, 70\},\ \mathcal{O}_3=\{\{51, \ldots, 60\}, \{61, \ldots, 70\}\}$
\item $E_1=\{71, \ldots, 80\},\ E_2=\{81, \ldots, 90\},\ E_3=\{91, \ldots, 100\}$
\item $D = \left\lbrace  \left[ (\{75, 79\}, \{83\})^{1, 3}\right] , \left[ (\{71\}, \{88\})^{1, 2}, (\{72, 80\}, \{95\})^{1, 3}, (\{85\}, \{99, 100\})^{2, 3}\right] \right\rbrace $
\item $f$ is a mapping from subsets of $M$ to picking-exchange mechanisms (for 2 players)
\end{itemize}
The above sets are the analog of the corresponding sets of a picking-exchange mechanism. The deals, however, are a bit more complex. E.g., by $\big[ (\{71\}, \{88\})^{1, 2},\allowbreak (\{72, 80\}, \{95\})^{1, 3},\allowbreak (\{85\}, \{99, 100\})^{2, 3}\big]$ we denote the deal in which:
\begin{itemize}
\item[--] player 1 gives item 71 to player 2 and items 72, 80 to player 3
\item[--] player 2 gives item 88 to player 1 and item 85 to player 3
\item[--] player 3 gives item 95 to player 1 and items 99, 100 to player 2
\end{itemize}
The mapping $f$ suggests which truthful mechanism should be used every time there are items left to be allocated to only two players.

We are ready to describe our mechanism $\mathcal{X}$:
\begin{enumerate}
\item The mechanism gives endowments $E_1, E_2, E_3$ to the three players and then performs each exchange deal that strictly improves all the players involved. 
\item Then, for each $i\in \{1, 2, 3\}$, the mechanism gives to player $i$ his best set in $\mathcal{O}_i$, say $S_i$.
\item Finally, for each $i\in \{1, 2, 3\}$, $\mathcal{X}$  uses mechanism $f(N_i \mysetminus S_i)$ to allocate the items of $N_i \mysetminus S_i$ to players in $\{1, 2, 3\}\mysetminus i$.
\end{enumerate}
\end{example}

Like picking-exchange mechanisms, serial picking-exchange mechanisms are truthful, given an appropriate tie-breaking rule (e.g., a label-based tie-breaking rule).
To bypass a general discussion about tie-breaking, however, we may assume that each player's valuation induces a strict preference over all  subsets of $M$. 
We denote by $\mathcal{V}^{\neq}_{\! n, m}$ the set of profiles that only include such valuation functions. Following almost the same proof, however, we have that for general additive valuations every serial picking-exchange mechanism is truthful when using label-based tie-breaking.

\begin{theorem}\label{thm:s/p/e_mechs}
When restricted to $\mathcal{V}^{\neq}_{\! n, m}$, every serial picking-exchange mechanism $\mathcal{X}$ for allocating $m$ items to $n$ players is truthful.
\end{theorem}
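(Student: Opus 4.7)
I would prove the theorem by induction on the number of players $n$. The base case $n=1$ is immediate: the mechanism always returns $([m])$, so there is no report to manipulate. For the inductive step, I assume every serial picking-exchange mechanism for $n-1$ players is truthful on $\mathcal{V}^{\neq}_{\! n-1, m'}$ for all $m'$, and consider a mechanism $\mathcal{X}$ for $n$ players with ingredients $(N_1,\dots,N_n,E_1,\dots,E_n,\mathcal{O}_1,\dots,\mathcal{O}_n,D,f)$ as in Definition~\ref{def:s/p/e}. Fix a player $i$, a true profile $\mathbf{v}\in\mathcal{V}^{\neq}_{\! n,m}$, and any deviation $v_i'$ with $(v_i',v_{-i})\in\mathcal{V}^{\neq}_{\! n,m}$. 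Using additivity, I would decompose
\[
v_i(X_i(\mathbf{v})) = v_i\bigl(X_i^{N_i}(\mathbf{v})\bigr) + v_i\bigl(X_i^{E}(\mathbf{v})\bigr) + \sum_{j\neq i} v_i\bigl(X_i^{N_j\mysetminus X_j^{N_j}(\mathbf{v})}(\mathbf{v})\bigr),
\]
and argue that each term is weakly larger under truthful reporting than under $v_i'$.

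For the picking term on $N_i$, truthfulness is immediate from the definition: $X_i^{N_i}(\mathbf{v})\in\argmax_{S\in\mathcal{O}_i}v_i(S)$, so any other choice, in particular $X_i^{N_i}(v_i',v_{-i})$, gives weakly less $v_i$-value. For the exchange term on $E=\bigcup_j E_j$, I would use the crucial fact that on $\mathcal{V}^{\neq}_{\! n,m}$ there are no ties, hence every generalized deal is either favorable or unfavorable under both $\mathbf{v}$ and $(v_i',v_{-i})$, and the mechanism performs exactly the favorable ones. Since the sets involved in the deals in $D$ are pairwise disjoint, the deals act independently on $i$'s endowment $E_i$. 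Deals that do not involve $i$ are performed identically under either report, so they contribute the same to $v_i(X_i^E)$. For a deal $d$ involving $i$, whether the other participants are improved depends only on $v_{-i}$; conditioned on all of them being improved, $d$ is executed precisely when $i$'s \emph{reported} valuation deems it improving. Thus $i$'s choice of report only toggles which subset of these ``$i$-discretionary'' deals is performed, and the $v_i$-optimal toggle is to accept exactly the truly $v_i$-improving ones, which is what truthful reporting does.

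For each of the recursive terms, the critical observation is that $X_j^{N_j}$ depends only on $v_j$, so for every $j\neq i$ the leftover set $N_j\mysetminus X_j^{N_j}$ is identical under $\mathbf{v}$ and under $(v_i',v_{-i})$; consequently the sub-mechanism $f\bigl(N_j\mysetminus X_j^{N_j}\bigr)$ invoked to allocate these items is the very same mechanism in both scenarios. This sub-mechanism is a serial picking-exchange mechanism for $n-1$ players (the players in $[n]\mysetminus\{j\}$, which includes $i$), operating on profiles in $\mathcal{V}^{\neq}_{\! n-1, |N_j\mysetminus X_j^{N_j}|}$ since restrictions of tie-free valuations are themselves tie-free. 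By the inductive hypothesis it is truthful, so $i$ cannot gain from reporting $v_i'$ instead of $v_i$ inside it. Note that the sub-mechanism $f(N_i\mysetminus X_i^{N_i})$, which may change under the deviation, does not involve $i$, so it is irrelevant to $i$'s utility. Summing the three estimates via additivity gives $v_i(X_i(\mathbf{v}))\ge v_i(X_i(v_i',v_{-i}))$, completing the induction.

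The step I expect to require the most care is the exchange part: one has to be pedantic about how the no-ties property of $\mathcal{V}^{\neq}_{\! n,m}$ rules out ambiguous deals, and about the disjointness of deals in a valid $D$, which is what decouples $i$'s effect on different deals and lets one argue deal-by-deal. The recursive step is conceptually clean once one notices that $v_i$'s deviation does not alter any $X_j^{N_j}$ for $j\neq i$, so the sub-mechanism identity is preserved and the inductive hypothesis applies verbatim.
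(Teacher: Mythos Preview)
Your proposal is correct and follows essentially the approach the paper signals: the paper defers the proof, noting only that it is ``similar in spirit with the proof of Theorem~\ref{thm:p/e_mech-->truthful}'', and your induction on $n$ together with the decomposition into picking, exchange, and recursive parts is exactly the natural extension of that two-player argument to the serial setting. The key points---that on $\mathcal{V}^{\neq}_{\! n,m}$ each $\argmax_{S\in\mathcal{O}_j}v_j(S)$ is a singleton (so the leftover $N_j\mysetminus X_j^{N_j}$ and hence the invoked sub-mechanism are unaffected by $i$'s deviation), that every generalized deal is either favorable or unfavorable (so the exchange component is fully determined and can be analyzed deal-by-deal via disjointness), and that restrictions of tie-free valuations remain tie-free (so the inductive hypothesis applies to each sub-mechanism)---are all identified and used correctly.
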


The proof is similar in spirit with the proof of Theorem \ref{thm:p/e_mech-->truthful}, and is deferred to the full version of the paper.

%

\section{Discussion}
We obtained a nontrivial characterization for truthful mechanisms, that has immediate implications on fairness.
A natural question to ask is whether our characterization can be extended for more than two players. Characterizing the truthful mechanisms without money for any number of additive players is, undoubtedly, a fundamental open problem. However, as indicated by Definition \ref{def:s/p/e}, there seems to be a much richer structure when one attempts to describe such mechanisms, even though serial picking-exchange mechanisms are only a subset of nonbossy truthful mechanisms.
In particular, the notion of control that was crucial for identifying the structure of truthful mechanisms for two players does not convey enough information anymore. Instead, there seem to exist several different levels of control, and understanding this structure still remains a very interesting and intriguing question.

\section*{Aknowledgements}	
This work has been partly supported by the COST Action IC1205 on Computational Social Choice, and by an internal grant of the Athens University of Economics and Business.
George Christo\-dou\-lou was supported by EPSRC EP/M008118/1 and Royal Society LT140046.	
We also wish to acknowledge the Simons institute for hosting the program on Economics and Computation, as some ideas and preliminary discussions began there.

%
\bibliographystyle{plainnat}
\bibliography{fairdivrefs}
%

%
\appendix
\section{{Missing Material from Section \ref{SEC:CHARACTERIZATION}}} \label{app:characterization}

\begin{proof}[\textbf{Proof of Theorem \ref{thm:p/e_mech-->truthful}}]
Assume $\mathcal{X}$ is a picking-exchange mechanism with partition $(N_1, N_2, E_1,\allowbreak E_2)$, offer sets $\mathcal{O}_i$ on $N_i$, for $i\in\{1, 2\}$, and set of exchange deals $D$.
Let $\mathbf{v}=(v_1,v_2)\in \mathcal{V}^{\neq}_{\! m}$ be a profile, and fix $v_2$. We are going to show that there is no $\mathbf{v'}=(v'_1,v_2)\in \mathcal{V}^{\neq}_{\! m}$ such that $v_1(X_1(\mathbf{v}'))> v_1(X_1(\mathbf{v}))$.

For any $\mathbf{v'}=(v'_1,v_2)\in \mathcal{V}^{\neq}_{\! m}$ there exist  the following  possibilities: \smallskip

\noindent (a) $X_1(\mathbf{v}') = X_1(\mathbf{v})$. Then clearly $v_1(X_1(\mathbf{v}')) = v_1(X_1(\mathbf{v}))$. \smallskip

\noindent (b) $X_1^{N_1\cup N_2}(\mathbf{v}') \neq X_1^{N_1\cup N_2}(\mathbf{v})$, but $X_1^{E_1\cup E_2}(\mathbf{v}') = X_1^{E_1\cup E_2}(\mathbf{v})$. Then it must be the case where $X_1^{N_1}(\mathbf{v}') \neq X_1^{N_1}(\mathbf{v})$. Indeed, player 1 has no power over $N_2$ where the items that he is allocated depend only on the unique best offer to player 2, i.e., $X_1^{N_2}(\mathbf{v}') = X_1^{N_2}(\mathbf{v})$. But this can only mean $v_1(X_1^{N_1}(\mathbf{v}')) < v_1(X_1^{N_1}(\mathbf{v}))$ by the definition of a picking-exchange mechanism and the fact that there are no subsets of equal value. So in total, $v_1(X_1(\mathbf{v}')) < v_1(X_1(\mathbf{v}))$. \smallskip
	
\noindent (c) $X_1^{N_1\cup N_2}(\mathbf{v}') = X_1^{N_1\cup N_2}(\mathbf{v})$, but $X_1^{E_1\cup E_2}(\mathbf{v}') \neq X_1^{E_1\cup E_2}(\mathbf{v})$. By the definition of picking-exchange mechanisms, player 1 can never force an exchange that is good for him but not for player 2. That is,  by deviating he will  lose one or more  exchanges that were good for him, and/or force one or more   exchanges that were bad for him.
We conclude it is the case where $v_1(X_1^{E_1\cup E_2}(\mathbf{v}')) < v_1(X_1^{E_1\cup E_2}(\mathbf{v}))$, and therefore, $v_1(X_1(\mathbf{v}')) < v_1(X_1(\mathbf{v}))$. \smallskip

\noindent (d) $X_1^{N_1\cup N_2}(\mathbf{v}') \neq X_1^{N_1\cup N_2}(\mathbf{v})$, and $X_1^{E_1\cup E_2}(\mathbf{v}') \neq X_1^{E_1\cup E_2}(\mathbf{v})$. By the fact that we are restricted to $\mathcal{V}^{\neq}_{\! m}$, we can derive that the ``picking part'' on $N_1\cup N_2$ and the ``exchange part'' on $E_1\cup E_2$ are independent. So, by cases (b) and (c) above we have $v_1(X_1^{N_1}(\mathbf{v}')) < v_1(X_1^{N_1}(\mathbf{v}))$ and  $v_1(X_1^{E_1\cup E_2}(\mathbf{v}')) < v_1(X_1^{E_1\cup E_2}(\mathbf{v}))$. Therefore, $v_1(X_1(\mathbf{v}')) < v_1(X_1(\mathbf{v}))$. \smallskip

We  conclude that every picking-exchange mechanism on  $\mathcal{V}^{\neq}_{\! m}$ is truthful.
\end{proof}


\begin{remark}\label{rem:ties}
With only slight modifications of the above proof, we have that for general additive valuations every picking-exchange mechanism is truthful when using the following two interesting families of tie-breaking rules: \smallskip

\noindent\emph{Tie-breaking with labels.} Every set in $\mathcal{O}_1 \cup \mathcal{O}_2$ has a distinct label, and whenever $\argmax_{S \in \mathcal{O}_i} v_i(S)$ is not a singleton, player $i$ receives the set with the smallest label. Further, every deal in $D$ has a label with five possible values, each indicating one of the following: (i) the exchange takes place every time it is not unfavorable, (ii) it only takes place every time it is not unfavorable and at least one player is strictly improved, (iii) it only takes place every time it is not unfavorable and player 1 is strictly improved, (iv) it only takes place every time it is not unfavorable and player 2 is strictly improved, and (v) it only takes place every time it is favorable.\smallskip

\noindent\emph{Welfare maximizing tie-breaking.} When $\argmax_{S \in \mathcal{O}_i} v_i(S)$ is not a singleton, player $i$ receives the set that leaves in $N_i$ as much value as possible for the other player. If there are still ties, labels are used to resolve those. Further, for every deal in $D$ the exchange takes place every time it is not unfavorable and at least one player is strictly improved.
\end{remark}
\medskip



\begin{proof}[\textbf{Proof of Corollary \ref{cor:part}}]
	From the definition of the $C_i$s and Corollary \ref{lem:controlsin}, $C_1 \cup  C_2= M$ follows. On the other hand, if $z \in C_1 \cap C_2$, then there exist a set $A\in \mathcal{A}_1$, such that $z\in A$, and a set  $B\in \mathcal{A}_2$ such that $z\in B$. By Lemma \ref{lem:control}, this implies that the singleton $\{z\}$ is controlled by both players, which is a contradiction. Thus, we have $C_1 \cap C_2= \emptyset$.
\end{proof}
\medskip


\begin{proof}[\textbf{Proof of Lemma \ref{output}}]
Due to symmetry, it suffices to prove the statement for $i=1$. If $N_1 = \emptyset$ then the statement is trivially true. So assume $N_1 \neq \emptyset$ and suppose that the statement does not hold. That is, there exists a profile $\mathbf{v}=(v_1,v_2)$ such that for any $S \in \mathcal{O}_1$ we have  $X_1^{N_1}(\mathbf{v}) \nsubseteq S$. 
This means $X_1^{N_1}(\mathbf{v}) \neq \emptyset$.
Since the sets in $\mathcal{O}_1$ cover $N_1$, there exists $S'$ such that $S'\cap X_1^{N_1}(\mathbf{v})\neq \emptyset$. Let $Z$ be a maximum cardinality such intersection between some $S' \in \mathcal{O}_1$ and $X_1^{N_1}(\mathbf{v})$, and $x$ be any element of $X_1^{N_1}(\mathbf{v}) \mysetminus Z$. Note that $x$ is guaranteed to exist since $X_1^{N_1}(\mathbf{v})$ is not contained in any set of $\mathcal{O}_1$. Also, there is no $S'' \in \mathcal{O}_1$ such that $Z \cup \{x\} \subseteq S''$ due to the maximality of $Z$. 

 The generic values that may appear in $\mathbf{v}$ restrict our ability to argue about the allocation, so our first goal is to reach a profile $\mathbf{u}$ that contradicts the lemma's statement, like $\mathbf{v}$, but has appropriately selected values. Then, having $\mathbf{u}$ as a starting point we can create profiles in which the allocations contradict truthfulness.

Now, recall that in  profile $\mathbf{v}$,
%
	  player 1 gets $Z \cup \{x\}$ (notice that he may get more items as well), and consider profiles $\mathbf{v'}=(v_1^{\textsc{i}},v_2)$ and $\mathbf{v''}=(v_1^{\textsc{ii}},v_2)$, where 
	\begin{equation*}
	\begin{IEEEeqnarraybox}[
	\IEEEeqnarraystrutmode
	\IEEEeqnarraystrutsizeadd{2pt}
	{0.5pt}
	][c]{c/v/c/v/c/v/c/}
	&& Z && x && M\mysetminus (Z \cup \{x\})  \\\hline
	v_1^{\textsc{i}} && \text{---}\ m^2\ \text{---} && m&& \text{---}\ 1\ \text{---}   
	\end{IEEEeqnarraybox}
	\end{equation*}
and
	\begin{equation*}
	\begin{IEEEeqnarraybox}[
	\IEEEeqnarraystrutmode
	\IEEEeqnarraystrutsizeadd{2pt}
	{0.5pt}
	][c]{c/v/c/v/c/v/c/}
	&& Z && x && M\mysetminus (Z \cup \{x\})  \\\hline
	v_1^{\textsc{ii}} && \text{---}\ m\ \text{---} && m^2 && \text{---}\ 1\ \text{---}   
	\end{IEEEeqnarraybox}
	\end{equation*}
By truthfulness, player 1 continues to get $Z \cup \{x\}$ in both cases, i.e., $Z \cup \{x\} \subseteq X_1^{N_1}(\mathbf{v'})$ and $Z \cup \{x\} \subseteq X_1^{N_1}(\mathbf{v''})$.

We proceed by changing the values of player 2 this time. Assuming that $M\mysetminus (Z \cup \{x\})=\{i_1, i_2, \ldots, \allowbreak i_\ell\}$ let $f_{i_j}=m$ if $i_j \in X_2(\mathbf{v''})$ and $f_{i_j}=1$ otherwise. Consider the next profile $\mathbf{u}=(v_1^{\textsc{ii}}, v_2^{\textsc{i}})$:
	\begin{equation*}
	\begin{IEEEeqnarraybox}[
	\IEEEeqnarraystrutmode
	\IEEEeqnarraystrutsizeadd{2pt}
	{0.5pt}
	][c]{c/v/c/v/c/v/c/}
	&& Z && x && M\mysetminus (Z \cup \{x\})  \\\hline
	v_1^{\textsc{ii}} && \text{---}\ m\ \text{---} && m^2&& \text{---}\ 1\ \text{---}    \\\hline
	v_2^{\textsc{i}} && \text{---}\ 1\ \text{---} && m^2  && f_{i_1}, \ldots, f_{i_\ell}
	\end{IEEEeqnarraybox}
	\end{equation*}
	Now notice that player 1 must get item $x$, since $x \in N_1$ and thus he controls $\{x\}$. On the other hand, since player 2 can not get  $x$ he must continue to get at least the items in $X_2(\mathbf{v''})$ by truthfulness (otherwise he would play $v_2$ instead). 
Since this the case, he can not get a strict superset of $X_2(\mathbf{v''})$ either. Indeed, if this was not the case he would deviate from $\mathbf{v''}$ to $\mathbf{u}$. So we can conclude that $X_2(\mathbf{u})=X_2(\mathbf{v''})$. 

Now we move to a profile  $\mathbf{u'}=(v_1^{\textsc{i}}, v_2^{\textsc{ii}})$ where eventually player 2 gets item $x$:
	\begin{equation*}
	\begin{IEEEeqnarraybox}[
	\IEEEeqnarraystrutmode
	\IEEEeqnarraystrutsizeadd{2pt}
	{0.5pt}
	][c]{c/v/c/v/c/v/c/}
	&& Z && x && M\mysetminus (Z \cup \{x\})  \\\hline
	v_1^{\textsc{i}} && \text{---}\ m^2\ \text{---} && m&& \text{---}\ 1\ \text{---}    \\\hline
	v_2^{\textsc{ii}} && \text{---}\ m^2\ \text{---} && m^2  && f_{i_1}, \ldots, f_{i_\ell}
	\end{IEEEeqnarraybox}
	\end{equation*}
	In $\mathbf{u'}$, both players strongly desire  $Z \cup \{x\}$. But 
player 1 cannot get both set $Z$ and item $x$, or by Lemma \ref{lem:control} he controls $Z \cup \{x\}$ and thus $Z \cup \{x\} \subseteq S$ for some $S \in \mathcal{O}_1$. However, he controls $Z$, since there exists some $S'\in \mathcal{O}_1$ such that  $Z = S'\cap X_1^{N_1}(\mathbf{v}) \subseteq S'$. So, player 1 has to get $Z$ since he strongly desires it, and item $x$  is given to player 2 (probably with other items in  $M\mysetminus (Z \cup \{x\}$). 

Finally, consider our final profile $\mathbf{u''}=(v_1^{\textsc{i}}, v_2^{\textsc{i}})$
	\begin{equation*}
	\begin{IEEEeqnarraybox}[
	\IEEEeqnarraystrutmode
	\IEEEeqnarraystrutsizeadd{2pt}
	{0.5pt}
	][c]{c/v/c/v/c/v/c/}
	&& Z && x && M\mysetminus (Z \cup \{x\})  \\\hline
	v_1^{\textsc{ii}} && \text{---}\ m^2\ \text{---} && m&& \text{---}\ 1\ \text{---}    \\\hline
	v_2^{\textsc{ii}} && \text{---}\ 1\ \text{---} && m^2  && f_{i_1}, \ldots, f_{i_\ell}
	\end{IEEEeqnarraybox}
	\end{equation*}
By truthfulness,  player 2 must get item $x$, or he would deviate from $\mathbf{u''}$ to $\mathbf{u'}$. However, now player 1 can strictly improve his utility by deviating from profile $\mathbf{u''}$ to $\mathbf{u}$, something that contradicts  truthfulness.
\end{proof}
\medskip


\begin{proof} [\textbf{Proof of Lemma \ref{best}}]
	Due to symmetry, it suffices to prove the statement for $i=1$. If $N_1 = \emptyset$ then the statement is trivially true. So assume $N_1 \neq \emptyset$ and suppose, towards a contradiction, that the statement does not hold. 
	That is, there exists a profile $\mathbf{v}=(v_1, v_2)$ such that 
$X_1^{N_1}(\mathbf{v}) \notin \argmax_{S\in \mathcal{O}_1} v_1(S)$. We consider two cases, depending on whether $X_2^{N_2}(\mathbf{v})$ is in $\mathcal{O}_2$ or not. In both cases, we create a series of deviations that eventually contradict truthfulness. Like in the proof of Lemma \ref{output}, our first goal is to reach a profile $\mathbf{u}$ that contradicts the statement, like $\mathbf{v}$, but has appropriately selected values. Using $\mathbf{u}$ as a starting point we create profiles in which the allocations dictated by truthfulness are in conflict.
	
	\noindent \textbf{Case 1.} Assume $X_2^{N_2}(\mathbf{v}) \in \mathcal{O}_2$ (note that this includes the case where $\mathcal{O}_2 = \{\emptyset\}$). Intuitively this is the case where the two players trade value between $N_1$ and $E_2$. 
	
	Consider the profile $\mathbf{v}'=(v_1, v_2^{\textsc{i}})$, where
	\begin{equation*}
	\begin{IEEEeqnarraybox}[
	\IEEEeqnarraystrutmode
	\IEEEeqnarraystrutsizeadd{2pt}
	{0.5pt}
	][c]{c/v/c/v/c/V/c/v/c/V/c/V/c/v/c}
	&& X_1^{N_1}(\mathbf{v}) && X_2^{N_1}(\mathbf{v})  && X_1^{N_2}(\mathbf{v}) && X_2^{N_2}(\mathbf{v})  && E_1 && X_1^{E_2}(\mathbf{v}) && X_2^{E_2}(\mathbf{v})   \\\hline
	v_2^{\textsc{i}} && \text{---}\ m\ \text{---} && \text{---}\ m^3\ \text{---}  && \text{---}\ m\ \text{---} && \text{---}\ m^4\ \text{---}  && \text{---}\ 1 \ \text{---} && \text{---}\ m^2 \ \text{---} && \text{---}\ m^4 \ \text{---}
	\end{IEEEeqnarraybox}
	\end{equation*}
	By truthfulness, $X_2(\mathbf{v}') \supseteq X_2^{N_1}(\mathbf{v}) \cup X_2^{N_2}(\mathbf{v}) \cup X_2^{E_2}(\mathbf{v})$. This implies $X_2^{N_2}(\mathbf{v}') = X_2^{N_2}(\mathbf{v})$ due to the maximality of $X_2^{N_2}(\mathbf{v})$ and Lemma \ref{output}, as well as $X_1^{N_1}(\mathbf{v}')\subseteq X_1^{N_1}(\mathbf{v})$. The latter implies that $X_1^{N_1}(\mathbf{v}') \notin \argmax_{S\in \mathcal{O}_1} v_1(S)$. 
	
	\begin{claim}\label{cl:n1a}
		$X_1^{E_2}(\mathbf{v}')\neq \emptyset$.
	\end{claim}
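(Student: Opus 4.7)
The plan is to prove the claim by contradiction. Suppose $X_1^{E_2}(\mathbf{v}') = \emptyset$, i.e., $E_2 \subseteq X_2(\mathbf{v}')$.

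First, I would convert the standing assumption $X_1^{N_1}(\mathbf{v}) \notin \argmax_{S \in \mathcal{O}_1} v_1(S)$ into a usable lower bound on player 1's value at $\mathbf{v}'$. Fix $S^* \in \argmax_{S \in \mathcal{O}_1} v_1(S)$, so that $v_1(S^*) > v_1(X_1^{N_1}(\mathbf{v}))$. Since $\mathcal{O}_1 = \{S \mysetminus E_1 \mid S \in \mathcal{A}_1\}$, the set $S^*$ is a subset of some set in $\mathcal{A}_1$ and hence is itself controlled by player~1 (Control Lemma). Letting $\bar{v}_1$ be a valuation in which player~1 strongly desires $S^*$, control gives $S^* \subseteq X_1(\bar{v}_1, v_2^{\textsc{i}})$, and truthfulness of player~1 at $\mathbf{v}'$ then yields
\[
v_1(X_1(\mathbf{v}')) \;\geq\; v_1(X_1(\bar{v}_1, v_2^{\textsc{i}})) \;\geq\; v_1(S^*) \;>\; v_1(X_1^{N_1}(\mathbf{v})) \;\geq\; v_1(X_1^{N_1}(\mathbf{v}')).
\]
Combined with the already established $X_1^{N_2}(\mathbf{v}') = X_1^{N_2}(\mathbf{v})$ and our contradiction hypothesis $X_1^{E_2}(\mathbf{v}') = \emptyset$, the ``missing'' value must be entirely absorbed by $X_1^{E_1}(\mathbf{v}')$; in particular $X_1^{E_1}(\mathbf{v}') \neq \emptyset$.

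The next step is to invalidate this last escape hatch via an auxiliary profile $\mathbf{v}^{\dagger} = (v_1, v_2^{\dagger})$, where $v_2^{\dagger}$ is obtained from $v_2^{\textsc{i}}$ by lifting the values on $E_1$ into a fresh top tier (say $m^5$) while preserving the ordering of every other tier. A chain of truthfulness deductions for player~2, in the same style as the one preceding the claim that gave $X_2(\mathbf{v}') \supseteq X_2^{N_1}(\mathbf{v}) \cup X_2^{N_2}(\mathbf{v}) \cup X_2^{E_2}(\mathbf{v})$ and $X_2^{N_2}(\mathbf{v}') = X_2^{N_2}(\mathbf{v})$, should propagate to $\mathbf{v}^{\dagger}$, and the new top tier should additionally pin $E_1 \subseteq X_2(\mathbf{v}^{\dagger})$. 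Re-running the first paragraph's argument at $\mathbf{v}^{\dagger}$ with the same $\bar{v}_1$, player~1 would still need $v_1(X_1(\mathbf{v}^{\dagger})) \geq v_1(S^*)$ while having $X_1^{E_1}(\mathbf{v}^{\dagger}) = \emptyset$, $X_1^{N_1}(\mathbf{v}^{\dagger}) \subseteq X_1^{N_1}(\mathbf{v})$, and $X_1^{N_2}(\mathbf{v}^{\dagger}) = X_1^{N_2}(\mathbf{v})$, so the only way to balance the books is $X_1^{E_2}(\mathbf{v}^{\dagger}) \neq \emptyset$. A final application of player~2's truthfulness between $\mathbf{v}^{\dagger}$ and $\mathbf{v}'$ (which differ only on $E_1$) should then transfer $X_1^{E_2}(\mathbf{v}^{\dagger}) \neq \emptyset$ back to $X_1^{E_2}(\mathbf{v}') \neq \emptyset$, contradicting the assumption.

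The main obstacle is the fine calibration of $v_2^{\dagger}$ and, crucially, the verification that the $E_1$-only modification of $v_2^{\textsc{i}}$ does not perturb the allocation on $E_2$. This is a miniature ``separation'' statement between $E_1$ and $E_2$ at the profiles under consideration, which must be argued without circularly invoking Lemma \ref{indie} (which pertains to $N_1 \cup N_2$ versus $E_1 \cup E_2$ and not to the interior of $E_1 \cup E_2$). The techniques are exactly those of the carefully layered multi-step profile constructions already employed in the proof of Lemma \ref{output} and immediately before Claim \ref{cl:n1a}; the bulk of a full proof is the delicate bookkeeping to verify that truthfulness strictly pins down each link of the chain.
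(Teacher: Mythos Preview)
Your argument has a genuine gap at the very first step, and the later steps compound rather than repair it. From $v_1(X_1(\mathbf{v}')) \geq v_1(S^*) > v_1(X_1^{N_1}(\mathbf{v}'))$ together with $X_1^{N_2}(\mathbf{v}')=X_1^{N_2}(\mathbf{v})$ and $X_1^{E_2}(\mathbf{v}')=\emptyset$, you \emph{cannot} conclude $X_1^{E_1}(\mathbf{v}')\neq\emptyset$: the shortfall $v_1(S^*)-v_1(X_1^{N_1}(\mathbf{v}'))$ may be absorbed entirely by $v_1(X_1^{N_2}(\mathbf{v}))$. So the ``missing value'' need not land in $E_1$ at all. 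The second step is also unsound: raising $v_2$ on $E_1$ does not pin $E_1\subseteq X_2(\mathbf{v}^{\dagger})$, because player~1 controls $E_1$; nothing in the mechanism forces items to go to the player who values them most. The ``separation'' you identify as an obstacle is therefore not the real obstruction---the plan would fail even if such a separation held.

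The paper's proof is a single deviation with one extra idea that you just missed: instead of having player~1 strongly desire $S^*\in\mathcal{O}_1$, have him strongly desire $S^*\cup E_1$, which equals some element of $\mathcal{A}_1$ and is therefore controlled by player~1. Then in the deviation player~1 is guaranteed \emph{both} $S^*$ and all of $E_1$, so the $E_1$ escape hatch is closed directly. The remaining point---that player~1 also keeps $X_1^{N_2}(\mathbf{v})$ in the deviation---follows from Lemma~\ref{output} together with the Case~1 assumption $X_2^{N_2}(\mathbf{v})\in\mathcal{O}_2$ and the fact that distinct elements of $\mathcal{O}_2$ are incomparable (being projections of the maximal sets $\mathcal{A}_2$). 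This yields the strict improvement immediately, with no auxiliary profile and no separation lemma.
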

	
	\begin{proof}[Proof of Claim \ref{cl:n1a}]\renewcommand{\qedsymbol}{$\triangleleft$}
		Suppose $X_1^{E_2}(\mathbf{v}') = \emptyset$ and let $S'\in \argmax_{S\in \mathcal{O}_1} v_1(S)$. Then player 1, whose total received value in $\mathbf{v}'$ would be strictly less than $v_1(S'\cup (X_1^{N_2}(\mathbf{v}))\cup E_1)$, could force the mechanism to give him at least that by playing  
		\begin{equation*}
		\begin{IEEEeqnarraybox}[
		\IEEEeqnarraystrutmode
		\IEEEeqnarraystrutsizeadd{2pt}
		{0.5pt}
		][c]{c/v/c/v/c/V/c/V/c/V/c}
		&& S' && N_1 \mysetminus S'  &&  N_2  && E_1 && E_2    \\\hline
		v_1^{\textsc{i}} && \text{---}\ m\ \text{---}   && \text{---}\ 1\ \text{---} && \text{---}\ 1\ \text{---}  && \text{---}\ m \ \text{---} && \text{---}\ 1 \ \text{---} 
		\end{IEEEeqnarraybox}
		\end{equation*}
		By the definition of $N_1$, $N_2$, $E_1$, and Lemma \ref{output}, player 1 gets $S'$, $N_2\mysetminus X_2^{N_2}(\mathbf{v})$, and $E_1$ (and possibly something from $E_2$). Since this contradicts truthfulness, it must be the case that $X_1^{E_2}(\mathbf{v}')\neq \emptyset$. (In fact, this settles Case 1 when $E_2=\emptyset$.)
	\end{proof}

	Next, let $S_1\in \mathcal{O}_1$ be such that $X_1^{N_1}(\mathbf{v}')\subseteq S_1$ (they could possibly be equal). Consider the profile $\mathbf{u}=(v_1^{\textsc{ii}}, v_2^{\textsc{i}})$, where
	\begin{equation*}
	\begin{IEEEeqnarraybox}[
	\IEEEeqnarraystrutmode
	\IEEEeqnarraystrutsizeadd{2pt}
	{0.5pt}
	][c]{c/v/c/v/c/v/c/V/c/V/c/v/c/V/c/v/c}
	&& X_1^{N_1}(\mathbf{v}') && S_1 \mysetminus X_1^{N_1}(\mathbf{v}')  && N_1 \mysetminus S_1 && N_2  && X_1^{E_1}(\mathbf{v}') && X_2^{E_1}(\mathbf{v}') && X_1^{E_2}(\mathbf{v}') && X_2^{E_2}(\mathbf{v}')   \\\hline
	v_1^{\textsc{ii}} && \text{---}\  m \ \text{---} && \text{---}\ 1 + m^{-1}\ \text{---}  && \text{---}\ 1\ \text{---} && \text{---}\ 1 \ \text{---}  && \text{---}\ m^3 \ \text{---} && \text{---}\ 1 \ \text{---} && \text{---}\ m^2 \ \text{---} && \text{---}\ 1 \ \text{---}
	\end{IEEEeqnarraybox}
	\end{equation*}
	Notice that $S_1$ is the unique set in $\argmax_{S\in \mathcal{O}_1} v_1^{\textsc{ii}}(S)$. By truthfulness, $X_1(\mathbf{u}) \supseteq X_1^{N_1}(\mathbf{v}') \cup X_1^{E_1}(\mathbf{v}') \cup X_1^{E_2}(\mathbf{v}')$. 
	
	\begin{claim}\label{cl:n1b}
		$S_1 \nsubseteq X_1(\mathbf{u})$, and therefore $X_1^{N_1}(\mathbf{u})\notin \argmax_{S\in \mathcal{O}_1} v_1^{\textsc{ii}}(S)$.
	\end{claim}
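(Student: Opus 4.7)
Proof plan. I would argue by contradiction, assuming $S_1 \subseteq X_1(\mathbf{u})$. Combined with the containment $X_1(\mathbf{u}) \supseteq X_1^{N_1}(\mathbf{v}') \cup X_1^{E_1}(\mathbf{v}') \cup X_1^{E_2}(\mathbf{v}')$ already established just before the claim, this gives
\[ X_1(\mathbf{u}) \supseteq S_1 \cup X_1^{E_1}(\mathbf{v}') \cup X_1^{E_2}(\mathbf{v}')\,.\]
In particular, in going from profile $\mathbf{v}'$ to profile $\mathbf{u}$, player $1$ gains at least the items of $S_1\setminus A$, while possibly losing some items from $N_2$.

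The strategy is to extract two opposing inequalities by applying truthfulness for player $1$ between the profiles $\mathbf{v}'$ and $\mathbf{u}$ (which differ only in player $1$'s reported valuation), and then chain with a further auxiliary profile to reach a contradiction. The forward direction $v_1(X_1(\mathbf{v}'))\geq v_1(X_1(\mathbf{u}))$, after cancelling the common items $A\cup X_1^{E_1}(\mathbf{v}')\cup X_1^{E_2}(\mathbf{v}')$, reduces to
\[ v_1(X_1^{N_2}(\mathbf{v}')) \;\geq\; v_1(S_1\setminus A) + v_1\bigl(X_1(\mathbf{u})\setminus(S_1\cup X_1^{E_1}(\mathbf{v}')\cup X_1^{E_2}(\mathbf{v}'))\bigr)\,.\]
The reverse direction $v_1^{\textsc{ii}}(X_1(\mathbf{u}))\geq v_1^{\textsc{ii}}(X_1(\mathbf{v}'))$, together with the explicit values of $v_1^{\textsc{ii}}$ (items of $S_1\setminus A$ valued $1+m^{-1}$ while items of $N_2$ are valued $1$), yields a matching upper bound that pins down $|X_1^{N_2}(\mathbf{v}')|$.

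To close the contradiction, I would introduce an auxiliary profile $\mathbf{u}^{\star}=(v_1^{\star},v_2^{\textsc{i}})$ where $v_1^{\star}$ coincides with $v_1^{\textsc{ii}}$ on $N_1\cup E_1\cup E_2$ but assigns arbitrarily small positive values to the items of $N_2$. Re-running the chain of truthfulness steps that led to $X_1(\mathbf{u})\supseteq S_1\cup X_1^{E_1}(\mathbf{v}')\cup X_1^{E_2}(\mathbf{v}')$, with $\mathbf{u}^{\star}$ in place of $\mathbf{u}$, I would obtain the analogous containment $X_1(\mathbf{u}^{\star})\supseteq S_1\cup X_1^{E_1}(\mathbf{v}')\cup X_1^{E_2}(\mathbf{v}')$; at the same time, truthfulness of the mechanism on player $2$'s side (whose report $v_2^{\textsc{i}}$ is unchanged and values $N_2$-items highly) together with Lemma~\ref{output} forces $X_1^{N_2}(\mathbf{u}^{\star})=\emptyset$. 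Applying forward truthfulness between $\mathbf{v}'$ and $\mathbf{u}^{\star}$ under the true valuation $v_1$ then yields $v_1(X_1^{N_2}(\mathbf{v}'))\geq v_1(S_1\setminus A)+v_1(X_1^{N_2}(\mathbf{v}'))$, i.e.\ $v_1(S_1\setminus A)\leq 0$, contradicting the fact that $S_1\setminus A$ is a (possibly empty but strictly positive-value when nonempty) subset and, combined with the non-argmax hypothesis $X_1^{N_1}(\mathbf{v})\notin \argmax_{S\in\mathcal{O}_1}v_1(S)$, the case $S_1=A$ can be ruled out by an independent argument involving a small perturbation of $v_1^{\textsc{ii}}$ on $A$.

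The main obstacle will be rigorously justifying that the perturbation $v_1^{\textsc{ii}}\rightsquigarrow v_1^{\star}$ does not disturb the mechanism's allocation on $E_1\cup E_2$ nor on $N_1$, so that the earlier chain of truthfulness deductions carries over intact. This requires re-auditing each intermediate profile used in deriving $X_1(\mathbf{u})\supseteq X_1^{N_1}(\mathbf{v}')\cup X_1^{E_1}(\mathbf{v}')\cup X_1^{E_2}(\mathbf{v}')$, and invoking Lemma \ref{output} on player $2$'s side to ensure that the $N_2$-allocation shifts entirely to player $2$ without side effects on the rest of the bundle.
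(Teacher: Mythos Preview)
Your approach has a fundamental gap. The containment $X_1(\mathbf{u})\supseteq S_1\cup X_1^{E_1}(\mathbf{v}')\cup X_1^{E_2}(\mathbf{v}')$ is \emph{not} something derived by the preceding chain of truthfulness steps; only $X_1(\mathbf{u})\supseteq X_1^{N_1}(\mathbf{v}')\cup X_1^{E_1}(\mathbf{v}')\cup X_1^{E_2}(\mathbf{v}')$ follows from truthfulness. The inclusion of all of $S_1$ is precisely the hypothesis you assume for contradiction. Consequently, when you pass to the perturbed profile $\mathbf{u}^{\star}=(v_1^{\star},v_2^{\textsc{i}})$, you have no way to conclude $S_1\subseteq X_1(\mathbf{u}^{\star})$: the hypothesis is tied to the specific profile $\mathbf{u}$ and does not transport. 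Without that, your final inequality $v_1(S_1\setminus A)\le 0$ never materializes. (Separately, Lemma~\ref{output} and player~2's truthfulness do not force $X_1^{N_2}(\mathbf{u}^{\star})=\emptyset$ as you claim, but this is moot once the main step fails.)

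The paper's argument works from the other side entirely: it exploits \emph{player~2's} incentives rather than player~1's. Assuming $S_1\subseteq X_1(\mathbf{u})$, it shows that player~2 can deviate from $v_2^{\textsc{i}}$ to a simple valuation $v_2^{\textsc{ii}}$ that strongly desires $X_2^{N_2}(\mathbf{v}')\cup E_2$; under $(v_1^{\textsc{ii}},v_2^{\textsc{ii}})$ player~2 then secures all of $E_2$, including the nonempty set $X_1^{E_2}(\mathbf{v}')$ (this is exactly where Claim~\ref{cl:n1a} is used). Since those items are valued $m^2$ each under $v_2^{\textsc{i}}$, while anything player~2 might have lost from $E_1$ is valued only $1$, the deviation is strictly profitable, contradicting truthfulness. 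The key insight you are missing is that the contradiction must run through player~2, because the assumption $S_1\subseteq X_1(\mathbf{u})$ constrains player~2's bundle at $\mathbf{u}$, not player~1's options elsewhere.
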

	
	\begin{proof}[Proof of Claim \ref{cl:n1b}]\renewcommand{\qedsymbol}{$\triangleleft$}
		Suppose $S_1 \subseteq X_1(\mathbf{u})$. By Lemma \ref{output} this means $S_1 = X_1^{N_1}(\mathbf{u})$. Then player 2, whose total received value in $\mathbf{u}$ would be strictly less than $v_2^{\textsc{ii}}((N_1\mysetminus S_1)\cup X_2^{N_2}(\mathbf{v}')\cup X_2^{E_2}(\mathbf{v}')) + m$, could force the mechanism to give him more than that by playing  
		\begin{equation*}
		\begin{IEEEeqnarraybox}[
		\IEEEeqnarraystrutmode
		\IEEEeqnarraystrutsizeadd{2pt}
		{0.5pt}
		][c]{c/v/c/V/c/v/c/V/c/V/c}
		&& N_1 && X_1^{N_2}(\mathbf{v}')  &&  X_2^{N_2}(\mathbf{v}')  && E_1 && E_2    \\\hline
		v_2^{\textsc{ii}} && \text{---}\ 1\ \text{---}   && \text{---}\ 1\ \text{---} && \text{---}\ m\ \text{---}  && \text{---}\ 1 \ \text{---} && \text{---}\ m \ \text{---} 
		\end{IEEEeqnarraybox}
		\end{equation*}
		By the definition of $N_2$, $E_2$, in $\mathbf{v}''=(v_1^{\textsc{ii}}, v_2^{\textsc{ii}})$ player 2 gets $X_2^{N_2}(\mathbf{v}')$ and $E_2$ (and possibly something from $N_1$ and $E_1$). Given that, the maximum value that player 1 could achieve in $\mathbf{v}''$ is $v_1^{\textsc{ii}}(S_1 \cup X_1^{N_2}(\mathbf{v}') \cup E_1)$ and there is no subset of $M\mysetminus (X_2^{N_2}(\mathbf{v}')\cup E_2)$ giving this value other than $S_1 \cup X_1^{N_2}(\mathbf{v}') \cup E_1$. In fact,  player 1 can achieve exactly this by increasing his reported value for each item in $S_1\cup E_1$ to $m^3$. Thus $X_1(\mathbf{v}'')= S_1\cup X_1^{N_2}(\mathbf{v}') \cup E_1$ and $v_2^{\textsc{ii}}(X_2(\mathbf{v}'')) = v_2^{\textsc{ii}}((N_1\mysetminus S_1) \cup X_2^{N_2}(\mathbf{v}') \cup E_2) \ge v_2^{\textsc{ii}}((N_1\mysetminus S_1) \cup X_2^{N_2}(\mathbf{v}') \cup X_2^{E_2}(\mathbf{v}')) + m^2$. Since this contradicts truthfulness, it must be the case that $S_1 \nsubseteq X_1(\mathbf{u})$ (and thus $X_1^{N_1}(\mathbf{u})\notin \argmax_{S\in \mathcal{O}_1} v_1^{\textsc{ii}}(S)$).
	\end{proof}

	Claim \ref{cl:n1b} implies that $S_1 \mysetminus X_1^{N_1}(\mathbf{u}) \neq \emptyset$. Since the sets in $\mathcal{O}_1$ have empty intersection, there must exist some $T\in \mathcal{O}_1$ such that $S_1 \mysetminus X_1^{N_1}(\mathbf{u}) \nsubseteq T$. We are going to concentrate most of player 2's value from $N_1$ on $W = (S_1 \mysetminus X_1^{N_1}(\mathbf{u}))\mysetminus T \subseteq X_2^{N_1}(\mathbf{u})$. Notice that $W\neq  \emptyset$.
	
	So consider the profile $\mathbf{u}'=(v_1^{\textsc{ii}}, v_2^{\textsc{iii}})$, where
	\begin{equation*}
	\begin{IEEEeqnarraybox}[
	\IEEEeqnarraystrutmode
	\IEEEeqnarraystrutsizeadd{2pt}
	{0.5pt}
	][c]{c/v/c/v/c/V/c/v/c/V/c/V/c/v/c}
	&& N_1\mysetminus W && W  && X_1^{N_2}(\mathbf{v}) && X_2^{N_2}(\mathbf{v})  && E_1 && X_1^{E_2}(\mathbf{v}') && X_2^{E_2}(\mathbf{v}')   \\\hline
	v_2^{\textsc{iii}} && \text{---}\ m\ \text{---} && \text{---}\ m^3\ \text{---}  && \text{---}\ m\ \text{---} && \text{---}\ m^4\ \text{---}  && \text{---}\ 1 \ \text{---} && \text{---}\ m^2 \ \text{---} && \text{---}\ m^4 \ \text{---}
	\end{IEEEeqnarraybox}
	\end{equation*}
	By the definition of $N_2$, $E_2$ and truthfulness, $X_2(\mathbf{u}') \supseteq W \cup X_2^{N_2}(\mathbf{v}) \cup X_2^{E_2}(\mathbf{v}')$.
	
	\begin{claim}\label{cl:n1c}
		$X_1^{E_2}(\mathbf{u}')\neq \emptyset$.
	\end{claim}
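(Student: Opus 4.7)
The plan is to mimic the strategy of Claim \ref{cl:n1a}: assume for contradiction that $X_1^{E_2}(\mathbf{u}') = \emptyset$, and exhibit a deviation for player $1$ that strictly increases his utility under his true valuation $v_1^{\textsc{ii}}$.

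First I would pin down an upper bound on $v_1^{\textsc{ii}}(X_1(\mathbf{u}'))$. Combining the inclusion $X_2(\mathbf{u}') \supseteq W \cup X_2^{N_2}(\mathbf{v}) \cup X_2^{E_2}(\mathbf{v}')$ already established right before the claim, with the assumption $X_1^{E_2}(\mathbf{u}') = \emptyset$ (which forces $E_2 \subseteq X_2(\mathbf{u}')$, hence also $X_1^{E_2}(\mathbf{v}') \subseteq X_2(\mathbf{u}')$), player $1$'s total value is bounded by $v_1^{\textsc{ii}}\big((N_1 \mysetminus W) \cup X_1^{N_2}(\mathbf{v}) \cup E_1\big)$. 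The crucial point is that this bound contains \emph{no} contribution of order $m^2$, since $v_1^{\textsc{ii}}$ only assigns such large values to items in $X_1^{E_2}(\mathbf{v}')$.

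Next, I would construct a deviation $v_1^{\textsc{iv}}$ that plays the same role as $v_1^{\textsc{i}}$ did in Claim \ref{cl:n1a}: put huge values on $S_1$, on $E_1$, and on the nonempty set $X_1^{E_2}(\mathbf{v}')$ (with the remaining items assigned value $1$), so that player $1$ strongly desires each of these items. Appealing to the definition of $N_1$, $N_2$, $E_1$, the Control Lemma, and Lemma \ref{output}, under the profile $(v_1^{\textsc{iv}}, v_2^{\textsc{iii}})$ the mechanism is forced to give player $1$ the sets $S_1$, $N_2 \mysetminus X_2^{N_2}(\mathbf{v})$, and $E_1$. The main obstacle is justifying that player $1$ also receives (at least) $X_1^{E_2}(\mathbf{v}')$, because $E_2$ is by default part of player $2$'s endowment; I would resolve this by chaining truthfulness arguments against an intermediate profile that links $(v_1^{\textsc{iv}}, v_2^{\textsc{iii}})$ back to $\mathbf{v}'$ (where player $1$ did receive $X_1^{E_2}(\mathbf{v}')$), changing one valuation at a time so that, at each step, player $2$'s bundle on $E_2$ can only shrink.

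Finally, comparing the two sides under $v_1^{\textsc{ii}}$, the deviation yields at least $v_1^{\textsc{ii}}\big(S_1 \cup X_1^{N_2}(\mathbf{v}) \cup E_1 \cup X_1^{E_2}(\mathbf{v}')\big)$. The swap of $N_1 \mysetminus W$ for $S_1 \subseteq N_1$ changes the $N_1$-value by at most $O(m)$, while the extra contribution of $X_1^{E_2}(\mathbf{v}')$ is at least $m^2$ since $v_1^{\textsc{ii}}$ values each such item at $m^2$ and $X_1^{E_2}(\mathbf{v}') \neq \emptyset$ by Claim \ref{cl:n1a}. For $m$ large enough (and trivially otherwise), this is a strict improvement, contradicting truthfulness of $\mathcal{X}$ and proving $X_1^{E_2}(\mathbf{u}') \neq \emptyset$.
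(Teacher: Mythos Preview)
Your overall strategy---assume $X_1^{E_2}(\mathbf{u}')=\emptyset$ and produce a profitable deviation for player~1---is right, and the structure parallels Claim~\ref{cl:n1a} as intended. The problem is the detour through $E_2$. You bound player~1's value in $\mathbf{u}'$ by $v_1^{\textsc{ii}}\big((N_1\mysetminus W)\cup X_1^{N_2}(\mathbf{v})\cup E_1\big)$, which is too loose; to beat it you then need player~1 to pick up $X_1^{E_2}(\mathbf{v}')$ in the deviation. But $E_2$ is controlled by player~2, and putting high values on $X_1^{E_2}(\mathbf{v}')$ in $v_1^{\textsc{iv}}$ does not by itself force any item of $E_2$ into player~1's bundle. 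Your ``chaining'' sketch does not close this: changing player~1's report can alter which (if any) exchange is realized, so there is no reason player~2's bundle on $E_2$ ``can only shrink'' along the chain from $\mathbf{v}'$ to $(v_1^{\textsc{iv}}, v_2^{\textsc{iii}})$. At this stage of the proof the exchange structure on $E_1\cup E_2$ has not yet been established, so you cannot appeal to it either.

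The paper avoids this obstacle entirely by tightening the upper bound using Lemma~\ref{output}. Since $X_1^{N_1}(\mathbf{u}')$ is contained in some offer of $\mathcal{O}_1$, misses $W\subseteq S_1$ (with $W\neq\emptyset$), and $S_1$ is the unique maximizer of $v_1^{\textsc{ii}}$ over $\mathcal{O}_1$, one gets $v_1^{\textsc{ii}}(X_1^{N_1}(\mathbf{u}'))<v_1^{\textsc{ii}}(S_1)$ directly, hence $v_1^{\textsc{ii}}(X_1(\mathbf{u}'))<v_1^{\textsc{ii}}(S_1\cup X_1^{N_2}(\mathbf{v})\cup E_1)$. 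Then the deviation $v_1^{\textsc{iii}}$ placing value $m$ on $S_1\cup E_1$ and $1$ elsewhere secures exactly $S_1$, $N_2\mysetminus X_2^{N_2}(\mathbf{v})$, and $E_1$---all of which lie in sets player~1 controls (or in $N_2$, where player~2's maximal offer is already $X_2^{N_2}(\mathbf{v})$). No item from $E_2$ is needed, and the contradiction is immediate. In short: invoke Lemma~\ref{output} to sharpen your bound on the $N_1$-part, and drop the attempt to grab $X_1^{E_2}(\mathbf{v}')$.
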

	
	\begin{proof}[Proof of Claim \ref{cl:n1c}]\renewcommand{\qedsymbol}{$\triangleleft$}
		This is very similar to the proof of Claim \ref{cl:n1a}. Suppose $X_1^{E_2}(\mathbf{u}') = \emptyset$. Then player 1, whose total received value in $\mathbf{u}'$ would be strictly less than $v_1^{\textsc{ii}}(S_1 \cup X_1^{N_2}(\mathbf{v}) \cup E_1)$, could force the mechanism to give him at least that by playing  
		\begin{equation*}
		\begin{IEEEeqnarraybox}[
		\IEEEeqnarraystrutmode
		\IEEEeqnarraystrutsizeadd{2pt}
		{0.5pt}
		][c]{c/v/c/v/c/V/c/V/c/V/c}
		&& S_1 && N_1 \mysetminus S_1  &&  N_2  && E_1 && E_2    \\\hline
		v_1^{\textsc{iii}} && \text{---}\ m\ \text{---}   && \text{---}\ 1\ \text{---} && \text{---}\ 1\ \text{---}  && \text{---}\ m \ \text{---} && \text{---}\ 1 \ \text{---} 
		\end{IEEEeqnarraybox}
		\end{equation*}
		Since this contradicts truthfulness, it must be the case where $X_1^{E_2}(\mathbf{u}')\neq \emptyset$.
	\end{proof}
	
	
	Before we examine the final profile of the proof, let us consider the following simple profile $\mathbf{u}''=(v_1^{\textsc{iv}}, v_2^{\textsc{iv}})$:
	\begin{equation*}
	\begin{IEEEeqnarraybox}[
	\IEEEeqnarraystrutmode
	\IEEEeqnarraystrutsizeadd{2pt}
	{0.5pt}
	][c]{c/v/c/v/c/V/c/v/c/V/c/V/c/v/c}
	&& T && N_1\mysetminus T && X_1^{N_2}(\mathbf{v}) && X_2^{N_2}(\mathbf{v})  && E_1 && X_1^{E_2}(\mathbf{u}') && X_2^{E_2}(\mathbf{u}') \\\hline
	v_1^{\textsc{iv}} && \text{---}\ m\ \text{---} && \text{---}\ 1\ \text{---} && \text{---}\ 1\ \text{---}  && \text{---}\ 1\ \text{---}  && \text{---}\ 1\ \text{---}  && \text{---}\ m^2\ \text{---} && \text{---}\ 1 \ \text{---}   \\\hline
	v_2^{\textsc{iv}} && \text{---}\ 1\ \text{---} && \text{---}\ 1\ \text{---}  && \text{---}\ 1\ \text{---} && \text{---}\ m\ \text{---}  && \text{---}\ 1\ \text{---} && \text{---}\ m \ \text{---} && \text{---}\ m \ \text{---}
	\end{IEEEeqnarraybox}
	\end{equation*}
	By the definition of $N_2$, $E_2$, in $\mathbf{u}''$ player 2 gets $X_2^{N_2}(\mathbf{v})$ and $E_2$ (and possibly something from $N_1$ and $E_1$). Given that, the maximum value that player 1 could achieve in $\mathbf{u}''$ is $|T|\cdot m + |X_1^{N_2}(\mathbf{v}) \cup E_1|$, and there is no subset of $M\mysetminus (X_2^{N_2}(\mathbf{v})\cup E_2)$ giving this value other than $T \cup X_1^{N_2}(\mathbf{v}) \cup E_1$. In fact, player 1 can achieve exactly this by increasing his reported value for each item in $T\cup E_1$ to $m^3$. Thus $X_1(\mathbf{u}'')= T\cup X_1^{N_2}(\mathbf{v}) \cup E_1$ and $X_2(\mathbf{u}'') = (N_1\mysetminus T)\cup X_2^{N_2}(\mathbf{v}) \cup E_2$.
	
	The final profile we need is $\mathbf{u}'''=(v_1^{\textsc{iv}}, v_2^{\textsc{iii}})$, and the contradiction follows from the allocation of the items in $X_1^{E_2}(\mathbf{u}')$. If $X_1^{E_2}(\mathbf{u}') \nsubseteq X_1(\mathbf{u}''')$ then player 1 has incentive to deviate to profile $\mathbf{u}'=(v_1^{\textsc{ii}}, v_2^{\textsc{iii}})$. So, it must be the case where $X_1^{E_2}(\mathbf{u}') \subseteq X_1(\mathbf{u}''')$, and therefore $v_2^{\textsc{iii}}(X_2(\mathbf{u}''')) \le v_2^{\textsc{iii}}(M\mysetminus X_1^{N_2}(\mathbf{u}')) < v_2^{\textsc{iii}}(W \cup X_2^{N_2}(\mathbf{v}) \cup X_2^{E_2}(\mathbf{u}')) +m^2$. On the other hand, notice that $W\subseteq N_1 \mysetminus T$. Using the allocation for $\mathbf{u}''$ we derived above, by truthfulness we have that $v_2^{\textsc{iii}}(X_2(\mathbf{u}''')) \ge v_2^{\textsc{iii}}(W \cup X_2^{N_2}(\mathbf{v}) \cup E_2) \ge v_2^{\textsc{iii}}(W \cup X_2^{N_2}(\mathbf{v}) \cup X_2^{E_2}(\mathbf{u}')) +m^2$, which is a contradiction.  \smallskip


	\noindent \textbf{Case 2.} Assume $X_2^{N_2}(\mathbf{v}) \notin \mathcal{O}_2$. Case 1 implies that not only $X_1^{N_1}(\mathbf{v}) \notin \argmax_{S\in \mathcal{O}_1} v_1(S)$ but $X_1^{N_1}(\mathbf{v}) \notin \mathcal{O}_1$. Intuitively this is the case where the two players trade value between $N_1$ and $N_2$. The proof uses a  sequence of profiles similar to Case 1.  
	
	Consider the profile $\mathbf{v}'=(v_1, v_2^{\textsc{i}})$, where
	\begin{equation*}
	\begin{IEEEeqnarraybox}[
	\IEEEeqnarraystrutmode
	\IEEEeqnarraystrutsizeadd{2pt}
	{0.5pt}
	][c]{c/v/c/v/c/V/c/v/c/V/c/V/c/}
	&& X_1^{N_1}(\mathbf{v}) && X_2^{N_1}(\mathbf{v})  && X_1^{N_2}(\mathbf{v}) && X_2^{N_2}(\mathbf{v})  && E_1 && E_2   \\\hline
	v_2^{\textsc{i}} && \text{---}\ 1\ \text{---} && \text{---}\ m^2\ \text{---}  && \text{---}\ m\ \text{---} && \text{---}\ m^3\ \text{---}  && \text{---}\ 1 \ \text{---} && \text{---}\ 1 \ \text{---} 
	\end{IEEEeqnarraybox}
	\end{equation*}
	By truthfulness, $X_2(\mathbf{v}') \supseteq X_2^{N_1}(\mathbf{v}) \cup X_2^{N_2}(\mathbf{v})$. This implies  $X_1^{N_1}(\mathbf{v}')\subseteq X_1^{N_1}(\mathbf{v})$, and thus $X_1^{N_1}(\mathbf{v}') \notin \mathcal{O}_1$. By Case 1, this means that $X_2^{N_2}(\mathbf{v}') \notin \mathcal{O}_2$.

	Next, let $S_1\in \mathcal{O}_1$ be a minimal set of $ \mathcal{O}_1$ such that $X_1^{N_1}(\mathbf{v}')\subseteq S_1$. Since $X_1^{N_1}(\mathbf{v}') \notin \mathcal{O}_1$, we have $X_1^{N_1}(\mathbf{v}')\subsetneq S_1$. Consider the profile $\mathbf{u}=(v_1^{\textsc{i}}, v_2^{\textsc{i}})$, where
	\begin{equation*}
	\begin{IEEEeqnarraybox}[
	\IEEEeqnarraystrutmode
	\IEEEeqnarraystrutsizeadd{2pt}
	{0.5pt}
	][c]{c/v/c/v/c/v/c/V/c/v/c/V/c/V/c}
	&& X_1^{N_1}(\mathbf{v}') && S_1 \mysetminus X_1^{N_1}(\mathbf{v}')  && N_1 \mysetminus S_1 && X_1^{N_2}(\mathbf{v}') && X_2^{N_2}(\mathbf{v}')  && E_1  && E_2  \\\hline
	v_1^{\textsc{i}} && \text{---}\  m \ \text{---} && \text{---}\ 1 + m^{-1}\ \text{---}  && \text{---}\ 1\ \text{---} && \text{---}\ m \ \text{---} && \text{---}\ 1 \ \text{---}   && \text{---}\ 1 \ \text{---}  && \text{---}\ 1 \ \text{---}
	\end{IEEEeqnarraybox}
	\end{equation*}
	Notice that $S_1$ is the unique set in $\argmax_{S\in \mathcal{O}_1} v_1^{\textsc{i}}(S)$. 
	By truthfulness, $X_1(\mathbf{u}) \supseteq X_1^{N_1}(\mathbf{v}') \cup X_1^{N_2}(\mathbf{v}')$. 
	
	\begin{claim}\label{cl:n1d}
		$S_1 \nsubseteq X_1(\mathbf{u})$, and therefore $X_1^{N_1}(\mathbf{u})\notin \argmax_{S\in \mathcal{O}_1} v_1^{\textsc{i}}(S)$.
	\end{claim}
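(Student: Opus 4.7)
The proof plan mirrors Claim \ref{cl:n1b}: assume $S_1 \subseteq X_1(\mathbf{u})$ and exhibit a player-2 deviation whose induced allocation is incompatible with $\mathbf{u}$'s under truthfulness. Combined with the already-derived inclusion $X_1(\mathbf{u}) \supseteq X_1^{N_1}(\mathbf{v}') \cup X_1^{N_2}(\mathbf{v}')$, the assumption immediately yields the envelope $X_2(\mathbf{u}) \subseteq (N_1 \setminus S_1) \cup X_2^{N_2}(\mathbf{v}') \cup E_1 \cup E_2$, which will upper-bound the value player 2 achieves in $\mathbf{u}$.

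The key new ingredient relative to Case 1 is that now $X_2^{N_2}(\mathbf{v}') \notin \mathcal{O}_2$ (established just before the $\mathbf{u}$-construction in Case 2), so player 2 cannot directly control $X_2^{N_2}(\mathbf{v}') \cup E_2$. The natural replacement is provided by Lemma \ref{output}: there exists $S_2^* \in \mathcal{O}_2$ with $X_2^{N_2}(\mathbf{v}') \subsetneq S_2^*$, hence $S_2^* \cup E_2 \in \mathcal{A}_2$ is controlled by player 2, and the crucial set-theoretic observation is that $S_2^* \setminus X_2^{N_2}(\mathbf{v}')$ is a nonempty subset of $X_1^{N_2}(\mathbf{v}') \subseteq X_1(\mathbf{u})$ --- items that live with player 1 in $\mathbf{u}$ but that player 2 can wrest away by invoking control of the larger set. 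I will define $v_2^{\textsc{ii}}$ putting value $m^4$ on every item of $S_2^* \cup E_2$ and baseline value $1$ elsewhere, so that player 2 strongly desires $S_2^* \cup E_2$; by control, he obtains this set at any profile of the form $(\cdot, v_2^{\textsc{ii}})$, and in particular $S_2^* \cup E_2 \subseteq X_2(\mathbf{v}'')$ for $\mathbf{v}'' = (v_1^{\textsc{i}}, v_2^{\textsc{ii}})$.

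The remainder follows the blueprint of Claim \ref{cl:n1b}'s follow-up argument. Introducing a further player-1 deviation $v_1^{\textsc{iii}}$ that inflates the reported values on $S_1 \cup (X_1^{N_2}(\mathbf{v}') \setminus S_2^*) \cup E_1$ to $m^3$ identifies the unique $v_1^{\textsc{i}}$-maximizing subset of $M \setminus (S_2^* \cup E_2)$ available to player 1, and truthfulness applied to player 1 then forces $X_1(\mathbf{v}'') = S_1 \cup (X_1^{N_2}(\mathbf{v}') \setminus S_2^*) \cup E_1$ and hence $X_2(\mathbf{v}'') = (N_1 \setminus S_1) \cup S_2^* \cup E_2$. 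Comparing $v_2^{\textsc{ii}}$-utilities, the gain from the items of $S_2^* \setminus X_2^{N_2}(\mathbf{v}')$ (each of value $m^4$ under $v_2^{\textsc{ii}}$ and absent from $X_2(\mathbf{u})$ by the envelope above) dominates the at most $|E_1| + |E_2| < m$ possible loss, yielding a strict gap that contradicts truthfulness in exactly the same fashion as in Case 1; therefore $S_1 \nsubseteq X_1(\mathbf{u})$, and since $S_1$ is the unique element of $\argmax_{S \in \mathcal{O}_1} v_1^{\textsc{i}}(S)$, any $X_1^{N_1}(\mathbf{u}) \neq S_1$ cannot lie in this argmax. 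The main obstacle is the careful bookkeeping in the best-response step that pins down $X_1(\mathbf{v}'')$ precisely --- in particular ruling out that player 1 could recover items in $X_1^{N_2}(\mathbf{v}') \cap S_2^*$ via some alternative report, which requires leveraging the control structure and the disjointness of offers in $\mathcal{O}_2$ more delicately here than in Case 1, where the exchangeable role of $X_2^{N_2}(\mathbf{v}') \cup E_2$ was played by an honest element of $\mathcal{A}_2$ rather than its smallest controlled superset.
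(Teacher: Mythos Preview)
Your overall strategy coincides with the paper's: pick $S_2\in\mathcal{O}_2$ with $X_2^{N_2}(\mathbf{v}')\subsetneq S_2$ (guaranteed by Lemma~\ref{output} since $X_2^{N_2}(\mathbf{v}')\notin\mathcal{O}_2$), let player~2 deviate to a valuation that strongly desires $S_2\cup E_2$, and use control to secure the extra item(s) in $S_2\mysetminus X_2^{N_2}(\mathbf{v}')$. The paper's $v_2^{\textsc{ii}}$ does exactly this (with weight $m$ rather than your $m^4$, which is immaterial), and it does \emph{not} attempt to pin down $X_1(\mathbf{v}'')$ exactly---it just uses $X_2(\mathbf{v}'')\supseteq S_2\cup E_2$. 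Your detour through a further player-1 deviation $v_1^{\textsc{iii}}$ is therefore unnecessary.

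There is, however, a genuine slip in your final comparison. Truthfulness at the profile $\mathbf{u}=(v_1^{\textsc{i}},v_2^{\textsc{i}})$ says that player~2 cannot gain by misreporting, so the utilities to compare are $v_2^{\textsc{i}}(X_2(\mathbf{u}))$ versus $v_2^{\textsc{i}}(X_2(\mathbf{v}''))$---\emph{both measured in $v_2^{\textsc{i}}$}, the true valuation at $\mathbf{u}$. You instead compare ``$v_2^{\textsc{ii}}$-utilities'' and invoke the $m^4$ weights, but those are the \emph{reported} values in the deviation and carry no meaning for incentive compatibility. Under the correct valuation $v_2^{\textsc{i}}$, each item of $S_2^*\mysetminus X_2^{N_2}(\mathbf{v}')$ lies in $X_1^{N_2}(\mathbf{v}')\subseteq X_1^{N_2}(\mathbf{v})$ and is worth exactly $m$, not $m^4$; this is precisely the gain the paper exploits via $v_2^{\textsc{i}}(S_2)\ge v_2^{\textsc{i}}(X_2^{N_2}(\mathbf{v}'))+m$. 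Once you switch to $v_2^{\textsc{i}}$, your ``loss $\le |E_1|+|E_2|$'' accounting also needs revisiting (items of $N_1\mysetminus S_1$ may be worth $m^2$ under $v_2^{\textsc{i}}$), so the bookkeeping must follow the paper's simpler route rather than the envelope you wrote down.
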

	
	\begin{proof}[Proof of Claim \ref{cl:n1d}]\renewcommand{\qedsymbol}{$\triangleleft$}
		This is similar to the proof of Claim \ref{cl:n1b}.Suppose $S_1 \subseteq X_1(\mathbf{u})$. By Lemma \ref{output} this means $S_1 = X_1^{N_1}(\mathbf{u})$. Then player 2, whose total received value in $\mathbf{u}$ would be strictly less than $v_2^{\textsc{i}}(X_2^{N_2}(\mathbf{v}')) + m$, could force the mechanism to give him at least that by playing  
		\begin{equation*}
		\begin{IEEEeqnarraybox}[
		\IEEEeqnarraystrutmode
		\IEEEeqnarraystrutsizeadd{2pt}
		{0.5pt}
		][c]{c/v/c/V/c/v/c/V/c/V/c}
		&& N_1 && N_2 \mysetminus S_2 &&  S_2 && E_1 && E_2    \\\hline
		v_2^{\textsc{ii}} && \text{---}\ 1\ \text{---}    && \text{---}\ 1\ \text{---}  && \text{---}\ m\ \text{---}  && \text{---}\ 1 \ \text{---} && \text{---}\ m \ \text{---} 
		\end{IEEEeqnarraybox}
		\end{equation*}
		where $S_2\in \mathcal{O}_2$ is such that $X_2^{N_2}(\mathbf{v}')\subseteq S_2$.  
		By the definition of $N_2$, $E_2$, in $\mathbf{v}''=(v_1^{\textsc{i}}, v_2^{\textsc{ii}})$ player 2 gets $S_2$ and $E_2$ (and possibly something from $N_1$ and $E_1$). Note, however, that $X_2^{N_2}(\mathbf{v}') \notin \mathcal{O}_2$ and thus $X_2^{N_2}(\mathbf{v}')\subsetneq S_2$. Therefore, $v_2^{\textsc{i}}(X_2(\mathbf{v}'')) \ge v_2^{\textsc{i}}(S_2) \ge v_2^{\textsc{i}}(X_2^{N_2}(\mathbf{v}')) + m$.
		Since this contradicts truthfulness, it must be the case that $S_1 \nsubseteq X_1(\mathbf{u})$ (and thus $X_1^{N_1}(\mathbf{u})\notin \argmax_{S\in \mathcal{O}_1} v_1^{\textsc{i}}(S)$).
	\end{proof}
	

	This implies that $S_1 \mysetminus X_1^{N_1}(\mathbf{u}) \neq \emptyset$. Since the sets in $\mathcal{O}_1$ have empty intersection, there must exist some $T\in \mathcal{O}_1$ such that $S_1 \mysetminus X_1^{N_1}(\mathbf{u}) \nsubseteq T$. We are going to concentrate most of player 2's value from $N_1$ on $W = (S_1 \mysetminus X_1^{N_1}(\mathbf{u}))\mysetminus T \neq \emptyset$. So consider the profile $\mathbf{u}'=(v_1^{\textsc{i}}, v_2^{\textsc{iii}})$, where
	\begin{equation*}
	\begin{IEEEeqnarraybox}[
	\IEEEeqnarraystrutmode
	\IEEEeqnarraystrutsizeadd{2pt}
	{0.5pt}
	][c]{c/v/c/v/c/V/c/v/c/V/c/V/c/}
	&& N_1\mysetminus W && W  && X_1^{N_2}(\mathbf{u}) && X_2^{N_2}(\mathbf{u})  && E_1 && E_2   \\\hline
	v_2^{\textsc{iii}} && \text{---}\ 1\ \text{---} && \text{---}\ m^2\ \text{---}  && \text{---}\ m\ \text{---} && \text{---}\ m^3\ \text{---}  && \text{---}\ 1 \ \text{---} && \text{---}\ 1 \ \text{---} 
	\end{IEEEeqnarraybox}
	\end{equation*}
	
\sloppy	By truthfulness, $X_2(\mathbf{u}') \supseteq W \cup X_2^{N_2}(\mathbf{u})$. This implies that $S_1 \nsubseteq X_1(\mathbf{u}')$ and thus $X_1^{N_1}(\mathbf{u}')\notin \allowbreak \argmax_{S\in \mathcal{O}_1} v_1^{\textsc{i}}(S)$. By Case 1, this means that $X_2^{N_2}(\mathbf{u}') \notin \mathcal{O}_2$. Therefore, $X_1^{N_2}(\mathbf{u}') \neq \emptyset$.

	Now let $S'_2\in \mathcal{O}_2$ is such that $X_2^{N_2}(\mathbf{u}')\subsetneq S'_2$.
	Before we examine the final profile of the proof, let us consider the following profile $\mathbf{u}''=(v_1^{\textsc{ii}}, v_2^{\textsc{iv}})$:
	\begin{equation*}
	\begin{IEEEeqnarraybox}[
	\IEEEeqnarraystrutmode
	\IEEEeqnarraystrutsizeadd{2pt}
	{0.5pt}
	][c]{c/v/c/v/c/V/c/v/c/v/c/V/c/V/c}
	&& T && N_1\mysetminus T &&  N_2\mysetminus S'_2 && S'_2 \mysetminus X_2^{N_2}(\mathbf{u}')  && X_2^{N_2}(\mathbf{u}')  && E_1 && E_2 \\\hline
	v_1^{\textsc{ii}} && \text{---}\ m\ \text{---} && \text{---}\ 1\ \text{---} && \text{---}\ m^2 \ \text{---}  && \text{---}\ m^2\ \text{---}  && \text{---}\ 1\ \text{---} && \text{---}\ 1\ \text{---}  && \text{---}\ 1 \ \text{---}    \\\hline
	v_2^{\textsc{iv}} && \text{---}\ 1\ \text{---} && \text{---}\ 1\ \text{---}  && \text{---}\ 1\ \text{---} && \text{---}\ m\ \text{---} && \text{---}\ m\ \text{---}  && \text{---}\ 1\ \text{---} && \text{---}\ m \ \text{---} 
	\end{IEEEeqnarraybox}
	\end{equation*}
	By the definition of $N_2$, $E_2$, in $\mathbf{u}''$ player 2 gets $S'_2$ and $E_2$ (and possibly something from $N_1$ and $E_1$). Given that, the maximum value that player 1 could achieve in $\mathbf{u}''$ is $|T|\cdot m + |N_2\mysetminus S'_2|\cdot m^2 + |E_1|$. In fact, player 1 can achieve exactly this by increasing his reported value for each item in $T\cup E_1$ to $m^3$. 
	Thus $X_1(\mathbf{u}'')= T\cup (N_2\mysetminus S'_2) \cup E_1$ and $X_2(\mathbf{u}'') = (N_1\mysetminus T)\cup S'_2 \cup E_2$.
	
	The final profile we need is $\mathbf{u}'''=(v_1^{\textsc{ii}}, v_2^{\textsc{iii}})$, and the contradiction follows from the allocation of the items in $X_1^{N_2}(\mathbf{u}')$. If $X_1^{N_2}(\mathbf{u}') \nsubseteq X_1(\mathbf{u}''')$ then player 1 has incentive to deviate to profile $\mathbf{u}'=(v_1^{\textsc{i}}, v_2^{\textsc{iii}})$. So, it must be the case where $X_1^{N_2}(\mathbf{u}') \subseteq X_1(\mathbf{u}''')$ and therefore $v_2^{\textsc{iii}}(X_2(\mathbf{u}''')) \le v_2^{\textsc{iii}}(M\mysetminus X_1^{N_2}(\mathbf{u}')) < |W| \cdot m^2 + |X_2^{N_2}(\mathbf{u})| \cdot m^3 +m$. On the other hand, notice that $W\subseteq N_1 \mysetminus T$ and recall that $X_2^{N_2}(\mathbf{u})\subseteq X_2^{N_2}(\mathbf{u}')\subsetneq S'_2$. Using the allocation for $\mathbf{u}''$ we calculated above, by truthfulness we have that $v_2^{\textsc{iii}}(X_2(\mathbf{u}''')) \ge v_2^{\textsc{iii}}((N_1 \mysetminus T)\cup S'_2) \ge |W| \cdot m^2 + |X_2^{N_2}(\mathbf{u})| \cdot m^3 +m$, which is a contradiction. 
\end{proof}
\medskip



\begin{proof}[\textbf{Proof of Lemma \ref{indie}}]
	Suppose that this not true. So there are profiles $\mathbf{v}=(v_1,v_2),  \mathbf{v}'=(v'_1,v'_2) \in \mathcal{V}^{\neq}_{\! m}$ such that $v_{ij}=v'_{ij}$ for all $i\in \{1, 2\}$ and $j\in E_1 \cup E_2$, but $X_1^{E_1 \cup E_2}(\mathbf{v}) \neq X_1^{E_1 \cup E_2}(\mathbf{v}')$. In such a case, either $\mathbf{v}=(v_1,v_2), \hat{\mathbf{v}}=(v'_1,v_2)$, or $ \hat{\mathbf{v}}=(v'_1,v_2),\mathbf{v}'=(v'_1,v'_2)$ is also  a pair of profiles that violates the statement. Without loss of generality we assume that $\mathbf{v}, \hat{\mathbf{v}}$ is such a pair, and  that $v_1(X_1^{E_1}(\mathbf{v}))>v_1(X_1^{E_1}(\hat{\mathbf{v}}))$. Now let $S_1, \hat{S}_1\in \mathcal{O}_1$ be the single best offer in each case. If $S_1= \hat{S}_1$ then player 1 would deviate from $\hat{\mathbf{v}}$ to $\mathbf{v}$ and  strictly improve. So assume that $S_1\neq \hat{S}_1$ and  multiply the values in $E_1 \cup E_2$ for  player 1 with a large enough constant $K$, so that  $K\big(\hat{v}_1(X_1^{E_1}(\mathbf{v}))-\hat{v}_1(X_1^{E_1}(\hat{\mathbf{v}}))\big)> \hat{v}_1( N_1 \cup  N_2)$. 

Call $\mathbf{v}^*=(v^*_1,v_2)$ and $\hat{\mathbf{v}}^*=(v'^*_1,v_2)$ the new profiles and notice that they are still in $\mathcal{V}^{\neq}_{\! m}$. Also, it is easy to see that truthfulness implies $X_1(\mathbf{v})=X_1(\mathbf{v}^*)$ and $X_1(\hat{\mathbf{v}})=X_1(\hat{\mathbf{v}}^*)$.
 Indeed, by Lemma \ref{best}, we have $X_1^{N_1\cup N_2}(\mathbf{v})=X_1^{N_1\cup N_2}(\mathbf{v}^*)$, and if it was the case where $X_1^{E_1\cup E_2}(\mathbf{v}) \neq X_1^{E_1\cup E_2}(\mathbf{v}^*)$, then player 1 would deviate from profile $\mathbf{v}$ to $\mathbf{v}^*$ or vice versa to strictly improve his utility. The same holds for $\hat{\mathbf{v}}$ to $\hat{\mathbf{v}}^*$. 

Now, however, player 1 would deviate from $\hat{\mathbf{v}}^*$ to $\mathbf{v}^*$ in order to  improve by at least $\hat{v}^*_1(X_1^{E_1}(\mathbf{v}^*))-\hat{v}^*_1(X_1^{E_1}(\hat{\mathbf{v}}^*)) - \hat{v}^*_1( N_1 \cup  N_2) = K\big(\hat{v}_1(X_1^{E_1}(\mathbf{v}))-\hat{v}_1(X_1^{E_1}(\hat{\mathbf{v}}))\big) - \hat{v}_1( N_1 \cup  N_2) >0$, and this contradicts truthfulness.
\end{proof}
\medskip

\remark \label{rem:Xetc}
Since we are talking about $\mathcal{X}_E$ in many of the following proofs, it is  correct to write $X_i^{E_1\cup E_2}(\cdot)$, not  $X_i(\cdot)$. For the sake of readability, though, we drop the superscript wherever it is not necessary.  
Similarly, in order to avoid the unnecessary use of extra symbols, we prove the statements for $m$ items, although in Subsection \ref{subsec:characterization}  $\mathcal{X}_E$ is a mechanism on $\ell \le m$ items.
\medskip

\remark \label{rem:notequal}
For most of the following proofs we need to construct profiles in $\mathcal{V}^{\neq}_{\! m}$. To facilitate the presentation, however, the valuation functions we construct only use a few powers of $m$. As a result, the corresponding  profiles typically are not in $\mathcal{V}^{\neq}_{\! m}$. Still, this is without loss of generality; when defining such valuation functions we can add $2^i/2^{\kappa}$ to the value of item $i$, for $i\in [m]$.  When $\kappa\in \mathbb N$ is large enough (usually $\kappa=m+1$ suffices), our arguments about the allocation are not affected, and a strict preference over all subsets is induced. 
\medskip

\begin{proof}[\textbf{Proof of Lemma \ref{strict}}]
Let $\mathbf{v}=(v_1,v_2)\in \mathcal{V}_{\! m}$, and consider the intermediate profile  $\mathbf{v}^*=(v'_1,v_2)$ where $v'_{1x}=m$, if $x \in X_1(\mathbf{v})$, and $v'_{1x}=1$ otherwise. By truthfulness, we have that $X_1(\mathbf{v}^*)=X_1(\mathbf{v})$. By defining $v'_2$ in a similar way (i.e., $v'_{2x}=m$, if $x \in X_2(\mathbf{v})$, and $v'_{2x}=1$ otherwise), we get the profile  $\mathbf{v}'=(v'_1,v'_2)$. Again by truthfulness, we have $\mathcal{X}(\mathbf{v}')=\mathcal{X}(\mathbf{v})$. If $\mathbf{v}^*$ and $\mathbf{v}'$ where defined as described in Remark \ref{rem:notequal}, the same arguments would apply, and moreover, $\mathbf{v}'\in \mathcal{V}^{\neq}_{\! m}$.
\end{proof}
\medskip


\begin{proof}[\textbf{Proof of Lemma \ref{lem:proper}}]
To show that $D$ is indeed a valid set of exchange deals, we need to show that for any two distinct deals $(S, T), (S', T')\in D$ we have $S\cap S'=T\cap T'=\emptyset$ and $S, T, S', T'$ are all nonempty. The latter is straightforward due to truthfulness and the fact that all values are positive. The former is done through the next three lemmata, the first of which states that each minimally exchangeable set is involved in exactly one exchange deal. 

\begin{lemma}\label{lem:fes}
If $S \subseteq E_1$ is a minimally exchangeable set, then there exists a unique $T \subseteq E_2$ such that $(S, T)$ is a feasible exchange.
\end{lemma}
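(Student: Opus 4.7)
My plan is to argue by contradiction. Suppose there exist two distinct sets $T, T' \subseteq E_2$ such that both $(S,T)$ and $(S,T')$ are feasible exchanges, witnessed by profiles $\mathbf{v}$ and $\mathbf{v}'$ respectively: $X_1(\mathbf{v}) = (E_1 \mysetminus S) \cup T$ and $X_1(\mathbf{v}') = (E_1 \mysetminus S) \cup T'$. Since $T \neq T'$, after possibly swapping their roles I may assume there is an item $x \in T \mysetminus T'$.

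The key tool is that player~2 controls $E_2$ with respect to $\mathcal{X}_E$. Thus, by reporting strong desire for $E_2$, player~2 can always guarantee $X_2 \supseteq E_2$. Symmetrically, player~1 can guarantee $E_1 \subseteq X_1$ by strongly desiring $E_1$. I will exploit these control properties to constrain the output. Starting from $\mathbf{v}$, I would construct a new profile $\mathbf{u}$ by (i) inflating $v_2(x)$ to a huge value $M$ and (ii) inflating $v_1$ on each item of $E_1 \mysetminus S$ so that player~1 would strongly desire these items. Under $\mathbf{u}$, player~2's deviation of reporting strong desire for $E_2$ yields utility at least $M$, so by truthfulness $u_2(X_2(\mathbf{u})) \geq M$, which for sufficiently large $M$ forces $x \in X_2(\mathbf{u})$. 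Symmetrically, player~1's control forces $E_1 \mysetminus S \subseteq X_1(\mathbf{u})$. Hence the output has the form $X_1(\mathbf{u}) = (E_1 \mysetminus S^*) \cup T^*$ with $S^* \subseteq S$ and $T^* \subseteq E_2 \mysetminus \{x\}$, so $(S^*, T^*)$ is a feasible exchange with $T^* \neq T$.

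The main obstacle is to rule out the remaining possibility $S^* = S$ (which would mean $T^*$ is simply yet another set paired with $S$, not contradicting anything directly). This is where the minimality of $S$ must be leveraged together with the existence of the second witness $\mathbf{v}'$. The plan is to further exploit $\mathbf{v}'$: either by incorporating additional inflation of $v_1$ (or $v_2$) on items that force $T^* \notin \{T, T'\}$, and iterating the argument if needed, or by arguing that if $S^* = S$ necessarily held for every such $\mathbf{u}$, then the mechanism's behavior across $\mathbf{v}$, $\mathbf{v}'$, and $\mathbf{u}$ would produce a combination of deviations violating truthfulness for one of the players (by comparing the allocations in the three profiles with player 1 or player 2 deviating between the corresponding reports). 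Since the cardinalities are finite and each iteration yields a strictly different $T^*$, this process terminates with some feasible exchange $(S^*, T^*)$ satisfying $S^* \subsetneq S$.

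The existence of such $(S^*, T^*)$ with $S^* \subsetneq S$ contradicts the assumed minimal exchangeability of $S$, completing the proof. I expect the delicate part to be the control over $S^*$: roughly, the inflation on $E_1 \mysetminus S$ together with player~1's control of $E_1$ should imply $S^* \subseteq S$, but making $S^* \neq S$ requires careful juggling of the witnesses $\mathbf{v}$ and $\mathbf{v}'$ so that neither of the originally witnessed exchanges is the output.
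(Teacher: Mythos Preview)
Your plan has a genuine gap at exactly the point you flag as ``the main obstacle'': forcing $S^* \subsetneq S$. Your iteration idea is circular. If the constructed profile $\mathbf{u}$ yields $S^* = S$ and some $T^* \notin \{T,T'\}$, you have merely produced a third set paired with $S$; running the argument again can keep producing new pairs $T^{**}, T^{***}, \ldots$, all with $S^* = S$. Finiteness of $E_2$ bounds how many distinct $T^*$'s you can see, but exhausting them does not force the next output to have $S^* \subsetneq S$; it could simply repeat. Nothing in your construction breaks this loop, and your alternative suggestion (``a combination of deviations violating truthfulness'') is left entirely unspecified. Since the whole proof hinges on reaching $S^* \subsetneq S$ to contradict minimality, the argument as written does not go through.

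The paper's proof uses minimality of $S$ in the \emph{opposite} direction from what you attempt. It never tries to produce $S^* \subsetneq S$. Instead, whenever an exchange must occur involving items of $S$, minimality forces that the \emph{entire} $S$ is given to player~2 (since no proper subset is exchangeable). This pins the allocation down exactly. The proof then walks through a carefully designed sequence of eight profiles, changing one player's report at a time and exploiting both witnesses $(S,T)$ and $(S,T')$, until it reaches two profiles with the same report for player~1 but different outcomes, one of which player~1 strictly prefers, contradicting truthfulness directly. So the contradiction is with truthfulness, not with minimality; minimality is a tool to control allocations along the way, not the target of the contradiction.
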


The lemma is stated in terms of minimally exchangeable subsets of $E_1$, but due to symmetry it is true for all minimally exchangeable sets. This is done for the following statements as well, for the sake of readability. The   three lemmata are proved right after this proof.

It is implied that  every minimally exchangeable set appears in exactly one exchange deal in $D$. The second lemma, below, guarantees that minimally exchangeable sets can be exchanged only with minimally exchangeable sets. 

\begin{lemma}\label{lem:min}
Let $S \subseteq E_1$ be a minimally exchangeable set and $(S, T)$ be the only feasible exchange involving $S$. Then $T$ is a minimally exchangeable set as well. 
\end{lemma}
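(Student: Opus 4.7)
The plan is to argue by contradiction: assume $T$ is not minimally exchangeable and produce a conflict between two overlapping feasible exchanges. Suppose there exists $T' \subsetneq T$ that is exchangeable. By taking a minimally exchangeable subset of $T'$, we may assume $T'$ itself is minimally exchangeable. Applying the symmetric version of Lemma \ref{lem:fes} to $T'$ yields a unique $S' \subseteq E_1$ such that $(S', T')$ is a feasible exchange. A first easy observation is that $S' \neq S$: if $S' = S$, then both $(S, T)$ and $(S, T')$ would be feasible, contradicting the uniqueness in Lemma \ref{lem:fes} applied to the minimally exchangeable set $S$, since $T' \subsetneq T$.

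Next, I would fix profiles $\mathbf{v}^{ST}$ and $\mathbf{v}^{S'T'}$ realizing the two feasible exchanges, so that $X_1^{E_1\cup E_2}(\mathbf{v}^{ST})=(E_1\mysetminus S)\cup T$ and $X_1^{E_1\cup E_2}(\mathbf{v}^{S'T'})=(E_1\mysetminus S')\cup T'$. Using the same value-sharpening argument as in the proof of Lemma \ref{strict}, we may assume in each profile that both players value their own allocation by $m$ and every other item by $1$ (up to the perturbation described in Remark \ref{rem:notequal}, so that we remain in $\mathcal{V}^{\neq}_{\! m}$). Pick any item $y \in T \mysetminus T'$; this set is nonempty because $T' \subsetneq T$. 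In $\mathbf{v}^{ST}$ player $1$ receives $y$, whereas in $\mathbf{v}^{S'T'}$ it is player $2$ who receives $y$.

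The core of the proof is to build a hybrid profile $\mathbf{w}$ that exposes a contradiction on the allocation of $y$. A natural first attempt is $\mathbf{w} = (v_1^{S'T'}, v_2^{ST})$, and then to bridge via $\mathbf{w}$ using truthfulness from both endpoints: from $\mathbf{v}^{S'T'}$ one concludes that, under $\mathbf{w}$, player $1$ must still be allocated the high-value set $(E_1\mysetminus S')\cup T'$ (otherwise he would deviate from $\mathbf{w}$ back to $v_1^{S'T'}$ and strictly improve); symmetrically, from $\mathbf{v}^{ST}$ one concludes that player $2$ must still receive $(E_2\mysetminus T)\cup S$. However, $y$ belongs to neither of these two sets, so the two constraints together leave $y$ unallocated — contradicting the fact that $\mathcal{X}_E$ allocates every item. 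The careful part is to verify that the two truthfulness inequalities are indeed strict on the relevant items: this is where the case analysis on whether $S \cap S' = \emptyset$ enters, and where one may need to inflate some values (along the lines of the profile $\mathbf{u}$ built in the proof of Lemma \ref{output}) so that gaining or losing $y$ produces a strictly detectable change in utility.

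The main obstacle, as I see it, is exactly this last bookkeeping on values: when $S\cap S' \neq \emptyset$, the naive hybrid may not separate the two players' preferences cleanly enough to force the contradiction on $y$. I would expect to handle this by a two-step modification, first moving from $\mathbf{v}^{S'T'}$ to an intermediate profile $(v_1^{S'T'}, v_2^{ST})$ and extracting information about $X_1(\cdot)\cap T'$ by truthfulness, and then from $\mathbf{v}^{ST}$ to the same intermediate profile to pin down $X_2(\cdot)\cap ((E_2 \mysetminus T)\cup S)$, exactly mirroring the chain-of-profiles technique repeatedly employed in the proofs of Lemmata \ref{output} and \ref{best}. The ``overlap creates conflict'' principle flagged in the paper's own informal description of the proof of Lemma \ref{lem:proper} is precisely what drives this contradiction.
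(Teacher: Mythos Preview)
Your high-level plan --- set up the two overlapping exchanges $(S,T)$ and $(S',T')$ with $T'\subsetneq T$ and $S'\neq S$, and then force a conflict --- matches the paper. But the concrete hybrid argument you give does not go through.

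First, a slip: in $\mathbf{w}=(v_1^{S'T'},v_2^{ST})$ player~1's report already \emph{is} $v_1^{S'T'}$, so ``deviating from $\mathbf{w}$ back to $v_1^{S'T'}$'' is vacuous. The only one-step deviations that connect $\mathbf{w}$ to your two endpoints are player~2 changing between $v_2^{ST}$ and $v_2^{S'T'}$, and player~1 changing between $v_1^{S'T'}$ and $v_1^{ST}$. Even after fixing the direction, the two truthfulness inequalities you obtain are only \emph{utility} bounds, not set equalities. For instance, from ``player~1 at $\mathbf{w}$ does not deviate to $v_1^{ST}$'' you get $v_1^{S'T'}(X_1(\mathbf{w}))\ge v_1^{S'T'}((E_1\setminus S)\cup T)$, which still allows $X_1(\mathbf{w})$ to contain $y\in T\setminus T'$ (a low-value item under $v_1^{S'T'}$) as long as it also contains enough high-value items. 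Symmetrically, the bound on player~2 does not exclude $y\in X_2(\mathbf{w})$. So your ``$y$ is left unallocated'' conclusion does not follow from the hybrid alone.

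The ingredient your sketch is missing is precisely \emph{how} the minimality assumptions are used to turn utility inequalities into exact allocations. In the paper's chain of profiles, at the crucial intermediate steps one player is made to strongly desire all of $E_1\setminus S$ (resp.\ all of $E_2\setminus T'$), so that any exchange that occurs must come out of $S$ (resp.\ $T'$). Because $S$ is \emph{minimally} exchangeable, a nontrivial exchange then forces the whole of $S$ to leave, and by Lemma~\ref{lem:fes} it must be traded for exactly $T$; symmetrically for $T'$. This is what pins down the allocation at each step and eventually produces two profiles that differ only in one player's report but give strictly different utilities to that player. Your two-step modification in the last paragraph gestures at a chain, but without bringing in minimality at the right profiles you will not be able to close the argument.
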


The result of the two lemmata combined is that $D=\{(S_1, T_1), (S_2, T_2), \ldots, (S_{k}, T_{k})\}$, where $S_1, \ldots,\allowbreak S_{k}, T_1, \ldots, T_{k}$ are all the minimally exchangeable sets and are all different from each other. What is still needed is that the intersection between any two minimally exchangeable sets is always empty. The third lemma states something stronger (that is indeed needed later in the proof of \ref{lem:good}), namely that the intersection between a minimally exchangeable set and any other exchangeable set is always empty, unless the latter contains the former.

\begin{lemma}\label{lem:inter}
Let $S \subseteq E_1$ be a minimally exchangeable set and $S' \subseteq E_1$ be an  exchangeable set such that $S' \cap S \neq \emptyset$. Then $ S \subseteq S'$.
\end{lemma}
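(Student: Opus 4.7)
}
First dispose of the trivial sub-cases. If $S = S'$ then $S \subseteq S'$ holds trivially. If $S' \subsetneq S$ then $S'$ is a proper exchangeable subset of $S$, directly contradicting the minimal exchangeability of $S$. So from now on I assume $S \not\subseteq S'$ \emph{and} $S' \not\subseteq S$ (aiming at a contradiction), which combined with the hypothesis $S \cap S' \neq \emptyset$ means $A = S \cap S'$, $B = S \mysetminus S'$, and $C = S' \mysetminus S$ are all nonempty; write also $R = E_1 \mysetminus (S \cup S')$. Let $T \subseteq E_2$ be the unique set, given by Lemma~\ref{lem:fes}, for which $(S,T)$ is feasible, and let $T' \subseteq E_2$ be any witness making $(S', T')$ feasible; note $T$ and $T'$ are not assumed disjoint.

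Following the high-level strategy described right before the statement of Lemma~\ref{lem:proper}, my goal is to build a single profile $\mathbf{v} \in \mathcal{V}^{\neq}_{\!m}$ in which both exchanges $(S, T)$ and $(S', T')$ are strictly favorable to both players, while the two players strictly prefer \emph{different} exchanges. Concretely, I will pick values so that
\[
  v_1(T) > v_1(S), \quad v_1(T') > v_1(S'), \quad v_1(T') - v_1(S') \;\gg\; v_1(T) - v_1(S),
\]
so player~1 strongly prefers $(S', T')$, and symmetrically
\[
  v_2(S) > v_2(T), \quad v_2(S') > v_2(T'), \quad v_2(S) - v_2(T) \;\gg\; v_2(S') - v_2(T'),
\]
so player~2 strongly prefers $(S, T)$. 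The nonemptiness of $B$ and $C$ is what gives each player room to express such opposing strict preferences, whereas the nonemptiness of $A$ is what forces the two exchanges into a genuine conflict: the items of $A$ cannot be transferred to player~2 via both exchanges at once.

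With $\mathbf{v}$ in hand, I will analyze $\mathcal{X}_E(\mathbf{v})$ case by case. Using the uniqueness from Lemma~\ref{lem:fes} (no partner of $S$ other than $T$ exists) and Lemma~\ref{lem:min} (any partner is minimally exchangeable on the other side), the allocation on $S \cup S' \cup T \cup T'$ reduces essentially to one of: (i)~the full exchange $(S,T)$ is performed, (ii)~the full exchange $(S', T')$ is performed, (iii)~neither is performed, or (iv)~some feasible exchange is performed whose projection on this region differs from both. In case~(i), player~1 can deviate by declaring the values from a witness profile for $(S', T')$; by moving through a short chain of intermediate profiles and applying truthfulness at each link, the resulting allocation must give player~1 strictly higher utility under his true $\mathbf{v}$ than $(S, T)$ does, contradicting truthfulness. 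Case~(ii) is handled symmetrically with player~2 deviating. Cases~(iii) and~(iv) are dispatched by letting whichever player is worse off than under his preferred canonical exchange mimic the corresponding witness.

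The hard part will be cases~(iii) and~(iv), and more broadly controlling exactly what the mechanism outputs after each deviation: the deviating player's report alone does not pin down the outcome, so one must interpolate through carefully chosen intermediate profiles and argue, link by link, about the allocation of items in $A$, $B$, $C$, as well as in $T \mysetminus T'$, $T' \mysetminus T$ and $T \cap T'$. This mirrors the technical pattern of the proofs of Lemmas~\ref{lem:fes} and~\ref{lem:min}. The scalar gaps in the values assigned in $\mathbf{v}$ have to be tuned so that any "partial" exchange the mechanism could choose leaves at least one player strictly worse off than what a well-chosen lie would secure; the minimality of $S$, combined with Lemma~\ref{lem:fes}, is what rules out an "alternative" partial allocation not corresponding to $(S, T)$ or $(S', T')$.
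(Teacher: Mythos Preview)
Your high-level strategy---make both overlapping exchanges favorable but preferred by different players, then derive a contradiction from whichever outcome the mechanism chooses---is exactly the paper's idea. However, the plan as written has a genuine gap in controlling ``case~(iv)'', and it stems from a missing reduction step.

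The paper does \emph{not} work with an arbitrary exchangeable $S'$: it first replaces $S'$ by a \emph{minimal} counterexample (any proper exchangeable subset of $S'$ is either disjoint from $S$ or not exchangeable), and likewise picks $T'$ minimal among partners of $S'$. This is the key device you are missing. Without it, your claim that ``the minimality of $S$, combined with Lemma~\ref{lem:fes}, rules out an alternative partial allocation'' is not justified: Lemma~\ref{lem:fes} only pins down the partner of the \emph{minimally} exchangeable set $S$; it says nothing about $S'$ (which need not be minimal), nor about the many other exchangeable subsets of $E_1$ that might intersect $S \cup S'$. So the mechanism could output an exchange involving some $\hat S \subsetneq S'$ (or even some set overlapping $S' \setminus S$ in a new way), and your case split (i)--(iv) does not cover or control this. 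The paper's minimality choice for $S'$ is precisely what forces any exchange touching $S' \setminus S$ to involve all of $S'$, collapsing the analysis back to the manageable cases.

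A second, smaller point: the ``single profile $\mathbf{v}$ plus case analysis'' framing understates what is needed. Even after the minimality reduction, the paper walks through a chain of roughly ten profiles, and at one step needs a delicately tuned scalar $\alpha$ so that player~2 strictly prefers all of $S'$ to $T'$ but strictly prefers $T'$ to any proper subset of $S'$. This is what pins down the allocation after a deviation. Your plan gestures at ``intermediate profiles'' and ``tuned scalar gaps'', but the actual construction is where the proof lives, and it does not go through without the minimality of $S'$ in hand.
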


If the intersection between any two minimally exchangeable sets was nonempty, then by Lemma \ref{lem:inter} one is contained in the other, which contradicts minimality. We can conclude that $D$ is a valid set of exchange deals.
\end{proof}
\medskip


\begin{proof}[\textbf{Proof of Lemma \ref{lem:fes}}]
Suppose that this does not hold. Without loss of generality, assume that  there is some $S_1\subseteq E_1$ and two profiles $\mathbf{v}^{\textsc{i}}=(v_1^{\textsc{i}}, v_2^{\textsc{i}})$ and $\mathbf{v}^{\textsc{ii}}=(v_1^{\textsc{ii}}, v_2^{\textsc{ii}})$, such that $X^{E_1}_1(\mathbf{v}^{\textsc{i}})=E_1 \mysetminus S_1=X^{E_1}_1(\mathbf{v}^{\textsc{ii}})$ and  $X^{E_2}_1(\mathbf{v}^{\textsc{i}})= S_2 \neq S'_2=X^{E_2}_1(\mathbf{v}^{\textsc{ii}})$. 

For the sake of readability, let $A=S_2 \mysetminus S'_2$, $B = S_2 \cap S'_2$, $C = S'_2 \mysetminus S_2$, and $D = M\mysetminus (S_2 \cup S'_2)$. Since $S_2 \neq S'_2$, either  $A\neq \emptyset$ or $C\neq \emptyset$. Without loss of generality, suppose that $A\neq \emptyset$. Using this notation, $X_1(\mathbf{v}^{\textsc{i}}) =(E_1 \mysetminus S_1) \cup A\cup B$ and $X_2(\mathbf{v}^{\textsc{i}}) =S_1 \cup C \cup D$, while $X_1(\mathbf{v}^{\textsc{ii}}) = (E_1\mysetminus S_1) \cup B \cup C$ and $X_2(\mathbf{v}^{\textsc{ii}}) =  S_1\cup A \cup D$.

We proceed to profile $\mathbf{v}^{\textsc{iii}}=(v_1^{\textsc{i}}, v_2^{\textsc{iii}})$ by changing the values of player 2: 
\begin{equation*}
\begin{IEEEeqnarraybox}[
\IEEEeqnarraystrutmode
\IEEEeqnarraystrutsizeadd{2pt}
{0.5pt}
][c]{c/v/c/v/c/V/c/v/c/v/c/v/c}
 && E_1\mysetminus S_1 && S_1 && A  && B  && C  && D \\\hline
 v_2^{\textsc{iii}} && \text{---}\ 1\ \text{---} &&  \text{---}\ m^2\ \text{---} && \text{---}\ 1\ \text{---}  && \text{---}\ 1\ \text{---}&&\text{---}\ m^3\ \text{---}&&\text{---}\ m^3\ \text{---}
\end{IEEEeqnarraybox}
\end{equation*}
Since the most valuable items of player 2 are those which he was allocated in profile $\mathbf{v}^{\textsc{i}}$, by truthfulness,  he should still get them, but he should not get any other item. Thus $\mathcal{X}_E(\mathbf{v}^{\textsc{iii}}) = \mathcal{X}_E(\mathbf{v}^{\textsc{i}})$.

We move to profile $\mathbf{v}^{\textsc{iv}}=(v_1^{\textsc{iii}}, v_2^{\textsc{iii}})$ by changing the values of player 1:
\begin{equation*}
\begin{IEEEeqnarraybox}[
\IEEEeqnarraystrutmode
\IEEEeqnarraystrutsizeadd{2pt}
{0.5pt}
][c]{c/v/c/v/c/V/c/v/c/v/c/v/c}
 && E_1\mysetminus S_1 && S_1 && A  && B  && C  && D \\\hline
 v_1^{\textsc{iii}} && \text{---}\ m^3\ \text{---} &&  \text{---}\ m\ \text{---} && \text{---}\ m^2\ \text{---}  && \text{---}\ 1\ \text{---}&&\text{---}\ 1\ \text{---}&&\text{---}\ 1\ \text{---}
\end{IEEEeqnarraybox}
\end{equation*}
By truthfulness we have that player 1 must get   $E_1\mysetminus S_1$ and $A$ (or else he could deviate to profile  $\mathbf{v}^{\textsc{iii}}$ and  strictly improve). Since he gets  $A$,  an exchange takes place. Due to the minimality of  $S_1$, we can derive that player 2 receives the whole $S_1$. In addition, player 2 continues to get $D$, since he strongly desires it and $D\subseteq E_2$. So we can conclude that $(E_1\mysetminus S_1) \cup A \subseteq X_1(\mathbf{v}^{\textsc{iv}})$ and $ S_1 \cup D \subseteq X_2(\mathbf{v}^{\textsc{iv}})$, while we do not care about the allocation of the remaining items.

Now let us return to profile $\mathbf{v}^{\textsc{ii}}=(v_1^{\textsc{ii}}, v_2^{\textsc{ii}})$. Starting from here, we change the values of player 2 and to get profile $\mathbf{v}^{\textsc{v}}=(v_1^{\textsc{ii}}, v_2^{\textsc{iv}})$.
\begin{equation*}
\begin{IEEEeqnarraybox}[
\IEEEeqnarraystrutmode
\IEEEeqnarraystrutsizeadd{2pt}
{0.5pt}
][c]{c/v/c/v/c/V/c/v/c/v/c/v/c}
 && E_1\mysetminus S_1 && S_1 && A  && B  && C  && D \\\hline
 v_2^{\textsc{iv}} && \text{---}\ 1\ \text{---} &&  \text{---}\ m\ \text{---} && \text{---}\ m^2\ \text{---}  && \text{---}\ 1\ \text{---}&&\text{---}\ 1\ \text{---}&&\text{---}\ m^2\ \text{---}
\end{IEEEeqnarraybox}
\end{equation*}
By truthfulness, like in profile $\mathbf{v}^{\textsc{iii}}$, we have $\mathcal{X}_E(\mathbf{v}^{\textsc{v}}) = \mathcal{X}_E(\mathbf{v}^{\textsc{ii}})$.


Next, we proceed to  profile $\mathbf{v}^{\textsc{vi}}=(v_1^{\textsc{iv}}, v_2^{\textsc{iv}})$, where
\begin{equation*}
\begin{IEEEeqnarraybox}[
\IEEEeqnarraystrutmode
\IEEEeqnarraystrutsizeadd{2pt}
{0.5pt}
][c]{c/v/c/v/c/V/c/v/c/v/c/v/c}
 && E_1\mysetminus S_1 && S_1 && A  && B  && C  && D \\\hline
 v_1^{\textsc{iv}} && \text{---}\ m^4\ \text{---} &&  \text{---}\ m\ \text{---} && \text{---}\ m^3\ \text{---}  && \text{---}\ m^2\ \text{---}&&\text{---}\ m^2\ \text{---}&&\text{---}\ 1\ \text{---}
\end{IEEEeqnarraybox}
\end{equation*}
Player 2 continues to get $A,D$ since he strongly desires them and $A,D \subseteq E_2$.
By the same argument, player 1 gets $E_1\mysetminus S_1$. Additionally, we know that an exchange  happens (otherwise player 1 would deviate to profile $\mathbf{v}^{\textsc{v}}$ in order to get the items of $B\cup C$), so player 2 gets set the whole $S_1$ due to its minimality. Thus we can conclude that $X_1(\mathbf{v}^{\textsc{vi}}) = (E_1\mysetminus S_1) \cup B \cup C$ and $X_2(\mathbf{v}^{\textsc{vi}}) =  S_1\cup A \cup D$.

Next, we move to profile $\mathbf{v}^{\textsc{vii}}=(v_1^{\textsc{iv}}, v_2^{\textsc{v}})$ by changing player 2 this time:
\begin{equation*}
\begin{IEEEeqnarraybox}[
\IEEEeqnarraystrutmode
\IEEEeqnarraystrutsizeadd{2pt}
{0.5pt}
][c]{c/v/c/v/c/V/c/v/c/v/c/v/c}
 && E_1\mysetminus S_1 && S_1 && A  && B  && C  && D \\\hline
 v_2^{\textsc{v}} && \text{---}\ 1\ \text{---} &&  \text{---}\ m^2\ \text{---} && \text{---}\ m\ \text{---}  && \text{---}\ 1\ \text{---}&&\text{---}\ 1\ \text{---}&&\text{---}\ m^3\ \text{---}
\end{IEEEeqnarraybox}
\end{equation*}
By truthfulness, the allocation does not change, i.e., $X_1(\mathbf{v}^{\textsc{vii}}) = (E_1\mysetminus S_1) \cup B \cup C$ and $X_2(\mathbf{v}^{\textsc{vii}}) =  S_1\cup A \cup D$.

Finally, we move to profile $\mathbf{v}^{\textsc{viii}}=(v_1^{\textsc{iii}}, v_2^{\textsc{v}})$ by changing the values of player 1 back to the values that he had in profile $\mathbf{v}^{\textsc{iv}}$.
Now recall that $X_2(\mathbf{v}^{\textsc{iv}})\supseteq S_1 \cup D$. Since in this profile $S_1 \cup D$ contains player 2's most valuable items,  he must continue to get them by  truthfulness.  This means that there is an exchange. Player 1 however must get some items from  $A$ in any exchange; if not he can declare that he strongly desires $E_1$ and strictly improve.  This, however, contradicts the truthfulness of the mechanism, since player 1 can deviate from  $\mathbf{v}^{\textsc{vii}}$ to $\mathbf{v}^{\textsc{viii}}$ and become strictly better.
\end{proof}
\medskip


\begin{proof}[\textbf{Proof of Lemma \ref{lem:min}}]
Suppose that this does not hold, i.e., there exists some minimally exchangeable $S_1\in E_1$, such that $(S_1, S_2)$ is the only feasible exchange involving $S_1$, but $S_2$ is not minimally exchangeable.  So there exists $S'_2 \subseteq S_2$ that is minimally exchangeable. So let $S'_1$ be such that $(S'_1,S'_2)$ is a feasible exchange (notice that $S_1 \neq S'_1$ by lemma \ref{lem:fes}). 

For the sake of readability, let $A=E_1\mysetminus (S_1 \cup S'_1)$, $B = S'_1 \cap S_1$, $C = S_1 \mysetminus S'_1$, $D = S_2 \cap S'_2$, $E=S'_2$, and $F=S_2 \mysetminus S'_2$.

So there is a profile $\mathbf{v}^{\textsc{i}}=(v_1^{\textsc{i}}, v_2^{\textsc{i}})$, where $X_1(\mathbf{v}^{\textsc{i}})=(E_1\mysetminus S_1)\cup S_2 = A \cup B \cup E \cup F$ and  $X_2(\mathbf{v}^{\textsc{i}})=(E_2\mysetminus S_2)\cup S_1 =C \cup D \cup (E_2\mysetminus S_2)$. Also there is another profile  $\mathbf{v}^{\textsc{ii}}=(v_1^{\textsc{ii}}, v_2^{\textsc{ii}})$ where $X_1(\mathbf{v}^{\textsc{ii}})=(E_1\mysetminus S'_1)\cup S'_2 =A \cup D \cup E$ and  $X_2(\mathbf{v}^{\textsc{ii}})=(E_2\mysetminus S'_2)\cup S'_1=B \cup C \cup F\cup (E_2\mysetminus S_2)$. 

We start from profile $\mathbf{v}^{\textsc{i}}=(v_1^{\textsc{i}}, v_2^{\textsc{i}})$ and we proceed to profile $\mathbf{v}^{\textsc{iii}}=(v_1^{\textsc{iii}}, v_2^{\textsc{i}})$ by changing the values of player 1: 
\begin{equation*}
\begin{IEEEeqnarraybox}[
\IEEEeqnarraystrutmode
\IEEEeqnarraystrutsizeadd{2pt}
{0.5pt}
][c]{c/v/c/v/c/v/c/v/c/V/c/v/c/v/c}
 &&  A && B && C  && D  && E  && F && E_2\mysetminus S_2\\\hline
 v_1^{\textsc{iii}} && \text{---}\ m^4\ \text{---} && \text{---}\ m^4\ \text{---} &&  \text{---}\ m\ \text{---} && \text{---}\ m\ \text{---}  && \text{---}\ m^3\ \text{---}&&\text{---}\ m^2\ \text{---}&&\text{---}\ 1\ \text{---}
\end{IEEEeqnarraybox}
\end{equation*}
Since player's 1 most valuable items are those  he was allocated in profile $\mathbf{v}^{\textsc{i}}$, due to the truthfulness of the mechanism,  he must continue to get them while not getting any other item. Thus the allocation does not change, i.e., $X_1(\mathbf{v}^{\textsc{iii}}) = A \cup B \cup E \cup F$ and $X_2(\mathbf{v}^{\textsc{iii}}) = C \cup D \cup (E_2\mysetminus S_2)$.

Next, move to profile $\mathbf{v}^{\textsc{iv}}=(v_1^{\textsc{iii}}, v_2^{\textsc{iii}})$ by changing the values of player 2:
\begin{equation*}
\begin{IEEEeqnarraybox}[
\IEEEeqnarraystrutmode
\IEEEeqnarraystrutsizeadd{2pt}
{0.5pt}
][c]{c/v/c/v/c/v/c/v/c/V/c/v/c/v/c}
 &&  A && B && C  && D  && E  && F && E_2\mysetminus S_2\\\hline
 v_2^{\textsc{iii}} && \text{---}\ m\ \text{---} && \text{---}\ m^4\ \text{---} &&  \text{---}\ m\ \text{---} && \text{---}\ m^3\ \text{---}  && \text{---}\ 1 \text{---}&&\text{---}\ m^2\ \text{---}&&\text{---}\ m^5\ \text{---}
\end{IEEEeqnarraybox}
\end{equation*}
Player 2 must get $E_2\mysetminus S_2$ since he strongly desires them  and $E_2\mysetminus S_2 \subseteq E_2$. Similarly, player 1 gets  $A \cup B$. Moreover, we know that an exchange should take place (otherwise player 2 would deviate to $\mathbf{v}^{\textsc{iii}}$ and become strictly better). What can be exchanged from $E_1$ is a subset of $C \cup D$, and since $C \cup D =S_1$  is  minimal, it is exchanged with  $S_2 =E \cup F$ (the only set that is exchangeable with $S_1$, by Lemma \ref{lem:fes}). Thus we conclude that the allocation here is $X_1(\mathbf{v}^{\textsc{iv}})=A \cup B \cup E \cup F$ and $X_2(\mathbf{v}^{\textsc{iv}})=C \cup D \cup (E_2\mysetminus S_2)$.

Finally we move to profile $\mathbf{v}^{\textsc{v}}=(v_1^{\textsc{iv}}, v_2^{\textsc{iii}})$, by changing the values of player 1:
\begin{equation*}
\begin{IEEEeqnarraybox}[
\IEEEeqnarraystrutmode
\IEEEeqnarraystrutsizeadd{2pt}
{0.5pt}
][c]{c/v/c/v/c/v/c/v/c/V/c/v/c/v/c}
 &&  A && B && C  && D  && E  && F && E_2\mysetminus S_2\\\hline
 v_1^{\textsc{iv}} && \text{---}\ m^4\ \text{---} && \text{---}\ m^2\ \text{---} &&  \text{---}\ m\ \text{---} && \text{---}\ m\ \text{---}  && \text{---}\ m^3\ \text{---}&&\text{---}\ m^2\ \text{---}&&\text{---}\ 1\ \text{---}
\end{IEEEeqnarraybox}
\end{equation*}
By truthfulness, like above,  the allocation does not change, i.e., $X_1(\mathbf{v}^{\textsc{v}}) = A \cup B \cup E \cup F$ and $X_2(\mathbf{v}^{\textsc{v}}) = C \cup D \cup (E_2\mysetminus S_2)$.

Now let us return to profile $\mathbf{v}^{\textsc{ii}}=(v_1^{\textsc{ii}}, v_2^{\textsc{ii}})$. Starting from this profile we change the values of player 2 to get profile $\mathbf{v}^{\textsc{vi}}=(v_1^{\textsc{ii}}, v_2^{\textsc{iv}})$.
\begin{equation*}
\begin{IEEEeqnarraybox}[
\IEEEeqnarraystrutmode
\IEEEeqnarraystrutsizeadd{2pt}
{0.5pt}
][c]{c/v/c/v/c/v/c/v/c/V/c/v/c/v/c}
 &&  A && B && C  && D  && E  && F && E_2\mysetminus S_2\\\hline
 v_2^{\textsc{iv}} && \text{---}\ 1\ \text{---} && \text{---}\ m^3\ \text{---} &&  \text{---}\ 1\ \text{---} && \text{---}\ 1\ \text{---}  && \text{---}\ m \text{---}&&\text{---}\ m^2\ \text{---}&&\text{---}\ m^4\ \text{---}
\end{IEEEeqnarraybox}
\end{equation*}
 Player 2 must get (at least) $B \cup F \cup (E_2\mysetminus S_2)$ since else he could deviate to profile $\mathbf{v}^{\textsc{ii}}$ and become strictly better. Now since player 1 loses  $B$ we know that an exchange  takes place with some of the available items in $E$. By the minimality of $E=S'_2$, player 1 gets the whole $E$ and he  loses $B \cup C= S'_1$. Thus we can conclude that the allocation here is $X_1(\mathbf{v}^{\textsc{vi}})=A \cup D \cup E$, $X_2(\mathbf{v}^{\textsc{vi}})=B \cup C \cup F\cup (E_2\mysetminus S_2)$.

In order to conclude, we move to profile $\mathbf{v}^{\textsc{vii}}=(v_1^{\textsc{iv}}, v_2^{\textsc{iv}})$ by changing the values of player 1 back to what he played in $\mathbf{v}^{\textsc{v}}$,
\begin{equation*}
\begin{IEEEeqnarraybox}[
\IEEEeqnarraystrutmode
\IEEEeqnarraystrutsizeadd{2pt}
{0.5pt}
][c]{c/v/c/v/c/v/c/v/c/V/c/v/c/v/c}
 &&  A && B && C  && D  && E  && F && E_2\mysetminus S_2\\\hline
 v_1^{\textsc{iv}} && \text{---}\ m^4\ \text{---} && \text{---}\ m^2\ \text{---} &&  \text{---}\ m\ \text{---} && \text{---}\ m\ \text{---}  && \text{---}\ m^3\ \text{---}&&\text{---}\ m^2\ \text{---}&&\text{---}\ 1\ \text{---}\\\hline
 v_2^{\textsc{iv}} && \text{---}\ 1\ \text{---} && \text{---}\ m^3\ \text{---} &&  \text{---}\ 1\ \text{---} && \text{---}\ 1\ \text{---}  && \text{---}\ m \text{---}&&\text{---}\ m^2\ \text{---}&&\text{---}\ m^4\ \text{---}
\end{IEEEeqnarraybox}
\end{equation*}
Player 2 gets $E_2 \mysetminus S_2$ because he strongly desires it. We also know that an exchange should take place, otherwise player 1 would deviate to $\mathbf{v}^{\textsc{vi}}$ and  strictly improve his total value. As a result, player 2 gets at least one item from set $B$, or he could increase to $m^4$ his value for any item in $E_2$ and improve by getting $E_2$. However, now player 2 can deviate from profile  $\mathbf{v}^{\textsc{v}}$ to $\mathbf{v}^{\textsc{vii}}$ and become strictly better, something that contradicts the truthfulness of the mechanism.
\end{proof}
\medskip


\begin{proof}[\textbf{Proof of Lemma \ref{lem:inter}}]
Suppose that this does not hold, i.e., there exists a minimally exchangeable set $S_1\in E_1$ and an exchangeable set $S'_1\in E_1$, such that $S_1\cap S'_1\neq \emptyset$ and $S_1\nsubseteq S'_1$. 
Choose $S'_1$ to be minimal, i.e., if $S''_1 \subsetneq S'_1$ then either $S_1 \cap S''_1= \emptyset$ or $S''_1$ is not exchangeable. Let $S_2, S'_2$ be such that $(S_1,S_2)$, $(S'_1,S'_2)$ are feasible exchanges and $S'_2$ is minimal in the sense that there is no $S''_2 \subsetneq S'_2$ where $(S'_1, S''_2)$ being a feasible exchange. From Lemmata \ref{lem:fes} and \ref{lem:min} we have that $S'_2 \mysetminus S_2 \neq \emptyset$.

For the sake of readability, let $A=E_1\mysetminus (S_1 \cup S'_1)$, $B = S'_1 \mysetminus S_1$, $C = S'_1 \cap S_1$, $D = S_1 \mysetminus S'_1$, $E = S_2 \cap S'_2$, $F=S_2 \mysetminus S'_2$, $G=E_2\mysetminus (S_2 \cup S'_2)$, and $H=S'_2 \mysetminus S_2$. 

So there is a profile $\mathbf{v}^{\textsc{i}}=(E_1\mysetminus S_1)\cup S_2=(v_1^{\textsc{i}}, v_2^{\textsc{i}})$, where $X_1(\mathbf{v}^{\textsc{i}})=A \cup B \cup E \cup F$, $X_2(\mathbf{v}^{\textsc{i}})=(E_2\mysetminus S_2)\cup S_1=C \cup D \cup G \cup H$. There is also a profile $\mathbf{v}^{\textsc{ii}}=(v_1^{\textsc{ii}}, v_2^{\textsc{ii}})$, where $X_1(\mathbf{v}^{\textsc{ii}})=(E_1\mysetminus S'_1)\cup S'_2=A \cup D \cup E \cup H$, $X_2(\mathbf{v}^{\textsc{ii}})=(E_2\mysetminus S'_2)\cup S'_1=B \cup C \cup F\cup G$.

We start from profile $\mathbf{v}^{\textsc{i}}=(v_1^{\textsc{i}}, v_2^{\textsc{i}})$ and we proceed to profile $\mathbf{v}^{\textsc{iii}}=(v_1^{\textsc{i}}, v_2^{\textsc{iii}})$ by changing the values of player 2: 
\begin{equation*}
\begin{IEEEeqnarraybox}[
\IEEEeqnarraystrutmode
\IEEEeqnarraystrutsizeadd{2pt}
{0.5pt}
][c]{c/v/c/v/c/v/c/v/c/V/c/v/c/v/c/v/c}
 &&  A && B && C  && D  && E  && F && G && H\\\hline
 v_2^{\textsc{iii}} && \text{---}\ 1\ \text{---} && \text{---}\ 1\ \text{---} &&  \text{---}\ m\ \text{---} && \text{---}\ m\ \text{---}  && \text{---}\ 1\ \text{---}&&\text{---}\ 1\ \text{---}&&\text{---}\ m^2\ \text{---} && \text{---}\ m^2\ \text{---}
\end{IEEEeqnarraybox}
\end{equation*}
By truthfulness, we can conclude that the allocation remains the same, i.e.,   player 1 gets $A \cup B \cup E \cup F$,  while player 2 gets $C \cup D \cup G \cup H$ . 

Next, we move to profile $\mathbf{v}^{\textsc{iv}}=(v_1^{\textsc{iii}}, v_2^{\textsc{iii}})$ by changing the values of player 1:
\begin{equation*}
\begin{IEEEeqnarraybox}[
\IEEEeqnarraystrutmode
\IEEEeqnarraystrutsizeadd{2pt}
{0.5pt}
][c]{c/v/c/v/c/v/c/v/c/V/c/v/c/v/c/v/c}
 &&  A && B && C  && D  && E  && F && G && H\\\hline
 v_1^{\textsc{iii}} && \text{---}\ m^3\ \text{---} && \text{---}\ m^3\ \text{---} &&  \text{---}\ 1\ \text{---} && \text{---}\ 1\ \text{---}  && \text{---}\ m^2\ \text{---}&&\text{---}\ m^2\ \text{---}&&\text{---}\ 1\ \text{---} && \text{---}\ m\ \text{---}
\end{IEEEeqnarraybox}
\end{equation*}
Again, by truthfulness player 1 gets $A \cup B \cup E \cup F$,  and player 2 gets $C \cup D \cup G \cup H$ .

We continue by moving to profile $\mathbf{v}^{\textsc{v}}=(v_1^{\textsc{iii}}, v_2^{\textsc{iv}})$ by changing the values of player 2:
\begin{equation*}
\begin{IEEEeqnarraybox}[
\IEEEeqnarraystrutmode
\IEEEeqnarraystrutsizeadd{2pt}
{0.5pt}
][c]{c/v/c/v/c/v/c/v/c/V/c/v/c/v/c/v/c}
 &&  A && B && C  && D  && E  && F && G && H\\\hline
 v_2^{\textsc{iv}} && \text{---}\ 1\ \text{---} && \text{---}\ m\ \text{---} &&  \text{---}\ m^3\ \text{---} && \text{---}\ 1\ \text{---}  && \text{---}\ 1\ \text{---}&&\text{---}\ m^2\ \text{---}&&\text{---}\ m^4\ \text{---} && \text{---}\ 1\ \text{---}
\end{IEEEeqnarraybox}
\end{equation*}
Player 2 must get $G$ since he strongly desires it and $H \subseteq E_2$. The same goes for player 1 and  $A \cup B$. Now we know that an exchange should take place, otherwise player 2 would deviate to $\mathbf{v}^{\textsc{iii}}$ and become strictly better. Since the only available exchangeable set here is $C \cup D =S_1$ (because it is  minimal), it is exchanged with set $S_2 =E \cup F$ (the only set  exchangeable with $S_1$ by lemma \ref{lem:fes}). Thus we  conclude that the allocation remains the same,  player 1 gets $A \cup B \cup E \cup F$,  while player 2 gets $C \cup D \cup G \cup H$ . 

Next proceed to profile $\mathbf{v}^{\textsc{vi}}=(v_1^{\textsc{iv}}, v_2^{\textsc{iv}})$ by changing the values of player 1:
\begin{equation*}
\begin{IEEEeqnarraybox}[
\IEEEeqnarraystrutmode
\IEEEeqnarraystrutsizeadd{2pt}
{0.5pt}
][c]{c/v/c/v/c/v/c/v/c/V/c/v/c/v/c/v/c}
 &&  A && B && C  && D  && E  && F && G && H\\\hline
 v_1^{\textsc{iv}} && \text{---}\ m^4\ \text{---} && \text{---}\ m\ \text{---} &&  \text{---}\ 1\ \text{---} && \text{---}\ 1\ \text{---}  && \text{---}\ m^2\ \text{---}&&\text{---}\ m^3\ \text{---}&&\text{---}\ 1\ \text{---} && \text{---}\ m^2\ \text{---}
\end{IEEEeqnarraybox}
\end{equation*}
we can derive by truthfulness that player 1 must get (at least) $A \cup F$, or else he would deviate to profile $\mathbf{v}^{\textsc{v}}$ and improve. Currently, this is all what we need to know for $\mathbf{v}^{\textsc{vi}}$.

Now let us return to profile $\mathbf{v}^{\textsc{ii}}=(v_1^{\textsc{ii}}, v_2^{\textsc{ii}})$. Starting from here we change the values of player 1 to  get profile $\mathbf{v}^{\textsc{vii}}=(v_1^{\textsc{v}}, v_2^{\textsc{ii}})$.
\begin{equation*}
\begin{IEEEeqnarraybox}[
\IEEEeqnarraystrutmode
\IEEEeqnarraystrutsizeadd{2pt}
{0.5pt}
][c]{c/v/c/v/c/v/c/v/c/V/c/v/c/v/c/v/c}
 &&  A && B && C  && D  && E  && F && G && H\\\hline
 v_1^{\textsc{v}} && \text{---}\ m^2\ \text{---} && \text{---}\ 1\ \text{---} &&  \text{---}\ 1\ \text{---} && \text{---}\ m^2\ \text{---}  && \text{---}\ m\ \text{---}&&\text{---}\ 1\ \text{---}&&\text{---}\ 1\ \text{---} && \text{---}\ m\ \text{---}
\end{IEEEeqnarraybox}
\end{equation*}
By truthfulness,  the allocation remains the same, i.e.,  player 1 gets $A \cup D \cup E \cup H$,  while player 2 gets $B \cup C \cup F \cup G$ .

We now move to profile $\mathbf{v}^{\textsc{viii}}=(v_1^{\textsc{v}}, v_2^{\textsc{v}})$ and change the values of player 2.
\begin{equation*}
\begin{IEEEeqnarraybox}[
\IEEEeqnarraystrutmode
\IEEEeqnarraystrutsizeadd{2pt}
{0.5pt}
][c]{c/v/c/v/c/v/c/v/c/V/c/v/c/v/c/v/c}
 &&  A && B && C  && D  && E  && F && G && H\\\hline
 v_2^{\textsc{v}} && \text{---}\ 1\ \text{---} && \text{---}\ \alpha m^3\ \text{---} &&  \text{---}\ \alpha m^4\ \text{---} && \text{---}\ 1\ \text{---}  && \text{---}\ m^4\ \text{---}&&\text{---}\ m^5\ \text{---}&&\text{---}\ m^5\ \text{---} && \text{---}\ m^4\ \text{---}
\end{IEEEeqnarraybox}
\end{equation*}
The values in  $B \cup C$ are set in such a way so that $v_2^{\textsc{v}}(B \cup C) > v_2^{\textsc{v}}(E \cup H)$, but $v_2^{\textsc{v}}(T) < v_2^{\textsc{v}}(E \cup H)$ for any $T\subsetneq B \cup C$.\footnote{This is always possible. In particular, if $|B|>0$ then $\alpha = \frac{|E \cup H|m^4-m}{(|B|-1)m^3+|C|m^4}$ works. If $|B|=0$, then $\alpha = \frac{|E \cup H|m^4-m}{(|C|-1)m^4}$. In order to apply the idea mentioned in Remark \ref{rem:notequal}, one can multiply the whole profile with the denominator of $\alpha$.} 

Notice that  player 2 must get $G \cup F$ since he strongly desires it. The same goes for player 1 and  $A \cup D$. 
We know that an exchange should take place, otherwise player 2 would deviate to $\mathbf{v}^{\textsc{vii}}$ and improve. In this exchange, values are such that player 2 should get the whole $S'_1$. Thus we  conclude that the allocation remains the same, i.e.,  player 1 gets $A \cup D \cup E \cup H$,  while player 2 gets $B \cup C \cup F \cup G$ .

We now move to profile $\mathbf{v}^{\textsc{ix}}=(v_1^{\textsc{vi}}, v_2^{\textsc{v}})$ and change the values of player 1.
\begin{equation*}
\begin{IEEEeqnarraybox}[
\IEEEeqnarraystrutmode
\IEEEeqnarraystrutsizeadd{2pt}
{0.5pt}
][c]{c/v/c/v/c/v/c/v/c/V/c/v/c/v/c/v/c}
 &&  A && B && C  && D  && E  && F && G && H\\\hline
 v_1^{\textsc{vi}} && \text{---}\ m^4\ \text{---} && \text{---}\ m\ \text{---} &&  \text{---}\ 1\ \text{---} && \text{---}\ m^4\ \text{---}  && \text{---}\ m^2\ \text{---}&&\text{---}\ m^3\ \text{---}&&\text{---}\ 1\ \text{---} && \text{---}\ m^2\ \text{---}
\end{IEEEeqnarraybox}
\end{equation*}
Again player 2 must get $G \cup F$. Given that, player 1 gets at least $A \cup D \cup E \cup H$, and by truthfulness he cannot receive strictly more items. 
Therefore, the allocation remains the same, i.e.,  player 1 gets $A \cup D \cup E \cup H$,  while player 2 gets $B \cup C \cup F \cup G$.

We now move to profile $\mathbf{v}^{\textsc{x}}=(v_1^{\textsc{iv}}, v_2^{\textsc{v}})$ by changing  the values of player 1 back to what he had at profile $\mathbf{v}^{\textsc{vi}}$. Recall:
\begin{equation*}
\begin{IEEEeqnarraybox}[
\IEEEeqnarraystrutmode
\IEEEeqnarraystrutsizeadd{2pt}
{0.5pt}
][c]{c/v/c/v/c/v/c/v/c/V/c/v/c/v/c/v/c}
 &&  A && B && C  && D  && E  && F && G && H\\\hline
 v_1^{\textsc{iv}} && \text{---}\ m^4\ \text{---} && \text{---}\ m\ \text{---} &&  \text{---}\ 1\ \text{---} && \text{---}\ 1\ \text{---}  && \text{---}\ m^2\ \text{---}&&\text{---}\ m^3\ \text{---}&&\text{---}\ 1\ \text{---} && \text{---}\ m^2\ \text{---}
\end{IEEEeqnarraybox}
\end{equation*}
Like above Player 2 gets $F \cup G$. The same goes for player 1 $A$. By truthfulness, an exchange must happen and player 1 gets at least the set $E \cup H$  (else he would deviate to  $\mathbf{v}^{\textsc{ix}}$ and improve). Moreover, since player 2 loses $E \cup H$ he must  at least get the set $B \cup C$. We conclude that   player 1 gets $A \cup E \cup H$,  player 2 gets $B \cup C \cup F \cup G$, while we do not care what happens for items in $D$.

Now notice that player 2 can deviate from profile $\mathbf{v}^{\textsc{vi}}$ to profile $\mathbf{v}^{\textsc{x}}$ and become strictly better (recall that at profile $\mathbf{v}^{\textsc{vi}}$ player 2 loses $G$, while $A, D, E, H$ all have very small value) and this  contradicts  truthfulness.
\end{proof}
\medskip


\begin{proof}[\textbf{Proof of Lemma \ref{lem:umes}}]
We begin with a direct implication of the  Lemmata \ref{lem:fes}--\ref{lem:inter}. Although we are not guaranteed yet that any feasible exchange can be expressed as a union of exchange deals from $D$ as it should, the following corollary is a step towards this direction. Recall that $S_1, \ldots , S_{k}$ and $T_1, \ldots , T_{k}$ are all the minimally exchangeable subsets of $E_1$ and $E_2$ respectively, and that $(S_i, T_i)$ is the only feasible exchange involving either one of $S_i$ and $T_i$, for every $i\in[k]$.

\begin{corollary}\label{col:combo}
For every  exchangeable set  $S\subseteq E_1$, we have that $S=W\cup \bigcup_{i \in I} S_i$, where $I\subseteq [k]$ with $|I|\ge 1$, while $W=S \mysetminus \bigcup_{i \in I} S_i$ does not contain any minimally exchangeable sets. Furthermore, this decomposition is unique.
\end{corollary}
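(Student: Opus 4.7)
The plan is to let $I=\{i\in[k]\mid S_i\subseteq S\}$ and verify that this choice of $I$ produces the desired decomposition and that it is in fact forced. A quick preliminary observation, via Lemma \ref{lem:inter}, is that because each $S_i$ is minimally exchangeable and $S$ is exchangeable, $S_i\cap S\neq \emptyset$ already implies $S_i\subseteq S$. Hence $I=\{i\in[k]\mid S_i\cap S\neq\emptyset\}$, a characterization we will use in the uniqueness step.

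Next I would dispatch the three properties in turn. For $|I|\geq 1$: since $S$ is exchangeable, the finite collection of exchangeable subsets of $S$ has an inclusion-minimal element $S'\subseteq S$; by definition $S'$ is then minimally exchangeable, so $S'=S_j$ for some $j\in[k]$, and hence $j\in I$. For the property that $W:=S\setminus\bigcup_{i\in I}S_i$ contains no minimally exchangeable set: suppose $S_j\subseteq W$. Then $S_j\subseteq S$, so $j\in I$ by the definition of $I$; but the $S_i$'s are pairwise disjoint (this is exactly the validity of $D$ proved in Lemma \ref{lem:proper}), so $S_j\cap W\subseteq S_j\setminus S_j=\emptyset$, contradicting $S_j\neq\emptyset$.

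For uniqueness, suppose $S=W'\cup\bigcup_{i\in I'}S_i$ is any decomposition in which $W'$ contains no minimally exchangeable set. If $i\in I'$, then $S_i\subseteq S$ so $i\in I$, giving $I'\subseteq I$. Conversely, if there were some $i\in I\setminus I'$, then $S_i\subseteq S$ and, by disjointness of the minimally exchangeable sets (Lemma \ref{lem:proper} again), $S_i$ is disjoint from each $S_j$ with $j\in I'$, forcing $S_i\subseteq W'$ and contradicting the assumption on $W'$. Thus $I'=I$ and consequently $W'=W$.

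The proof is essentially bookkeeping built on top of the hard lemmas already established; the only ``technical'' step is the existence of an inclusion-minimal exchangeable subset of $S$, and the one place where conceptual content is being imported is the pairwise disjointness of the $S_i$'s from Lemma \ref{lem:proper} together with the containment consequence of Lemma \ref{lem:inter}. I do not foresee a genuine obstacle, but I would be careful to state explicitly that $I$ is characterized both by $S_i\subseteq S$ and by $S_i\cap S\neq\emptyset$, since the latter is what makes the uniqueness argument crisp.
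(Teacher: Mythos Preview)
Your proof is correct and takes essentially the same approach the paper has in mind: the paper states this corollary as ``a direct implication of the Lemmata \ref{lem:fes}--\ref{lem:inter}'' without spelling out the details, and what you have written is precisely the natural unpacking of that implication, using Lemma \ref{lem:inter} for the containment $S_i\cap S\neq\emptyset\Rightarrow S_i\subseteq S$ and the disjointness of the $S_i$'s established in Lemma \ref{lem:proper}. One tiny simplification: in your step showing $W$ contains no $S_j$, once $j\in I$ you already have $S_j\subseteq\bigcup_{i\in I}S_i$ and hence $S_j\cap W=\emptyset$ directly, so disjointness of the $S_i$'s is not needed there (though it is genuinely needed in your uniqueness step).
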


Ideally, we would like two things. First, the $W$ part in the above decomposition to always be empty, i.e., we want every  exchangeable set to be a \emph{u}nion of \emph{m}inimally \emph{e}xchangeable \emph{s}ets (\emph{umes} for short).
Second, we want every umes of $E_1$ to be exchangeable only with the corresponding umes of $E_2$, and vice versa. To be more precise, we say that an umes $S=\bigcup_{i \in I}S_i$ is \emph{nice} if it is exchangeable with $T=\bigcup_{i \in I}T_i$ and only with $T$. The definition of a nice umes of $E_2$ is symmetric.
As it turns out, every umes is nice, but it takes a rather involved induction to prove it. Especially the fact that $\big(\bigcup_{i \in I}S_i, \bigcup_{i \in I}T_i \big)$ is exchangeable needs a carefully constructed argument about the value that each player must  gain from any exchange (see also Lemma \ref{lem:values}). 

\begin{lemma}\label{lem:good}
Every umes is nice.
\end{lemma}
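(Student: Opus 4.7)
The plan is to proceed by strong induction on $|I|$. The base case $|I|=1$ is immediate: a singleton umes $S_i$ is by definition minimally exchangeable, and Lemmata \ref{lem:fes} and \ref{lem:min} guarantee that it is exchangeable uniquely with $T_i$, which is itself minimally exchangeable, so the corresponding singleton umes $\bigcup_{i' \in \{i\}} T_{i'} = T_i$ is the unique partner.

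For the inductive step with $|I| = n \ge 2$, I would assume every umes indexed by a proper subset of $I$ (and more generally by any index set of size less than $n$) is nice. There are two claims to establish. \textbf{Feasibility:} $\bigcup_{i \in I} S_i$ is exchangeable with $T = \bigcup_{i \in I} T_i$. Fix $j_0 \in I$ and start from a profile that, by induction, witnesses the exchange of $\bigcup_{i \in I \setminus \{j_0\}} S_i$ with $\bigcup_{i \in I \setminus \{j_0\}} T_i$. I would then modify the players' valuations on $S_{j_0}$ and $T_{j_0}$ so that the exchange $(S_{j_0}, T_{j_0})$ becomes strongly favorable for both players, while preserving all the existing favorable incentives. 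Repeated applications of truthfulness in the style of the proofs of Lemmata \ref{lem:fes}--\ref{lem:inter} would then force the outcome to include all $n$ exchanges, yielding $(\bigcup_{i \in I} S_i, T)$ as a feasible exchange.

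\textbf{Uniqueness:} Suppose toward a contradiction that $\bigcup_{i \in I} S_i$ is exchangeable with some $T' \neq T$ in a profile $\mathbf{v}^*$. By Corollary \ref{col:combo}, write $T' = W' \cup \bigcup_{j \in J} T_j$ with $W'$ containing no minimally exchangeable set. Combining the disjointness of the minimally exchangeable sets (Lemma \ref{lem:proper}) with Lemma \ref{lem:inter}, and applying the induction hypothesis symmetrically on the $E_2$-side to the umes $\bigcup_{j \in J} T_j$ (or a strict sub-umes when $|J| \ge n$, reducing the problem), I would rule out $J \neq I$: any $j \in J \setminus I$ or $i \in I \setminus J$ yields, via a modified profile in which only the corresponding single minimal exchange is strongly attractive, a direct conflict with Lemma \ref{lem:fes} or a truthfulness violation mirroring the constructions used in Lemmata \ref{lem:min}--\ref{lem:inter}. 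Finally, the case $W' \neq \emptyset$ would be ruled out by invoking the auxiliary value-bounds lemma (Lemma \ref{lem:values}): by concentrating one player's valuation heavily on $W'$ while tuning the $T_i$-versus-$S_i$ gains to be only marginally favorable, the mechanism would be forced either to retain $W'$ with its owner---contradicting $W' \subseteq X_1^{E_2}(\mathbf{v}^*)$ via a deviation from $\mathbf{v}^*$---or to produce an exchange whose individual components violate the value bounds dictated by Lemma \ref{lem:values}.

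The main obstacle I anticipate is the $W' \neq \emptyset$ case. The minimally exchangeable components on either side are rigid, so matching them up is largely combinatorial once disjointness and Lemma \ref{lem:inter} are in hand; but an arbitrary $W'$ has no such structure a priori and must be excluded by carefully engineered intermediate profiles in tandem with the quantitative content of Lemma \ref{lem:values}. The feasibility half of the inductive step, by contrast, is relatively mechanical once the right augmented profile is identified.
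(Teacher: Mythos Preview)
Your inductive skeleton matches the paper's, but the technical work is placed in the wrong half, and the uniqueness argument has a real gap.

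\textbf{Feasibility.} Starting from a profile witnessing the exchange of $\bigcup_{i\in I\setminus\{j_0\}}S_i$ and then making $(S_{j_0},T_{j_0})$ favorable does \emph{not}, by truthfulness alone, force the extra exchange to happen: truthfulness rules out profitable deviations, it does not guarantee that every favorable exchange is performed. This is exactly the content of Lemma~\ref{lem:values}, which you invoke only in the uniqueness half. In the paper, Lemma~\ref{lem:values} is the engine of feasibility: one builds a single profile where every $(S_i,T_i)$ is favorable, applies Lemma~\ref{lem:values} to the sub-umes $S'=\bigcup_{i\ge 2}S_i$ (nice by induction) to get $T'\subseteq X_1^{E_2}$, applies it again to $S_1$ to get $S_1\subseteq X_2^{E_1}$, and then uses Lemma~\ref{lem:inter} plus the niceness of proper sub-umes to conclude $X_1^{E_2}=T$ and $X_2^{E_1}=S$.

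\textbf{Uniqueness.} Your case split on $T'=W'\cup\bigcup_{j\in J}T_j$ does not deliver the contradictions you claim. Knowing (by induction) that $\bigcup_{j\in J}T_j$ is nice only says that \emph{this particular} set is exchangeable with $\bigcup_{j\in J}S_j$ alone; it places no constraint on a strictly larger $T'$, so an index $j\in J\setminus I$ does not conflict with Lemma~\ref{lem:fes}, which speaks only about minimally exchangeable sets. The paper does not decompose $T'$ at all. It takes a minimal $T^*$ with $(S,T^*)$ feasible and $T^*\not\subseteq T$, and runs a chain of nine tailored profiles (in the style of Lemmata~\ref{lem:fes}--\ref{lem:inter}) to a truthfulness violation; the inductive hypothesis enters only to pin down the allocation whenever a \emph{proper} subset of $S$ is exchanged along the way. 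So uniqueness, like Lemmata~\ref{lem:fes}--\ref{lem:inter}, is proved by direct profile engineering rather than by a structural decomposition of $T'$.
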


Given the above lemma, we can now show that the set $W$ in the  decomposition of Corollary \ref{col:combo} is always empty. In fact the proof idea is the same as the one for Lemmata \ref{lem:fes}--\ref{lem:inter}.

\begin{lemma}\label{lem:change}
Every exchangeable set is an umes.
\end{lemma}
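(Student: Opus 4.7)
The plan is to argue by contradiction, following the same template used in the proofs of Lemmata \ref{lem:fes}--\ref{lem:inter} and \ref{lem:good}: start from a profile realizing the supposedly ``non-umes'' exchange, normalize all valuations to powers of $m$ while preserving the allocation via truthfulness, and then walk along a chain of intermediate profiles whose allocations are pinned down by truthfulness combined with Lemma \ref{lem:good}, arriving at two related profiles whose imposed allocations are incompatible. So suppose some exchangeable $S\subseteq E_1$ has, in its Corollary~\ref{col:combo} decomposition $S=W\cup\bigcup_{i\in I}S_i$, a nonempty remainder $W$. Let $(S,T)$ be realized by a profile $\mathbf{v}^{\textsc{i}}$, set $S'=\bigcup_{i\in I}S_i$ and $T'=\bigcup_{i\in I}T_i$, and invoke Lemma~\ref{lem:good} to fix a profile $\mathbf{v}^{\textsc{a}}$ realizing the umes exchange $(S',T')$, under which player~1 keeps all of $W$.

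First I would normalize valuations using the trick from Lemma~\ref{strict}, passing to a profile $\mathbf{v}^{\textsc{ii}}\in\mathcal{V}^{\neq}_{\!m}$ in which player~1 assigns a large power of $m$ to every item of $(E_1\mysetminus S)\cup T$ and a tiny value to every item of $S\cup(E_2\mysetminus T)$, with a symmetric choice for player~2, so that by truthfulness $X_1^{E_1\cup E_2}(\mathbf{v}^{\textsc{ii}})=(E_1\mysetminus S)\cup T$. The purpose of $\mathbf{v}^{\textsc{ii}}$ is to serve as a reference point that any deviating player can reach: if player~2 announces the values of $v_2^{\textsc{ii}}$ in a later profile he is guaranteed, by truthfulness, at least the $v_2^{\textsc{ii}}$-value of the bundle $S\cup(E_2\mysetminus T)$.

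Next I would build a profile $\mathbf{v}^{\textsc{iii}}$ that holds $v_2^{\textsc{ii}}$ fixed but boosts player~1's values on $W$ to dominate $v_1^{\textsc{ii}}(M\mysetminus W)$, i.e.\ player~1 strongly desires $W$. Since $W\subseteq E_1$ is controlled by player~1, the Control Lemma forces $W\subseteq X_1^{E_1\cup E_2}(\mathbf{v}^{\textsc{iii}})$; hence no exchange involving any item of $W$ can occur under $\mathbf{v}^{\textsc{iii}}$. By Lemma~\ref{lem:good}, any exchange whose $E_1$-side meets $S'$ must be exactly $(S',T')$, so $X_1^{E_1\cup E_2}(\mathbf{v}^{\textsc{iii}})\in\{E_1,\ (E_1\mysetminus S')\cup T'\}$; in either case $W\subseteq X_1^{E_1\cup E_2}(\mathbf{v}^{\textsc{iii}})$ and $T\mysetminus T'\not\subseteq X_1^{E_1\cup E_2}(\mathbf{v}^{\textsc{iii}})$. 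Finally I would consider the profile $\mathbf{v}^{\textsc{iv}}$ that uses $v_1^{\textsc{iii}}$ against $v_2^{\textsc{ii}}$ (or a further-tilted version of it chosen so that the exchange $(S,T)$ is uniquely favored on player~2's side, exactly as in the final steps of the proofs of Lemmata~\ref{lem:fes}--\ref{lem:inter}). Truthfulness applied to player~2's deviation from $\mathbf{v}^{\textsc{iv}}$ to $\mathbf{v}^{\textsc{ii}}$ forces player~2 to receive all of $S$ (in particular $W$); truthfulness applied to player~1's deviation from $\mathbf{v}^{\textsc{iii}}$ to $\mathbf{v}^{\textsc{iv}}$ forces player~1 to retain $W$. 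These two conclusions are incompatible, and the contradiction shows $W=\emptyset$.

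The main obstacle, as in the preceding exchange-component lemmata, is calibrating the exponents of $m$ on the pieces $E_1\mysetminus S$, $W$, $S'$, $T'$, $T\mysetminus T'$, and $E_2\mysetminus T$ so that in the final profile $\mathbf{v}^{\textsc{iv}}$ both truthfulness bounds genuinely bite in the right direction at the same time; in particular the gap player~1 stands to gain by keeping $W$ under $v_1^{\textsc{iii}}$ has to exceed any loss he could suffer elsewhere, and symmetrically for player~2's forced acquisition of $S$. Lemma~\ref{lem:good} enters as a black box to eliminate all exchanges involving $S'$ other than $(S',T')$, which is what ultimately pins down the alternative allocations at each intermediate step; without it the case analysis would be unmanageable, exactly as foreshadowed by the comments preceding the lemma in the paper.
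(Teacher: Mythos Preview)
Your outline has a structural gap that breaks the contradiction. As you set things up, $\mathbf{v}^{\textsc{iii}}=(v_1^{\textsc{iii}},v_2^{\textsc{ii}})$ and then you define $\mathbf{v}^{\textsc{iv}}$ as ``$v_1^{\textsc{iii}}$ against $v_2^{\textsc{ii}}$'', which is the very same profile. More importantly, the step ``truthfulness applied to player~2's deviation from $\mathbf{v}^{\textsc{iv}}$ to $\mathbf{v}^{\textsc{ii}}$'' is not a legal single-player deviation: $\mathbf{v}^{\textsc{ii}}=(v_1^{\textsc{ii}},v_2^{\textsc{ii}})$ and $\mathbf{v}^{\textsc{iv}}=(v_1^{\textsc{iii}},v_2^{\textsc{ii}})$ differ in player~1's report, so truthfulness for player~2 yields no relation between the two allocations. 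Consequently you have no leverage forcing player~2 to receive all of $S$ (in particular $W$) in $\mathbf{v}^{\textsc{iv}}$, and the claimed incompatibility with $W\subseteq X_1$ never materializes. A second, independent problem is the assertion that $X_1^{E_1\cup E_2}(\mathbf{v}^{\textsc{iii}})\in\{E_1,(E_1\mysetminus S')\cup T'\}$: Lemma~\ref{lem:good} only tells you each umes has a unique partner, it does \emph{not} rule out partial umes exchanges inside $S'$, exchanges involving minimally exchangeable sets disjoint from $S$, or---since you did not take $S$ minimal---other non-umes exchangeable subsets of $S$.

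The paper's proof avoids both issues. It first chooses $S$ to be a \emph{minimal} exchangeable non-umes (so every proper exchangeable subset of $S$ is a nice umes by Lemma~\ref{lem:good}) and $T$ minimal with $(S,T)$ feasible; this, together with Lemma~\ref{lem:inter}, is exactly what pins down the exchanges at each intermediate step. Crucially, the contradiction is not organized around $W$ at all but around $B=T\mysetminus T^{*}\neq\emptyset$ (where $T^{*}=\bigcup_{i\in I}T_i$). Two branches of profiles are built, one starting from a realization of $(S,T)$ and one from a realization of the nice umes exchange $(S^{*},T^{*})$; after normalizing and interleaving player-by-player deviations (nine profiles in total), the final profile forces player~1 to receive some item of $B$ while a neighboring profile in the chain forbids exactly that, and the two are linked by a genuine single-player deviation. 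The minimality of $S$ and of $T$ is used repeatedly to argue that whenever an exchange occurs at an intermediate profile, it must be either $(S,T)$ or a nice umes subexchange, which is what keeps the case analysis under control.
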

The above two lemmata complete the proof. They are proved below, right after Lemma \ref{lem:values}.
\end{proof}
\bigskip


For the following lemmas, recall that \emph{umes} is short for \emph{u}nion of \emph{m}inimally \emph{e}xchangeable \emph{s}ets!
\bigskip
%

\begin{lemma}\label{lem:values}
Let $(S, T)$ be a feasible exchange such that $S$ is a nice umes with the property that if $S' \subseteq S$ is exchangeable, then $S'$ is a nice umes. In particular, let $S=\bigcup_{i \in [r]}S_i$, where $S_i$ is minimally exchangeable for all $i\in[r]$. If $\mathbf{v}$ is a profile where $(S_i,T_i)$ is favorable for all  $i \in [r]$ then $(S,T)$ gives a lower bound on the value gained from exchanges in profile $\mathbf{v}$ for each player.
\end{lemma}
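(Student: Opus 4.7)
The plan is to apply truthfulness via a carefully chosen deviation for each player that forces the mechanism to perform the full exchange $(S,T)$, and then translate this into the desired lower bound on the gain in $\mathbf{v}$. I focus on player 1; the argument for player 2 is completely symmetric.

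First, note that $(S,T)$ is itself favorable in $\mathbf{v}$: by additivity,
\[ v_1(T) - v_1(S) = \sum_{i \in [r]}(v_1(T_i) - v_1(S_i)) > 0, \quad v_2(S) - v_2(T) = \sum_{i \in [r]}(v_2(S_i) - v_2(T_i)) > 0, \]
since each summand is positive by favorability of $(S_i, T_i)$ in $\mathbf{v}$. Let $(S^*, T^*)$ denote the exchange performed by $\mathcal{X}_E$ in $\mathbf{v}$, so that $X_1^{E_1 \cup E_2}(\mathbf{v}) = (E_1 \mysetminus S^*) \cup T^*$.

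The crucial step is constructing a valuation $v_1'$ such that in $(v_1', v_2)$ the mechanism outputs exactly the exchange $(S,T)$. I would take $v_1'$ to be extremely large on $(E_1 \mysetminus S) \cup T$ and small on $S \cup (E_2 \mysetminus T)$, using scaled powers of $m$ as in the previous lemmas of this appendix. Starting from a profile $\mathbf{v}^0 = (v_1^0, v_2^0)$ witnessing the feasibility of $(S,T)$, and moving through the intermediate profile $(v_1', v_2^0)$ to $(v_1', v_2)$, truthfulness combined with the niceness hypothesis (every exchangeable subset $S' \subseteq S$ has the form $\bigcup_{i \in I} S_i$ and is exchangeable only with $\bigcup_{i \in I} T_i$) and Lemma \ref{lem:inter} (governing the interaction of minimally exchangeable $S_i$ with arbitrary exchangeable sets) restricts the set of possible outcomes; the extreme values of $v_1'$, together with the favorability of the $(S_i, T_i)$ in $\mathbf{v}$, then pin the outcome down to $(S,T)$.

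Granting this, truthfulness applied to player 1's deviation from $\mathbf{v}$ to $(v_1', v_2)$ yields
\[ v_1((E_1 \mysetminus S^*) \cup T^*) \;\geq\; v_1((E_1 \mysetminus S) \cup T), \]
which by additivity is equivalent to $v_1(T^*) - v_1(S^*) \geq v_1(T) - v_1(S)$, the desired lower bound on player 1's gain. A symmetric construction of a valuation $v_2'$ for player 2 (extreme on $(E_2 \mysetminus T) \cup S$, small elsewhere) yields the analogous bound $v_2(S^*) - v_2(T^*) \geq v_2(S) - v_2(T)$. The main obstacle is the intermediate step: pinning down that the outcome in $(v_1', v_2)$ is precisely $(S,T)$. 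Extreme valuations alone do not suffice here, since the mechanism could in principle respond with a partial sub-exchange or a completely different exchange that still looks acceptable; exploiting the niceness hypothesis and Lemma \ref{lem:inter} in a multi-profile truthfulness chain is essential.
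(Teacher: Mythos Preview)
Your plan is essentially the paper's proof: start from a profile witnessing that $(S,T)$ is feasible, replace player~1's valuation by an extreme one, then replace player~2's valuation by $v_2$, argue at each step (via truthfulness plus the niceness hypothesis on exchangeable subsets of $S$) that the full exchange $(S,T)$ persists, and finally compare with $\mathbf{v}$. One small point: the paper's intermediate valuation keeps the \emph{original} $v_1$-values on $S\cup T$ and only rescales on $E_1\mysetminus S$ (very large) and $E_2\mysetminus T$ (very small); this guarantees that player~1 strongly desires $E_1\mysetminus S$, which is what forces $X_2^{E_1}\subseteq S$ and lets the niceness hypothesis bite. Your version, with $T$ also ``extremely large'', risks breaking that strong-desire step unless you impose a strict hierarchy ($E_1\mysetminus S$ above $T$); with that adjustment your construction works for the same reasons. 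Also, you do not actually need Lemma~\ref{lem:inter} here---the hypothesis ``every exchangeable $S'\subseteq S$ is a nice umes'' already gives what you want once $X_2^{E_1}\subseteq S$ is established.
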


\begin{proof}[Proof of Lemma \ref{lem:values}]
Due to symmetry, it suffices to prove the lower bound for player 1.
Let $\mathbf{v}=(v_1,v_2)$ be  a profile like in the statement, where the values are $v_{i1},v_{i2}, ..., v_{im}$ for $i=1,2$.. Since $(S,T)$ is a feasible exchange, there exists a profile $\mathbf{v}^{\textsc{i}}=(v_1^{\textsc{i}}, v_2^{\textsc{i}})\in \mathcal{V}^{\neq}_{\! m}$ such that the exchange $(S,T)$ takes place, i.e.,$X_1(\mathbf{v}^{\textsc{i}})=(E_1 \mysetminus S) \cup T $ and  $X_2(\mathbf{v}^{\textsc{i}})=S\cup (E_2\mysetminus T)$. Starting from this profile we will use a series of intermediate profiles in order to reach  $\mathbf{v}=(v_1,v_2)$. Initially consider profile $\mathbf{v}^{\textsc{ii}}=(v_1^{\textsc{ii}}, v_2^{\textsc{i}})$ where we change the values of player 1.
\begin{equation*}
v_{1j}^{\textsc{ii}} =
\begin{cases}
\frac{m\cdot\max_i v_{1i}}{\min_i v_{1i}}\cdot v_{1j} & \text{if } j\in E_1\mysetminus S\\
v_{1j}  & \text{if } j\in S\cup T\\
\frac{\min_i v_{1i}}{m\cdot\max_i v_{1i}}\cdot v_{1j} & \text{otherwise } 
\end{cases}
\end{equation*}
In this profile each item in  $E_1\mysetminus S$ has a value which is higher from the sum of the  values in all the other sets. On the other hand,  items in  $E_2\mysetminus T$ have total value  less than the value of a single item in the other sets.\footnote{Notice that the values are chosen in a way such that if $\mathbf{v}\in \mathcal{V}^{\neq}_{\! m}$, then  $\mathbf{v}^{\textsc{i}}\in \mathcal{V}^{\neq}_{\! m}$ as well.} Since this is the case, player 1 must get $E_1\mysetminus S$ since he strongly desires it. In addition, an exchange must take place, or player 1 could deviate to profile $\mathbf{v}^{\textsc{i}}$ and become strictly better. Thus  an exchange takes place and must involve a subset $S'$ of  $S$. Now if $S'$ was a proper subset of $S$, then it would be a nice umes, i.e., $S'=\bigcup_{j \in I}S_j, I \subsetneq [r]$, and it is exchanged only with $T'=\bigcup_{j \in I}T_j$. 
However, since exchanges $S_j, T_j, j \in [r] \mysetminus I$ are also favorable, player 1 would deviate to profile $\mathbf{v}^{\textsc{i}}$ and become strictly better. Therefore, the exchange involves the whole  $S$, and since $S$ is a nice umes  it should be exchanged with $T$. So the allocation here is $X_1(\mathbf{v}^{\textsc{ii}})=(E_1 \mysetminus S) \cup T $, $X_2(\mathbf{v}^{\textsc{ii}})=S\cup (E_2\mysetminus T)$.

By moving to profile $\mathbf{v}^{\textsc{iii}}=(v_1^{\textsc{ii}}, v_2)$ where we change the values of player 2,
we have that, once again, player 1 must get the items in $E_1\mysetminus S$. Moreover, an exchange must take place, or player 2 could deviate to profile $\mathbf{v}^{\textsc{ii}}$ and become strictly better (recall that he prefers  $S$ from $T$). By following the same arguments as in the previous case, if the exchange involves  a  proper subset of $S$, player 2 would deviate to profile $\mathbf{v}^{\textsc{ii}}$ and become strictly better. Hence player 2 gets the whole $S$, i.e., the allocation here is again $X_1(\mathbf{v}^{\textsc{iii}})=(E_1 \mysetminus S) \cup T $ and $X_2(\mathbf{v}^{\textsc{iii}})=S\cup (E_2\mysetminus T)$.

Finally we move to profile $\mathbf{v}=(v_1,v_2)$ by changing the values of player 1.
It is easy to see that if there is no exchange that improves player 1 by at least $v_1(T)-v_1(S)$, then he could deviate to profile $\mathbf{v}^{\textsc{iii}}=(v_1^{\textsc{ii}}, v_2)$ and gain exactly that.
\end{proof}
\medskip


\begin{proof}[\textbf{Proof of Lemma \ref{lem:good}}]
We will use induction in the number of minimally exchangeable sets contained in an umes; let us call this number \emph{index} of the umes. Lemmata \ref{lem:fes} and \ref{lem:min} imply that every umes of index 1 is nice. That is the basis of our induction.

Assume that every umes of index lower or equal to $k$ is nice and notice that Lemma \ref{lem:inter} implies that every exchangeable subset of an umes is also an umes.

Let  $S$ be an umes of index $k+1$. In particular, let $S=\bigcup_{i \in [k+1]}S_i$, where for any $i \in [k+1]$ we have that $S_i$ is minimally exchangeable and $(S_i,T_i)$ is a feasible exchange. By the inductive hypothesis we have that both $S_1$ and $S'=\bigcup_{i=2}^{k+1}S_i$ are nice umes and uniquely exchangeable with $S_1$ and $T'=\bigcup_{i=2}^{k+1}T_i$ respectively.

We first prove that $(S,T)$ is a feasible exchange. Consider the following profile $\mathbf{v}=(v_1,v_2)$, 
\begin{equation*}
\begin{IEEEeqnarraybox}[
\IEEEeqnarraystrutmode
\IEEEeqnarraystrutsizeadd{2pt}
{0.5pt}
][c]{c/v/c/v/c/v/c/V/c/v/c/v/c}
 &&  E_1 \mysetminus (S' \cup S_1) && S' && S_1  && T'  && T_1  &&E_2\mysetminus (T' \cup T_1)\\\hline
 v_1 && \text{---}\ \Delta\ \text{---} && \text{---}\ \delta\ \text{---} &&  \text{---}\ \epsilon\ \text{---} && \text{---}\ 1\ \text{---}&&\text{---}\ \zeta \ \text{---}&&\text{---}\ \delta\ \text{---}\\\hline
 v_2 && \text{---}\ \delta\ \text{---} && \text{---}\ n_j\ \text{---} &&  \text{---}\ 1\ \text{---} && \text{---}\ \theta_j \ \text{---}  && \text{---}\ \delta \text{---}&&\text{---}\ \Delta\ \text{---}
\end{IEEEeqnarraybox}
\end{equation*}
where $\Delta >> 1 >> \zeta, n_j, \theta_j, \epsilon >> \delta >> \lambda_i$.\footnote{In order to be able to apply the idea mentioned in Remark \ref{rem:notequal}, one can use $m^7$ instead of 1, and $\Delta=m^8$, $\delta=m^3$, $\lambda_i=|T_i|\cdot|S_i|$, $n_i=|T_j|\cdot m^{4}$, $\theta_i=|S_j|\cdot (m^{4}-1)$, $\zeta = |S_1|\cdot m^{4}$, and $\epsilon = |T_1|\cdot (m^{4}-1)$.} Regarding the rest values, $|T_1|\cdot \zeta = |S_1|\cdot \epsilon + \lambda_1$ and for all $j\in [k+1]\mysetminus\{1\}$ we have that $|S_j|\cdot n_j = |T_j|\cdot \theta_j + \lambda_j$. Now notice that $S'$ is a nice umes such that every exchangeable $S'' \subseteq S'$ is a nice umes and for all $j$, $(S_j, T_j)$ is a favorable exchange with respect to $\mathbf{v}$. Lemma \ref{lem:values}  guarantees that in $\mathbf{v}$, player 1 gains at least $v_1(T')-v_1(S')=|T'|-\delta|S'|$ from the exchanges. So player 1 gets a superset of $T'$, i.e., $T'\subseteq X_1^{E_2}(\mathbf{v})$. By lemma \ref{lem:inter}, this means that $ X_1^{E_2}(\mathbf{v})$ is either $T'$ or $T$.

On the other hand, if we apply lemma \ref{lem:values} for $(S_1, T_1)$ we have that in profile $\mathbf{v}$, player 2 should gain at least $v_2(S_1)-v_2(T_1)=|S_1|-\delta|T_1|$ from the exchanges. So, $X_2^{E_1}(\mathbf{v})  \supseteq S_1$. Since $T'$ is nice, however, we have that $X_1^{E_2}(\mathbf{v})=T'$ implies $X_2^{{E_1}}(\mathbf{v}) =S'\nsupseteq S_1$.  Therefore, it must be the case where $ X_1^{E_2}(\mathbf{v}) \supsetneq T'$ or else player 2 does not get enough value.

We conclude that $ X_2^{E_1}(\mathbf{v})= T$. Now we claim that $X_2^{E_1}(\mathbf{v}) =S$ and therefore $(S,T)$ is a feasible exchange. Indeed, every $S'' \subsetneq S$ that is exchangeable is an umes of index lower or equal to $k$ and therefore is nice. So $S'', T$ cannot be a feasible exchange, due to the fact that $S''$ has a unique pair $T'' \subsetneq T$. 
\smallskip

Next we show that there is no $\hat{T} \neq T$ such that $(S, \hat{T})$ is a feasible exchange. By the proof so far we have that if such a $\hat{T}$ existed, then it is not a subset of $T$. So suppose that there is a $\hat{T} \neq T$ such that $(S, \hat{T})$is a feasible exchange and let $T^*$ be a minimal such set (that is, if $R \subsetneq T^*$ then $(S,R)$ is not a feasible exchange or $R \subseteq T$).

Thus there are two profiles $\mathbf{v}^{\textsc{i}}=(v_1^{\textsc{i}}, v_2^{\textsc{i}})$ and $\mathbf{v}^{\textsc{ii}}=(v_1^{\textsc{ii}}, v_2^{\textsc{ii}})$ where we have that $X^{E_1}(\mathbf{v}^{\textsc{i}})=(E_1 \mysetminus S)=X^{E_1}_1(\mathbf{v}^{\textsc{ii}})$ and  $X^{E_2}_1(\mathbf{v}^{\textsc{i}})= T^* \neq T=X^{E_2}_1(\mathbf{v}^{\textsc{ii}})$. 

For the sake of readability, let $A=T^*\mysetminus T$, $B = T^* \cap T$, $C = T \mysetminus T^*$, $D =  E_2\mysetminus (T^* \cup T)$.

We start from profile $\mathbf{v}^{\textsc{i}}$ where  the allocation is $X_1(\mathbf{v}^{\textsc{i}})= (E_1 \mysetminus S) \cup A \cup B$, $X_2(\mathbf{v}^{\textsc{i}})=  S \cup C \cup D$  and we proceed to profile $\mathbf{v}^{\textsc{iii}}=(v_1^{\textsc{i}}, v_2^{\textsc{iii}})$ by changing the values of player 2: 
\begin{equation*}
\begin{IEEEeqnarraybox}[
\IEEEeqnarraystrutmode
\IEEEeqnarraystrutsizeadd{2pt}
{0.5pt}
][c]{c/v/c/v/c/V/c/v/c/v/c/v/c}
 && E_1\mysetminus S && S && A  && B  && C  && D \\\hline
 v_2^{\textsc{iii}} && \text{---}\ 1\ \text{---} &&  \text{---}\ m\ \text{---} && \text{---}\ 1\ \text{---}  && \text{---}\ 1\ \text{---}&&\text{---}\ m^2\ \text{---}&&\text{---}\ m^2\ \text{---}
\end{IEEEeqnarraybox}
\end{equation*}
By truthfulness, the allocation remains the same, i.e.,
$X_1(\mathbf{v}^{\textsc{iii}})= (E_1 \mysetminus S) \cup A \cup B$, $X_2(\mathbf{v}^{\textsc{iii}})=  S \cup C \cup D$. 

Next we move to profile  $\mathbf{v}^{\textsc{iv}}=(v_1^{\textsc{iii}}, v_2^{\textsc{iii}})$ by changing the values of player 1:
\begin{equation*}
\begin{IEEEeqnarraybox}[
\IEEEeqnarraystrutmode
\IEEEeqnarraystrutsizeadd{2pt}
{0.5pt}
][c]{c/v/c/v/c/V/c/v/c/v/c/v/c}
 && E_1\mysetminus S && S && A  && B  && C  && D \\\hline
 v_1^{\textsc{iii}} && \text{---}\ m^3\ \text{---} &&  \text{---}\ m\ \text{---} && \text{---}\ m^2\ \text{---}  && \text{---}\ 1\ \text{---}&&\text{---}\ 1\ \text{---}&&\text{---}\ 1\ \text{---}
\end{IEEEeqnarraybox}
\end{equation*}
Notice that player 1 must receive $E_1 \mysetminus S$ since he strongly desires it. The same goes for player 2 and  $C \cup D $. Now we know that an exchange should take place and that in this exchange player 1 must get at least set $A =T^*\mysetminus T$ (otherwise he would deviate to $\mathbf{v}^{\textsc{iii}}$ and become strictly better). 

We claim that player 1 gets the whole $T^*$. If this was not the case then he would get some set $R \supseteq A \neq \emptyset$. Since $R \subsetneq T^*$ and $R \nsubseteq T$ we have that the exchange $(S,R)$ is not feasible due to the minimality of $T^*$. Thus $R$ is exchanged with some $\hat{S} \subsetneq S$. However $\hat{S}$ is an umes (by Lemma \ref{lem:fes}) and by inductive hypothesis it is exchangeable only with strict subsets of $T$ which is a contradiction. 
Similarly, player 2 must get set the whole $S$, or otherwise he would get some $\hat{S} \subsetneq S$  which is  exchangeable only with strict subsets of $T$, something that can not happen. Thus the allocation here is $X_1(\mathbf{v}^{\textsc{iv}})= (E_1 \mysetminus S) \cup A \cup B$, $X_2(\mathbf{v}^{\textsc{iv}})=  S \cup C \cup D$.

Next we move to profile  $\mathbf{v}^{\textsc{v}}=(v_1^{\textsc{iii}}, v_2^{\textsc{iv}})$ by changing the values of player 2.
\begin{equation*}
\begin{IEEEeqnarraybox}[
\IEEEeqnarraystrutmode
\IEEEeqnarraystrutsizeadd{2pt}
{0.5pt}
][c]{c/v/c/v/c/V/c/v/c/v/c/v/c}
 && E_1\mysetminus S && S && A  && B  && C  && D \\\hline
 v_2^{\textsc{iv}} && \text{---}\ 1\ \text{---} &&  \text{---}\ m^2\ \text{---} && \text{---}\ 1\ \text{---}  && \text{---}\ 1\ \text{---}&&\text{---}\ m\ \text{---}&&\text{---}\ m^3\ \text{---}
\end{IEEEeqnarraybox}
\end{equation*}
By truthfulness, the allocation remains the same, i.e.,
$X_1(\mathbf{v}^{\textsc{v}})= (E_1 \mysetminus S) \cup A \cup B$, $X_2(\mathbf{v}^{\textsc{v}})=  S \cup C \cup D$.

Now let us return to profile $\mathbf{v}^{\textsc{ii}}=(\mathbf{v}_1^{\textsc{ii}}, \mathbf{v}_2^{\textsc{ii}})$. Starting from this profile we change the values of player 2 and get profile $\mathbf{v}^{\textsc{vi}}=(v_1^{\textsc{ii}}, v_2^{\textsc{v}})$.
\begin{equation*}
\begin{IEEEeqnarraybox}[
\IEEEeqnarraystrutmode
\IEEEeqnarraystrutsizeadd{2pt}
{0.5pt}
][c]{c/v/c/v/c/V/c/v/c/v/c/v/c}
 && E_1\mysetminus S && S && A  && B  && C  && D \\\hline
 v_2^{\textsc{v}} && \text{---}\ 1\ \text{---} &&  \text{---}\ m\ \text{---} && \text{---}\ m^2\ \text{---}  && \text{---}\ 1\ \text{---}&&\text{---}\ 1\ \text{---}&&\text{---}\ m^2\ \text{---}
\end{IEEEeqnarraybox}
\end{equation*}
Since player's 2 most valuable items are those which he was allocated in profile $\mathbf{v}^{\textsc{ii}}$, by truthfulness, the allocation remains the same, i.e.,
$X_1(\mathbf{v}^{\textsc{vi}})= (E_1 \mysetminus S) \cup B \cup C$, $X_2(\mathbf{v}^{\textsc{vi}})=  S \cup A \cup D$. 

Next we move to profile  $\mathbf{v}^{\textsc{vii}}=(v_1^{\textsc{iv}}, v_2^{\textsc{v}})$ by changing the values of player 1.
\begin{equation*}
\begin{IEEEeqnarraybox}[
\IEEEeqnarraystrutmode
\IEEEeqnarraystrutsizeadd{2pt}
{0.5pt}
][c]{c/v/c/v/c/V/c/v/c/v/c/v/c}
 && E_1\mysetminus S && S && A  && B  && C  && D \\\hline
 v_1^{\textsc{iv}} && \text{---}\ m^4 \ \text{---} &&  \text{---}\ m\ \text{---} && \text{---}\ m^3\ \text{---}  && \text{---}\ m^2\ \text{---}&&\text{---}\ m^2\ \text{---}&&\text{---}\ 1\ \text{---}
\end{IEEEeqnarraybox}
\end{equation*}
Notice that player 1 must get $E_1 \mysetminus S$ since he strongly desires it. The same goes for player 2 and $A \cup D $. Given that, an exchange  takes place and  in this exchange player 1 must get the whole $B \cup C= T$ (otherwise he would deviate to $\mathbf{v}^{\textsc{v}}$ and  strictly improve). On the other hand, player 2 must get the whole $S$, or player 1 would deviate from $\mathbf{v}^{\textsc{v}}$ to $\mathbf{v}^{\textsc{vi}}$ and  strictly improve. Thus the allocation here remains the same: $X_1(\mathbf{v}^{\textsc{vii}})= (E_1 \mysetminus S) \cup B \cup C$, $X_2(\mathbf{v}^{\textsc{vii}})=  S \cup A \cup D$. 

Next we move to profile  $\mathbf{v}^{\textsc{viii}}=(v_1^{\textsc{iv}}, v_2^{\textsc{vi}})$ by changing the values of player 2.
\begin{equation*}
\begin{IEEEeqnarraybox}[
\IEEEeqnarraystrutmode
\IEEEeqnarraystrutsizeadd{2pt}
{0.5pt}
][c]{c/v/c/v/c/V/c/v/c/v/c/v/c}
 && E_1\mysetminus S && S && A  && B  && C  && D \\\hline
 v_2^{\textsc{vi}} && \text{---}\ 1 \ \text{---} &&  \text{---}\ m^2\ \text{---} && \text{---}\ m\ \text{---}  && \text{---}\ 1\ \text{---}&&\text{---}\ 1\ \text{---}&&\text{---}\ m^3\ \text{---}
\end{IEEEeqnarraybox}
\end{equation*} 
By truthfulness, the allocation remains the same, i.e.,
$X_1(\mathbf{v}^{\textsc{viii}})= (E_1 \mysetminus S) \cup B \cup C$, $X_2(\mathbf{v}^{\textsc{viii}})=  S \cup A \cup D$. 

Finally we move to profile  $\mathbf{v}^{\textsc{ix}}=(v_1^{\textsc{iii}}, v_2^{\textsc{vi}})$ by changing the values of player 1 back to what he had in profile $\mathbf{v}^{\textsc{v}}$. Recall:
\begin{equation*}
\begin{IEEEeqnarraybox}[
\IEEEeqnarraystrutmode
\IEEEeqnarraystrutsizeadd{2pt}
{0.5pt}
][c]{c/v/c/v/c/V/c/v/c/v/c/v/c}
 && E_1\mysetminus S && S && A  && B  && C  && D \\\hline
 v_1^{\textsc{iii}} && \text{---}\ m^3\ \text{---} &&  \text{---}\ m\ \text{---} && \text{---}\ m^2\ \text{---}  && \text{---}\ 1\ \text{---}&&\text{---}\ 1\ \text{---}&&\text{---}\ 1\ \text{---}
\end{IEEEeqnarraybox}
\end{equation*}
Notice that player 1 must get $E_1 \mysetminus S$ since he strongly desires it. The same goes for player 2 and $D$.
Now if player 1 gets nothing from set $A$ then there is no exchange at all. However, in this case player 2 would deviate to profile $\mathbf{v}^{\textsc{v}}$ and become strictly better. Thus player 1 should get at least one item from  $A$. As a result, however, player 1 would deviate from profile $\mathbf{v}^{\textsc{viii}}$ to $\mathbf{v}^{\textsc{ix}}$ and become strictly better, something that leads to contradiction. 

This completes the inductive step. 
\end{proof}
\medskip


\begin{proof}[\textbf{Proof of Lemma \ref{lem:change}}]

Let $S$ be an exchangeable subset of $E_1$. Then according to corollary \ref{col:combo} $S= \bigcup_{ i \in I}S_i \cup W$ for some $I\subseteq [k]$, with $|I| \geq 1$. We are going to show that $W= \emptyset$. So suppose, towards a   contradiction, that $W \neq \emptyset$. In fact, choose $S$ so that it is a minimal exchangeable non-umes subset of $E_1$, i.e., for all $S' \subsetneq S$, $S'$ is either umes or non-exchangeable. In addition, notice that $W$ does not contain any exchangeable sets. 

Let $T$ be such that $(S,T)$ is a feasible exchange. In fact let $T$ be a minimal such set, i.e., for all $T' \subsetneq T$, either $(S,T')$ is not a feasible exchange or $T'$ is not exchangeable at all. Finally, let $S^*= \bigcup_{ i \in I}S_i$, $T^*= \bigcup_{ i \in I}T_i$ and notice that $T\mysetminus T^* \neq \emptyset$ since otherwise $T$ would be an umes (as an exchangeable subset of an umes, by Lemma \ref{lem:inter}).

For the sake of readability, let $A=E_1\mysetminus S$, $B = T \mysetminus T^*$, $C = T^* \cap T$, $D = T^* \mysetminus T$, and $E=E_2 \mysetminus (T \cup T^*)$.

So there are two profiles, $\mathbf{v}^{\textsc{i}}=(v_1^{\textsc{i}}, v_2^{\textsc{i}})$ where $X_1(\mathbf{v}^{\textsc{i}})=A \cup B \cup C$, $X_2(\mathbf{v}^{\textsc{i}})=S \cup D \cup E$ and $\mathbf{v}^{\textsc{ii}}=(v_1^{\textsc{ii}}, v_2^{\textsc{ii}})$ where $X_1(\mathbf{v}^{\textsc{ii}})=A \cup W \cup C \cup D$ and $X_2(\mathbf{v}^{\textsc{ii}})=S^* \cup B \cup E$. 

We start from profile $\mathbf{v}^{\textsc{i}}=(v_1^{\textsc{i}}, v_2^{\textsc{i}})$ and we proceed to profile $\mathbf{v}^{\textsc{iii}}=(v_1^{\textsc{i}}, v_2^{\textsc{iii}})$ by changing the values of player 2: 
\begin{equation*}
\begin{IEEEeqnarraybox}[
\IEEEeqnarraystrutmode
\IEEEeqnarraystrutsizeadd{2pt}
{0.5pt}
][c]{c/v/c/v/c/v/c/V/c/v/c/v/c/v/c}
 && A&& S^* && W  && B  && C  && D && E \\\hline
 v_2^{\textsc{iii}} && \text{---}\ 1\ \text{---} &&  \text{---}\ m^2\ \text{---} &&\text{---}\ m^2\ \text{---}&& \text{---}\ 1\ \text{---}  && \text{---}\ 1\ \text{---}&&\text{---}\ m^3\ \text{---}&&\text{---}\ m^3\ \text{---}
\end{IEEEeqnarraybox}
\end{equation*}
Since player's 2 most valuable items are those which he was allocated in profile $\mathbf{v}^{\textsc{i}}$, by truthfulness, the allocation remains the same, i.e., $X_1(\mathbf{v}^{\textsc{iii}})=A \cup B \cup C$, $X_2(\mathbf{v}^{\textsc{iii}})=S \cup D \cup E$.  

Next we move to profile  $\mathbf{v}^{\textsc{iv}}=(v_1^{\textsc{iii}}, v_2^{\textsc{iii}})$ by changing the values of player 1:
\begin{equation*}
\begin{IEEEeqnarraybox}[
\IEEEeqnarraystrutmode
\IEEEeqnarraystrutsizeadd{2pt}
{0.5pt}
][c]{c/v/c/v/c/v/c/V/c/v/c/v/c/v/c}
 && A&& S^* && W  && B  && C  && D && E \\\hline
 v_1^{\textsc{iii}} && \text{---}\ m^3\ \text{---} &&  \text{---}\ m\ \text{---} &&\text{---}\ m\ \text{---} && \text{---}\ m^2\ \text{---}  && \text{---}\ 1\ \text{---}&&\text{---}\ 1\ \text{---}&&\text{---}\ 1\ \text{---}
\end{IEEEeqnarraybox}
\end{equation*}
Now notice that player 1 must get $A$ since he strongly desires it. The same goes for player 2 and  $D \cup E $. Also we know that an exchange should take place and that in this exchange player 1 must get at least  $B =T\mysetminus T^*$ (otherwise he would deviate to $\mathbf{v}^{\textsc{iii}}$). 

We claim that player 1 gets the whole $T$. If this was not the case then he would get some set $R \supseteq B \neq \emptyset$. Since $R \subsetneq T$ and $R \nsubseteq T^*$ we have that the exchange $(S,R)$ is not feasible due to the fact that $T$ is minimal. Thus $R$ should be exchanged with some $\hat{S} \subsetneq S$. However, by the minimality of $S$,  $\hat{S}$ is an umes and  it is exchangeable only with strict subsets of $T^*$, which is a contradiction. On the other hand, player 2 must get the whole $S$, or otherwise he would get some $\hat{S} \subsetneq S$  which is  exchangeable only with strict subsets of $T^*$, something that can not happen. Thus the allocation is $X_1(\mathbf{v}^{\textsc{iv}})=A \cup B \cup C$ and $X_2(\mathbf{v}^{\textsc{iv}})=S \cup D \cup E$.

Next we move to profile  $\mathbf{v}^{\textsc{v}}=(v_1^{\textsc{iii}}, v_2^{\textsc{iv}})$ by changing the values of player 2:
\begin{equation*}
\begin{IEEEeqnarraybox}[
\IEEEeqnarraystrutmode
\IEEEeqnarraystrutsizeadd{2pt}
{0.5pt}
][c]{c/v/c/v/c/v/c/V/c/v/c/v/c/v/c}
 && A&& S^* && W  && B  && C  && D && E \\\hline
 v_2^{\textsc{iii}} && \text{---}\ 1\ \text{---} &&  \text{---}\ m^2\ \text{---} &&\text{---}\ m\ \text{---}&& \text{---}\ 1\ \text{---}  && \text{---}\ 1\ \text{---}&&\text{---}\ m^2\ \text{---}&&\text{---}\ m^3\ \text{---}
\end{IEEEeqnarraybox}
\end{equation*}
Since player's 2 most valuable items are those which he was allocated in profile $\mathbf{v}^{\textsc{iv}}$, by the truthfulness of the mechanism,  he must continue to get them but he can not get any other item. Thus the allocation remains the same, i.e., $X_1(\mathbf{v}^{\textsc{v}})=A \cup B \cup C$, $X_2(\mathbf{v}^{\textsc{v}})=S \cup D \cup E$.

Now let us return to profile $\mathbf{v}^{\textsc{ii}}=(v_1^{\textsc{ii}}, v_2^{\textsc{ii}})$. Starting from this profile we change the values of player 2 and get profile $\mathbf{v}^{\textsc{vi}}=(v_1^{\textsc{ii}}, v_2^{\textsc{iv}})$:
\begin{equation*}
\begin{IEEEeqnarraybox}[
\IEEEeqnarraystrutmode
\IEEEeqnarraystrutsizeadd{2pt}
{0.5pt}
][c]{c/v/c/v/c/v/c/V/c/v/c/v/c/v/c}
 && A&& S^* && W  && B  && C  && D && E \\\hline
 v_2^{\textsc{iv}} && \text{---}\ 1\ \text{---} &&  \text{---}\ m^2\ \text{---} &&\text{---}\ m\ \text{---}&& \text{---}\ m^4\ \text{---}  && \text{---}\ 1\ \text{---}&&\text{---}\ 1\ \text{---}&&\text{---}\ m^4\ \text{---}
\end{IEEEeqnarraybox}
\end{equation*}
Again, player's 2 most valuable items are those which he was allocated in profile $\mathbf{v}^{\textsc{ii}}$. So, by truthfulness, the allocation remains the same, i.e., $X_1(\mathbf{v}^{\textsc{vi}})=A \cup W \cup C \cup D$ and $X_2(\mathbf{v}^{\textsc{vi}})=S^* \cup B \cup E$. 

Next we move to profile  $\mathbf{v}^{\textsc{vii}}=(v_1^{\textsc{iv}}, v_2^{\textsc{iv}})$ by changing the values of player 1:
\begin{equation*}
\begin{IEEEeqnarraybox}[
\IEEEeqnarraystrutmode
\IEEEeqnarraystrutsizeadd{2pt}
{0.5pt}
][c]{c/v/c/v/c/v/c/V/c/v/c/v/c/v/c}
 && A&& S^* && W  && B  && C  && D && E \\\hline
 v_1^{\textsc{iv}} && \text{---}\ m^5\ \text{---} &&  \text{---}\ m\ \text{---} &&\text{---}\ m^2\ \text{---} && \text{---}\ m^4\ \text{---}  && \text{---}\ m^3\ \text{---}&&\text{---}\ m^3\ \text{---}&&\text{---}\ 1\ \text{---}
\end{IEEEeqnarraybox}
\end{equation*}
Notice that player 1 must get $A$ and player 2 must get $B \cup E $. Given that, player 1 must get  $W \cup C \cup D=$ (otherwise he could deviate to $\mathbf{v}^{\textsc{v}}$ and strictly improve). Thus the allocation  remains the same, i.e.,  $X_1(\mathbf{v}^{\textsc{vii}})=A \cup W \cup C \cup D$ and $X_2(\mathbf{v}^{\textsc{vii}})=S^* \cup B \cup E$.

Next we move to profile  $\mathbf{v}^{\textsc{viii}}=(v_1^{\textsc{iv}}, v_2^{\textsc{v}})$ by changing the values of player 2:
\begin{equation*}
\begin{IEEEeqnarraybox}[
\IEEEeqnarraystrutmode
\IEEEeqnarraystrutsizeadd{2pt}
{0.5pt}
][c]{c/v/c/v/c/v/c/V/c/v/c/v/c/v/c}
 && A&& S^* && W  && B  && C  && D && E \\\hline
 v_2^{\textsc{v}} && \text{---}\ 1\ \text{---} &&  \text{---}\ m^3\ \text{---} &&\text{---}\ m\ \text{---}&& \text{---}\ m^2\ \text{---}  && \text{---}\ 1\ \text{---}&&\text{---}\ 1\ \text{---}&&\text{---}\ m^4\ \text{---}
\end{IEEEeqnarraybox}
\end{equation*}
Again, by truthfulness, the allocation remains the same, i.e.,  $X_1(\mathbf{v}^{\textsc{viii}})=A \cup W \cup C \cup D$ and $X_2(\mathbf{v}^{\textsc{viii}})=S^* \cup B \cup E$.
 
Finally we move to profile  $\mathbf{v}^{\textsc{ix}}=(v_1^{\textsc{iii}}, v_2^{\textsc{v}})$ by changing the values of player 1 back to what he had in profile $\mathbf{v}^{\textsc{v}}$.
\begin{equation*}
\begin{IEEEeqnarraybox}[
\IEEEeqnarraystrutmode
\IEEEeqnarraystrutsizeadd{2pt}
{0.5pt}
][c]{c/v/c/v/c/v/c/V/c/v/c/v/c/v/c}
 && A&& S^* && W  && B  && C  && D && E \\\hline
 v_1^{\textsc{iii}} && \text{---}\ m^3\ \text{---} &&  \text{---}\ m\ \text{---} &&\text{---}\ m\ \text{---} && \text{---}\ m^2\ \text{---}  && \text{---}\ 1\ \text{---}&&\text{---}\ 1\ \text{---}&&\text{---}\ 1\ \text{---}
\end{IEEEeqnarraybox}
\end{equation*}
Player 1 must get $A$ and player 2 must get $ E $.
Now if player 1 gets nothing from  $B$ then there will be no exchange. However, in this case player 2 would deviate to profile $\mathbf{v}^{\textsc{v}}$ and become strictly better. Thus player 1 should get at least one item from  $B$. As a result, player 1 would deviate from profile $\mathbf{v}^{\textsc{viii}}$ to $\mathbf{v}^{\textsc{ix}}$ and  strictly improve, something that leads to contradiction.
\end{proof}
\medskip


\begin{proof}[\textbf{Proof of Lemma \ref{lem:favor}}]
Without loss of generality, assume that $(S_1,T_1), ...,(S_r,T_r)$ is the set of all favorable exchanges. Then $(S,T)$ where $S=\bigcup_{i \in [r]}S_i$ and $T=\bigcup_{i \in [r]}T_i$ will give a lower bound on the value of each player. 
Indeed, $S$ is an umes an using Lemmata \ref{lem:values} and \ref{lem:good}, we have that player 1 should gain at least $v_1(T)-v_1(S)$, while player 2  should gain at least $v_2(S)-v_2(T)$ from the exchanges.

Since $\mathbf{v}\in \mathcal{V}^{\neq}_{\! m}$, it suffices to show that $v_1(X_1(\mathbf{v}))=v_1((E_1 \cup T)\mysetminus S) = v_1(E_1)+v_1(T)-v_1(S)$. So suppose that  $v_1(X_1(\mathbf{v}))>v_1(E_1)+v_1(T)-v_1(S)$ and notice that this also implies that $v_2(X_2(\mathbf{v}))>v_2(E_2)+v_2(S)-v_2(T)$, since otherwise it would be $v_2(X_2(\mathbf{v})) = v_2(E_2)+v_2(S)-v_2(T)$ and we have  the desired allocation. 

As a result, there exists some $S^* \subseteq X_2^{E_1}(\mathbf{v})$, such that $S^*$ is an umes but $(S^*, T^*)$---where $T^*$ is the ``pair'' of $S^*$---is unfavorable. Without loss of generality, we may assume that $v_1(T^*)<v_1(S^*)$. 
Now let $S'$ to be the union of all minimally exchangeable sets $S_j \subseteq X_2^{E_1}(\mathbf{v})$ such that $v_1(T_j) < v_1(S_j)$, and notice that  $S' \subsetneq X_2^{E_1}(\mathbf{v})$ and  $v_1(T')<v_1(S')$. 

Let $S^*=X_2^{E_1}(\mathbf{v})$ and  $T^*=X_1^{E_2}(\mathbf{v})$. 
We begin with profile $\mathbf{v}=(v_1, v_2)$ where the allocation is $X_1(\mathbf{v})=(E_1\mysetminus S^*) \cup T^*$ and  $X_2(\mathbf{v})=(E_2\mysetminus T^*) \cup S^*$ and we move to profile $\mathbf{v}'=(v_1, v'_2)$. 
\begin{equation*}
\begin{IEEEeqnarraybox}[
\IEEEeqnarraystrutmode
\IEEEeqnarraystrutsizeadd{2pt}
{0.5pt}
][c]{c/v/c/v/c/V/c/v/c}
 &&   E_1\mysetminus S^* && S^* && T^* && E_2\mysetminus T^*\\\hline
 v'_2 &&\text{---}\ 1\ \text{---} &&\text{---}\ m\ \text{---}  && \text{---}\ 1\ \text{---}&&\text{---}\ m \text{---}
\end{IEEEeqnarraybox}
\end{equation*}
By truthfulness, the allocation remains the same, i.e., $X_1(\mathbf{v}')=(E_1\mysetminus S^*) \cup T^*$ and  $X_2(\mathbf{v}')=(E_2\mysetminus T^*) \cup S^*$. 

However, now notice that $(S^* \mysetminus S', T^* \mysetminus T')$ is a favorable exchange with respect to $\mathbf{v}'$. Moreover, for every minimally exchangeable set $S_i\subseteq S^* \mysetminus S'$ it holds that $(S_i, T_i)$ is favorable.  By using lemma \ref{lem:values} we have that the gain from the  exchange in $\mathbf{v}'$ for player 1 must be at least $v_1(T^* \mysetminus T')-v_1(S^* \mysetminus S')>v_1(T^*)-v_1(S^*)$ so we arrive at a contradiction.
\end{proof}
\medskip


\begin{proof}[\textbf{Proof of Lemma \ref{lem:final}}]
Let $\mathbf{v}=(v_1, v_2)$ be a profile in $\mathcal{V}_{\! m}$.   
By Lemmata \ref{strict} and \ref{lem:change}, we know that  $X_1^{E_1\cup E_2}(\mathbf{v})$  is the result of some exchanges of $D$ taking place, i.e., $X_1^{E_1\cup E_2}(\mathbf{v})= \left( E_1\mysetminus \bigcup_{i\in I} S_i\right)\cup \bigcup_{i\in I} T_i$, where $I\subseteq [k]$. There are two things that can go wrong: either there exists some $x\in I$ such that $(S_x, T_x)$ is unfavorable, or  there exists some $z\in [k]\mysetminus I$ such that $(S_z, T_z)$ is favorable. We first examine the former case.

Without loss of generality, we may assume that $v_1(T_x) < v_1(S_x)$. Consider the profile $\mathbf{v}'=(v_1, v_2^{\textsc{i}})$ where 
\begin{equation*}
v_{2j}^{\textsc{i}} =
\begin{cases}
m + 2^{j-m-1}  & \text{if } j\in X_2(\mathbf{v})\\
1 + 2^{j-m-1} & \text{otherwise } 
\end{cases}
\end{equation*}
By truthfulness, $X_2(\mathbf{v}')=X_2(\mathbf{v})$. Note also that $v_2^{\textsc{i}}$ induces for player 2 a strict preference over all subsets (see also Remark \ref{rem:notequal}). Moreover, with respect to $v_2^{\textsc{i}}$ the set of ``good'' minimal exchanges is exactly $\{(S_i, T_i) \ | \ i\in I\}$.

We now claim that player 1 can deviate and strictly improve his utility, thus contradicting truthfulness. In particular, consider the profile $\mathbf{v}''=(v_1^{\textsc{i}}, v_2^{\textsc{i}})$ where 
\begin{equation*}
v_{1j}^{\textsc{i}} =
\begin{cases}
m + 2^{j-m-1}  & \text{if } j\in (X_1(\mathbf{v}')\cup S_x)\mysetminus T_x\\
1 + 2^{j-m-1} & \text{otherwise } 
\end{cases}
\end{equation*}
Again, $v_1^{\textsc{i}}$ induces for player 1 a strict preference over all subsets, and thus $\mathbf{v}''\in  \mathcal{V}^{\neq}_{\! m}$. 
As a result, $\argmax_{S \in \mathcal{O}_i} v_i^{\textsc{i}}(S)$ only contains $X_i^{N_i}(\mathbf{v})$, for $i\in\{1, 2\}$, and by Lemma \ref{best} we have $X_1^{N_1\cup N_2}(\mathbf{v}'')=X_1^{N_1\cup N_2}(\mathbf{v}')$.
Additionally, notice that with respect to $\mathbf{v}''$ the set of favorable minimal exchanges is $\{(S_i, T_i) \ | \ i\in I\mysetminus\{x\}\}$. So, by Lemma \ref{lem:favor} we have $X_1^{E_1\cup E_2}(\mathbf{v}'')= \left( E_1\mysetminus \bigcup_{i\in I\mysetminus\{x\}} S_i\right)\cup \bigcup_{i\in I\mysetminus\{x\}} T_i = (X_1^{E_1\cup E_2}(\mathbf{v}')\cup S_x) \mysetminus T_x$. 

So, by deviating from $\mathbf{v}'$ to $\mathbf{v}''$, player 1 improves his utility by $v_1(S_x) - v_1(T_x) > 0$, which contradicts truthfulness. We conclude that there is no  $x\in I$ such that $(S_x, T_x)$ is unfavorable with respect to $\mathbf{v}$.

Next, we move on to the second case, i.e., there exists some $z\in [k]\mysetminus I$ such that $(S_z, T_z)$ is favorable with respect to $\mathbf{v}$. 
Like in the first case, the valuation functions that we define induce strict preferences over all subsets.
Consider the profile $Q=(v_1^{\textsc{ii}}, v_2)$ where 
\begin{equation*}
v_{1j}^{\textsc{ii}} =
\begin{cases}
m^2 + 2^{j-m-1}  & \text{if } j\in X_1(\mathbf{v})\mysetminus S_z\\
m + 2^{j-m-1}  & \text{if } j\in T_z\\
1 + 2^{j-m-1} & \text{otherwise } 
\end{cases}
\end{equation*}
We know, by Lemmata \ref{strict} and \ref{lem:change}, that $X_1^{E_1\cup E_2}(Q)= \left( E_1\mysetminus \bigcup_{i\in J} S_i\right)\cup \bigcup_{i\in J} T_i$ for some $J\subseteq [k]$.
By truthfulness, $X_1^{N_1\cup N_2}(Q)\supseteq X_1^{N_1\cup N_2}(\mathbf{v})$. In fact, by Lemma \ref{best}, it must be the case where $X_1^{N_1\cup N_2}(Q)= X_1^{N_1\cup N_2}(\mathbf{v})$. 
Again by truthfulness, $X_1^{E_1}(Q)\supseteq X_1^{E_1}(\mathbf{v}) \mysetminus S_z = E_1 \mysetminus \bigcup_{i\in I\cup \{z\}} S_i$ 
and $X_1^{E_2}(Q)\supseteq X_1^{E_2}(\mathbf{v})  = \bigcup_{i\in I\cup \{z\}} T_i$. 
This implies that $I\subseteq J\subseteq I\cup \{z\}$. 
If $J= I\cup \{z\}$, then player 1, by deviating from $\mathbf{v}$ to $Q$, improves his utility by $v_1(T_z) - v_1(S_z) > 0$, which contradicts truthfulness. So, it must be the case where $J = I$. 


Now, consider the profile $Q'=(v_1^{\textsc{ii}}, v_2^{\textsc{ii}})\in  \mathcal{V}^{\neq}_{\! m}$ where 
\begin{equation*}
v_{2j}^{\textsc{ii}} =
\begin{cases}
m + 2^{j-m-1}  & \text{if } j\in X_2^{N_1\cup N_2}(Q)\cup \bigcup_{i\in I\cup \{z\}} S_i \cup \bigcup_{i\notin I\cup \{z\}} T_i\\
1 + 2^{j-m-1} & \text{otherwise } 
\end{cases}
\end{equation*}
Since, for $i\in\{1, 2\}$, $\argmax_{S \in \mathcal{O}_i} v_i^{\textsc{ii}}(S)$ only contains $X_i^{N_i}(\mathbf{v})$, by Lemma \ref{best} we have $X_2^{N_1\cup N_2}(Q')=X_2^{N_1\cup N_2}(Q)$.
Further, the set of favorable minimal exchanges  with respect to $Q'$  is $\{(S_i, T_i) \ | \ i\in I\cup\{z\}\}$. So, by Lemma \ref{lem:favor} we have $X_2^{E_1\cup E_2}(Q')= \left( E_2\mysetminus \bigcup_{i\in I\cup\{z\}} T_i\right)\cup \bigcup_{i\in I\cup\{z\}} S_i$. 

So, by deviating from $Q$ to $Q'$, player 2 improves his utility by $v_2(S_z) - v_2(T_z) > 0$, which contradicts truthfulness. Therefore, there is no  $z\in [k]\mysetminus I$ such that $(S_z, T_z)$ is favorable with respect to $\mathbf{v}$, and this concludes the proof.
\end{proof}


\section{{Missing Material from Section \ref{SEC:FAIRNESS}}} \label{app:fairness}

\begin{proof}[\textbf{Proof of Lemma \ref{lem:fairness_mechs}}]
From theorem \ref{thm:truthful-->p/e_mech} we know that every truthful mechanism can be implemented as a picking-exchange mechanism. So consider such a mechanism and let us examine the structure of sets $N_i, E_i$ and $\mathcal{O}_i$. 
Notice that by the definition of picking-exchange mechanisms, each player $i$ controls $N_i \cup E_i$.
If both $N_i, E_i$ are nonempty, or $|E_i| > 1$, or $\mathcal{O}_i$ contains a non-singleton set, then  the respective player has control over some pair of items. Thus we can conclude that every possible mechanism can be implemented as a singleton picking-exchange mechanism. 
\end{proof}
\medskip


\begin{remark}\label{rem:singleton}
Regarding the remaining proofs, it suffices to focus only on singleton picking-exchange  mechanisms. Indeed, by Theorem \ref{thm:truthful-->p/e_mech} we know that every truthful mechanism can be implemented as a picking-exchange mechanism, and by Lemmata \ref{lem:control-2sets} and \ref{lem:fairness_mechs} only the singleton picking-exchange  mechanisms among them may achieve some fairness guarantee.
\end{remark}
\medskip


\begin{proof}[\textbf{Proof of Application \ref{cor:envy}}]
Initially it is easy to see that when $m=1$ or $m=2$, the statement holds in a trivial way for every singleton picking-exchange mechanisms. Indeed, in every instance each player gets at most one item and thus the value a player derives in the worst case is greater or equal to the value of the empty set (bundle of the other player minus an item). 

In the  case of $m=3$, in any instance one player gets one item and the other player two items. The singleton picking-exchange mechanism guarantees that the player who gets one item is allocated with at least his second best in terms of value, so the value he derives is always greater or equal to the value of his least desirable item (bundle of the other player minus an item). On the other hand, the player who is allocated with two items always derives value greater or equal to the value of the empty set.

Finally, in the case of  $m = 4$ with  $|N_1|=|N_2|=2$,  every player gets two items at every instance. The singleton picking-exchange mechanism guarantees that each player will receive at least his second best item, the value of which is greater or equal to the value of his third or fourth best item.

On the other hand, in case of $m\geq 5$ consider  profile $v_1=[1+\epsilon,1,...,1] \cup [1,\delta,...,\delta]$, \ \ $v_2=[1,\delta,...,\delta] \cup [1+\epsilon,1,...,1]$ where $1\gg \epsilon \gg \delta > 0$. The first vector of values is for $N_1$ (or $E_1$) and the second is for $N_2$ (or $E_1$); notice that it is possible for one of them to be empty. We only examine singleton picking-exchange mechanisms (see Remark \ref{rem:singleton}). It is easy to see that in such a case, by  the pigeonhole principle,  no  singleton picking-exchange mechanism can achieve envy-freeness up to one item for both players. 
\end{proof}
\medskip


\begin{proof}[\textbf{Proof of Application \ref{cor:mms}}]
We only need to prove that among all the singleton picking-exchange mechanisms (see Remark \ref{rem:singleton}) there is no better approximation ratio than ${\left\lfloor m/2 \right\rfloor}^{-1}$ for $m\ge 3$. Consider profile $v_1=[1+\epsilon,1,...,1] \cup [|N_1\cup E_1|,\delta,...,\delta]$, \ \  $v_2=[|N_2 \cup E_2|,\delta,...,\delta] \cup [1+\epsilon, 1, ...,1]$, where $1\gg \epsilon \gg \delta > 0$. The first vector of values is for $N_1$ (or $E_1$) and the second is for $N_2$ (or $E_1$); notice that it is possible for one of them to be empty. 

It is easy to see that when both $N_1\cup E_1$, $N_2\cup E_2$ are nonempty, then $\mms_i=|N_i\cup E_i|$ while they both receive value that is slightly greater than 1. Therefore, no  singleton picking-exchange mechanism can achieve a better approximation ratio than ${\left\lceil m/2 \right\rceil}^{-1}$  for both players. 

On the other hand, if $N_1\cup E_1=\emptyset$ (the other case is symmetric) then this is the mechanism in \cite{ABM16} that achieves exactly ${\left\lfloor m/2 \right\rfloor}^{-1}$. 
\end{proof}
\medskip



%

\end{document}